\begin{document}

\title{Optimal Auction Design for Dynamic Stochastic Environments:  Myerson Meets Naor\thanks{We are grateful to Omar Besbes, Simon Board, Tilman B\"orgers, Ben Brooks, Faruk Gul, Itai Gurvich, Navin Kartik,  Jinwoo Kim, Kyungmin Kim, Andreas Kleiner, Fuhito Kojima, Stephan Lauermann, Qingmin Liu, Will Ma, George Mailath, Pietro Ortoleva, Alessandro Pavan, Harry Pei, Justus Preusser, Doron Ravid, Philip Reny, Larry Samuelson, Ali Shourideh, Lars Stole, Roland Strausz, Bruno Strulovici, Olivier Tercieux, Kaizheng Wang, Assaf Zeevi, and participants at numerous seminars and conferences.  
We acknowledge research assistance from Mingeon Kim.  Yeon-Koo Che is supported by the Ministry of Education of the Republic of Korea and the National Research Foundation of Korea (NRF-2024S1A5A2A0303850912).}}

\author{ Yeon-Koo Che \\ Columbia University
\and Andrew B. Choi \\ University of Michigan
}
\date{February 2026 
}
\maketitle

\begin{abstract}

Motivated by applications such as cloud computing, gig platforms, and blockchain auctions, we study optimal selling mechanisms for dynamic markets with stochastic supply and demand. In our model, buyers with private valuations and homogeneous goods arrive stochastically and can be held in queues at a cost. The optimal mechanism pairs {\it allocative efficiency} with {\it dynamic admission control}: goods are assigned to the highest-value buyer, while entry is restricted by value thresholds that strictly increase with the queue length and decrease with available inventory. This policy smooths competitive pressure across time and is implemented in dominant strategies via auctions with dynamic reserve prices.

\vspace{2 mm}
\noindent \emph{Keywords}: dynamic auction, queueing, revenue management, steady-state mechanism design
\end{abstract}

\section{Introduction}

Classic auction models, pioneered by \cite{myerson1981} and \cite{milgrom1982theory}, focus on static settings with fixed supply and demand. In contrast, modern platforms match demand and supply arriving asynchronously and stochastically. This feature is ubiquitous: in cloud computing (e.g., AWS), capacity fluctuates with job completion times; in gig platforms, labor supply varies with demand; and blockchains must continuously sequence randomly arriving transactions. These dynamics require a departure from static design to one that actively regulates market thickness over time.

We capture this feature using a continuous-time model in which homogeneous goods and heterogeneous buyers arrive via independent Poisson processes with rates $\mu$ and $\lambda$, respectively.
Buyers draw private valuations $v \in [0,1]$ from a distribution $F$. The seller can maintain a queue of buyers at waiting cost $c > 0$ per unit time, and an inventory of goods at cost $d>0$ per unit time.

The model encompasses distinct real-world scenarios. First, {\it service centers} (e.g. cloud computing or repair facilities) correspond to the perishable case ($d=\infty$). Here, the arrival of a good is interpreted as the completion of service for a customer who is already receiving service; ``service completion'' is wasted if no customers are present upon its arrival. Second, {\it dynamic matching platforms} fit the general framework ($c, d<\infty$) in which both goods and buyers can be queued. This captures applications ranging from ride-sharing and freight ports to organ donation networks, where the platform uses queues to resolve supply-demand mismatches.\footnote{We focus on policies where allocation is not delayed, implying the buyer queue and item inventory are never simultaneously non-empty. We show that this restriction is without loss in large markets.}

The seller commits to a {\bf Markovian policy}  governing the entry and exit of buyers, inventory of goods, and their allocation. These decisions are contingent on the current {\bf state}---specifically, the {\it number of buyers} in the queue, their {\it reported valuations}, and the {\it inventory level}. Each policy induces a Markov chain over this state space. Our goal is to find the policy that maximizes the seller's expected revenue under the stationary distribution. As we argue in \Cref{sec:appendix-markov-wlog}, this focus on Markovian policies is without loss of optimality.\footnote{A rough intuition for this is that the arrival processes of buyers and items are Markovian, so any information other than the state is payoff irrelevant.}

 The optimal policy works as follows. When a good arrives, it is allocated to the highest-value buyer in the queue; if the queue is empty, the good is stored up to a limit $L^*$, beyond which it is discarded. When a buyer arrives, the allocation decision hinges on the inventory level $\ell$. If inventory is available ($\ell > 0$), the buyer receives a good if his valuation exceeds a threshold $\hat{v}_{-\ell}$. If the inventory is empty ($\ell=0$) and  $k-1$ buyers are waiting, the queue grows to $k$ if {\it all} $k$ buyers---including the new arrival---have valuations above a threshold $\hat{v}_k$. Otherwise, the lowest-value buyer is removed, keeping the queue size at $k-1$.

The threshold $\hat{v}_j$ represents the seller's dynamic opportunity cost.
Crucially, $\hat{v}_j$ increases strictly in the queue length $j$, for both $j<0$ (inventory) and $j>0$ (buyer queue).  Admitting a marginal buyer functions as insurance against {\it buyer stockouts} and {\it depletion of competitive pressure}. When the queue is short, the insurance value is high: an additional buyer significantly lowers the risk of physical waste and of low competition, leading to a high rent accruing to the winner.  However, as the queue lengthens, the marginal value of this insurance diminishes, prompting the seller to become progressively more selective.  Effectively, the mechanism dynamically smooths out the demand-supply imbalance and competitive pressure across time.  When the market is thin, the mechanism ``saves'' the competitive pressure, and even ``borrows'' it from the future demand; when the market is thick, it ``discards'' it.

The optimal policy can be implemented in dominant strategies via dynamic auctions with reserve prices.
When inventory is available ($j=-\ell$, for some $\ell>0$), the seller simply charges a monopoly price $\hat{v}_{-\ell}$ against an incoming buyer.  When there is no inventory of goods, however, dynamic auctions are used to screen incoming buyers into a queue.  Specifically,    a {\bf survival auction} is run where a rising clock price filters out low-value agents, establishing a personalized reserve price for those who remain.   When a good arrives with buyers waiting in the queue, an {\bf assignment auction} is held in which the highest bidder wins the good and is charged a price that is calibrated to eliminate the ex post regret of paying more than is necessary to eventually win; we formally define the pricing rule in \Cref{sec:DSIC-EPIC}.

This pricing rule smooths the level of competition across time. High congestion in the past raises the personalized reserve price, thereby funneling more competitive pressure toward the future. Conversely, high future congestion raises the cutoff price for the current assignment, allowing the seller to borrow competitive pressure from the future to discipline the current winner. Consequently, the winner's payment depends on whether the assignment is followed by a period of high demand or surplus supply, potentially requiring delayed payment.  However, we show that such delays can be avoided if the solution concept is weakened to periodic ex-post implementation \citep{bergemann2010dynamic}, where truth-telling remains optimal regardless of the state, provided future agents report truthfully.

We extend the analysis in two directions. First, we characterize the large-market properties of the optimal policy. As the market ``thickens'' ($\lambda, \mu \to \infty$) or as frictions vanish ($c, d \to 0$), the optimal policy converges to a familiar static benchmark: a {\bf multiunit uniform-price auction with a monopoly reserve}. This bridges our dynamic framework with classic auction theory.

Second, we study welfare maximization. The optimal policy retains the threshold structure but exhibits a single-crossing property. When the queue is short, the (welfare-maximizing) planner lowers thresholds to prevent physical waste due to buyer stockouts. Conversely and perhaps surprisingly, when the queue is long, the planner sets higher thresholds than the seller. This occurs because the seller values the competitive pressure of a marginal buyer, which reduces the winner's information rent. In contrast, the planner treats rent as a neutral transfer and focuses solely on balancing match quality against waiting costs.

The current paper makes two broad contributions. First, we develop a tractable workhorse model of dynamic mechanism design that captures the canonical features of platform markets: the {\it asynchronous} and {\it stochastic} arrival of demand and supply. This formulation expands the scope of the field: beyond the classic {\bf allocation} problem, the designer must now solve the novel problem of {\bf regulating queues} to provide future competition. Second, we advance the steady-state approach to dynamic mechanism design. While this approach has precedents, our environment is distinguished by the richness of the state space---the evolving, infinite-dimensional profile of private valuations. We develop a method to collapse this complex dynamic problem into a tractable static linear program by characterizing constraints on marginal stationary distributions. This technique resolves the ``curse of dimensionality'' in steady-state design and may prove useful for applications well beyond the current setting.

The rest of the paper is organized as follows. \Cref{sec:setup} describes the model. \Cref{sec:relaxed-program} formulates and solves the relaxed program, and \Cref{sec:optimal-mechanism} characterizes the optimal policy. \Cref{sec:storability} extends the analysis to storable goods ($d<\infty$). \Cref{sec:DSIC-EPIC} provides the dominant strategy implementation. \Cref{sec:large-market} and \Cref{sec:welfare} analyze large markets and welfare maximization, respectively. We defer a detailed review of the related literature to \Cref{sec:lit-review} to better contextualize our contribution within the technical landscape. \Cref{sec:conclude} concludes. All proofs are relegated to the Appendices.

\section{Setup}\label{sec:setup}

\paragraph*{Primitives.} 
  Time $t\in [0,\infty)$ is continuous. A single seller allocates homogeneous goods arriving at Poisson rate $\mu$ to buyers arriving at Poisson rate $\lambda$. Upon arrival, each buyer privately draws a unit-demand valuation $v \in V \coloneqq [0,1]$ from a distribution $F$ with density $f>0$. The arrivals of goods and buyers, and the buyers' valuations, are all independent. We assume $F$ is regular, meaning the virtual value $J(v) := v - (1-F(v))/f(v)$ is strictly increasing in $v$, and $J(0)<0$. We also assume that $f$ is absolutely continuous.

  When a buyer arrives but is not immediately allocated a good, the seller may ask him to join a queue. If the buyer declines, he leaves with zero payoff. If he joins, he stays until he is allocated a good, leaves voluntarily, or is removed by the seller. Buyers cannot come back after leaving. If a buyer with value $v$ spends time $w$ in the queue, makes a payment $\tau$, and receives the good, his payoff is $v-\tau-cw$, where $c>0$ is the per-unit time waiting cost.\footnote{By assuming that a buyer stops incurring the waiting cost once he leaves the queue, even if he has not received the good, we are considering environments in which buyers have outside options. Customers often have alternative, perhaps less valuable, options available for substitution. For example, a buyer of cloud computing can look for other services, a customer of a gig platform can switch to a different one, ships queueing to unload their freights at a port can head out to a different nearby port, and a kidney patient may opt for a kidney transplant from a blood-type incompatible donor instead of waiting in a queue for an exchange with a cross-compatible patient-donor pair.}  If he spends time $w$ and pays $\tau$ but receives nothing, his payoff is $-\tau-cw$. Buyers are aware of their own value and calendar time but do not observe the number of other buyers in the queue or their values. Goods held in inventory incur a flow cost $d>0$ to the seller.

This framework synthesizes the revenue-optimal auction design of \cite{myerson1981} with the rational queueing theory of \cite{naor1969regulation}.
By incorporating stochastic arrivals and waiting costs ($c>0$) into a setting with private valuations ($v \sim F$), the model captures the friction of temporal mismatch absent in the static auction model. Meanwhile, our ``service model'' ($d=\infty$) corresponds to Naor's M/M/1 queue, where arriving services are wasted if not immediately utilized.\footnote{The M/M/1 queue is a model of a queueing system where agent arrival times and service times follow independent exponential distributions, and at most one agent can be served at a time. When an agent in the M/M/1 model is being served at a given point in time, this can be equivalently described by stipulating that, if service completion were to arrive at that moment, it would be allocated to that agent.} More generally, our ``platform model'' ($d < \infty$) allows for inventory, capturing markets where both supply and demand can be stored to resolve mismatches.  In this general setting, we focus on policies with \textit{no allocation delay}, implying that the buyer queue and inventory are never simultaneously non-empty.
While this abstracts from the double-sided queueing seen in some applications, we show in \Cref{sec:large-market} that this restriction is without loss of optimality as the market grows large.

Our analysis proceeds in two steps. We first analyze the service scenario ($d=\infty$) before investigating the general platform model ($d<\infty$).
This sequencing serves a pedagogical purpose: the $d=\infty$ case isolates the mechanism design innovations---merging Myerson and Naor---without the added complexity of inventory management, allowing for a more transparent exposition.

\paragraph*{Seller's Markov Policy for $d=\infty$.}\label{subsec:seller's-dynamic-policy}

It is without loss for the seller to adopt a Markov policy that depends only on the current payoff-relevant states (see \Cref{sec:appendix-markov-wlog}). A \textbf{state} is a tuple $\mathbf{v}=(v_k)_{k\in\N} \in [0,1]^\N$,    where $v_k$ is the $k$-th highest valuation in the buyer queue. We define $v_k:=0$ if there are fewer than $k$ buyers in the queue.  In particular, if the queue is empty, then $v_1=0$.\footnote{Our definition of $\bv$ does not distinguish between a queue where some buyers have value 0, and another queue with no buyer present at that position. This distinction is irrelevant for our analysis.   \label{v_k:=0}}  A {\bf Markov policy} $\phi = (a, r, \tau)$ consists of the following measurable mappings:

\begin{itemize}
\item $a: \mathbf{V} \to \Delta(\mathbb{N} \cup \{0\})$, where for $k>0$, $a(\mathbf{v})(k)$ is the probability an incoming item is assigned to the the buyer with the $k$-th highest value, and $a(\mathbf{v})(0)$ is the probability it is discarded.\footnote{An item assigned to a non-existent (or zero-valued) buyer is effectively being discarded. Thus, without loss, we require that $a(\bv)(k)=0$ if $v_k=0$.}

\item    $r: \mathbf{V} \to \Delta(2^\mathbb{N})$, where $r(\mathbf{v})$ is the distribution over subsets of buyers immediately removed from the queue when the state reaches $\bv$.

\item  $\tau: \mathbf{V} \times V \to \mathbb{R}$, where $\tau(\bv,v)$ is the expected payment by a buyer who arrives at state $\bv$ and reports $v$, implemented by any stream of state-dependent payments over time.\footnote{At this point, for ease of exposition, we remain agnostic about the exact stream of payments that implements $\tau(\bv,v)$. This will be fully specified when we characterize the optimal policy.}

\end{itemize}
Note that we are not assuming work conservation---the seller may refuse to assign an (incoming) item, or withhold service, to buyers. 
The removal policy $r$ controls {\it both} entry and exit: not admitting a buyer's entry is encoded as the buyer being immediately removed upon entering the queue.

Since the buyers do not observe the queue state, in addition to choosing $(a,r, \tau)$, the seller may disclose information about the queue to existing or newly arriving buyers. Throughout, we focus on the optimal information policy, which is to disclose to each buyer no information beyond the action recommendation made to him at each instant.

The seller does not directly observe the buyers' values, and she cannot compel a buyer to join or stay in the queue against his wishes. Meanwhile, we allow the seller to prevent a buyer from joining the queue or force a buyer to leave the queue; this is without loss because the seller can implement it by promising never to allocate a good or charging a high enough payment to the buyer.
Following \cite{myerson1986multistage}, we restrict attention to incentive-compatible policies, under which each buyer is incentivized to truthfully reveal his value upon arrival and to follow the seller's recommendations to join or stay in the queue, as a Bayes-Nash equilibrium.

\paragraph*{Steady State Analysis.}  

A Markov policy induces a stochastic process $\{\mathbf{v}_t\}_{t\ge 0}$ over the state space.
We focus on a stationary distribution $\pi \in \Delta(\mathbf{V})$ of this process, which captures the system's long-run behavior.\footnote{Strictly speaking, this restricts the set of policies to those admitting stationary distributions. This restriction involves little loss. As we argue in \Cref{sec:appendix-markov-wlog}, the process ``regenerates'' each time the queue empties. Any policy that maximizes revenue must ensure this regeneration occurs in finite expected time (positive recurrence) to avoid infinite waiting costs. This property guarantees the existence of a unique stationary distribution \citep{asmussen2003applied, THORISSON1992237}.}
Consequently, we evaluate each policy based on the expected revenue generated under its associated stationary distribution $\pi$, which corresponds to the long-run average revenue.

Let $\Phi$ denote the set of all incentive-compatible policies admitting a stationary distribution on $\bV$. The \textbf{seller's problem} is
$$\sup_{\substack{\phi\in  \Phi \\ \pi \in \D (\bV)}}   \,   \int_{\bv} \int_0^1 \l \tau(\bv,v )   f(v)\ddd v\pi(\dd \bv), \leqno{[\mathcal P_0]}$$
where $\pi$ is a stationary distribution induced by $\phi$.

\begin{rem}(Sufficiency of Stationary Markov Policies) One might wonder if the seller could improve revenue by adopting a non-stationary policy or one that depends on the history of arrivals beyond the current state. In \cref{sec:appendix-markov-wlog}, we show that restricting attention to stationary Markov policies is without loss of optimality. First, the system ``regenerates'' whenever the queue empties, so the seller faces an identical  problem at the start of each busy cycle.  Therefore, it is without loss for a policy to be stationary across cycles.  Second, the arrival processes are memoryless, so  the ``state'' $\bv$ is a sufficient statistic for the future evolution of the system. Hence, any history not encoded in $\bv$ is payoff-irrelevant. For example, we could allow the seller to assign items based on buyers' arrival orders (e.g., First-Come, First-Served), but the seller would not benefit from conditioning on such non-Markovian history.
\end{rem}

\section{Feasibility Constraints and Relaxed Program}\label{sec:relaxed-program}

 Solving $[\mathcal{P}_0]$ directly is intractable. Instead, we adopt a relaxed optimization approach. 
Rather than optimizing over complex dynamic policies, we optimize over the induced variables---allocations, payments, and stationary distributions---subject only to certain necessary conditions.
These conditions, which effectively bridge standard mechanism design with queueing dynamics, are described in \Cref{subsec:drm}, \Cref{subsec:rf}, and \Cref{subsec:balance}.

\subsection{Direct Revelation Mechanism} \label{subsec:drm}

For each incoming buyer, a policy $\phi \in \Phi$ and its associated stationary distribution induce a \textbf{Direct Revelation Mechanism (DRM)}, defined as   measurable functions $(X, T, W): V \to [0,1] \times \mathbb{R} \times \mathbb{R}_+$, where
\begin{itemize}
    \item $X(v)$ is the \textit{interim allocation probability} that the buyer obtains a good,
    \item $T(v)$ is the \textit{interim expected payment} that the buyer makes to the seller,
    \item $W(v)$ is the \textit{interim expected waiting time},
\end{itemize}
if the buyer reports $v$.\footnote{The functions $X$, $T$, and $W$ are obtained as expectations over the state $\bv$ at the time of arrival (which follows the distribution $\pi$), as well as the future realizations of the stochastic process $\{\bv_t\}_{t\geq 0}$ induced by $\pi$ and $\phi$.}
Rather than optimizing over dynamic policies, we view the seller as choosing this interim mechanism $(X, T, W)$ directly to maximize revenue.
For such a choice to be valid, the mechanism must be implementable by some Markov policy that satisfies the buyers' incentive constraints.
This implies that the DRM itself must satisfy the buyers' \textbf{participation constraint}:
\begin{align}
   U(v):= vX(v) - T(v) - cW(v) \geq 0 \qquad \forall v\in V. \tag{$IR$} \label{IR}
\end{align}
and the buyers' \textbf{incentive compatibility constraint}:
\begin{align}
    U(v) \geq vX(v') - T(v') - cW(v') \qquad \forall v,v'\in V. \tag{$IC$} \label{IC}
\end{align}
Clearly, \eqref{IR} and \eqref{IC} must be satisfied in any equilibrium.\footnote{The RHS represents the payoff from mimicking a type-$v'$ buyer completely---reporting $v'$ and adhering to the exit recommendations associated with that report. This serves as a necessary condition for incentive compatibility. We ignore ``double deviations'' (e.g., reporting $v'$ but disobeying the exit recommendation) to formulate a relaxed problem. This relaxation is justified because the derived optimal policy will be shown to satisfy all incentive and obedience constraints.}
Initially, we only require that truthful reporting constitutes a Bayes-Nash equilibrium.
By adopting this weakest notion of implementation, we cast the widest net for identifying the optimal policy.
In \Cref{sec:DSIC-EPIC}, we will show that the optimal policy can be implemented in dominant strategies.

\subsection{Reduced-Form Feasibility Constraint} \label{subsec:rf}

A DRM  offered to a buyer must be compatible with the stochastic supply of goods and the DRMs offered to other buyers.
To represent this constraint, we consider two aggregate variables induced by the policy $\phi$ and its stationary distribution.

First, let $P_k(v):=\pi(\{\bv \in \bV \mid v_k \leq v\})$ denote the stationary CDF of the $k$-th highest value in the queue, under the chosen policy $\phi$.
This captures the distribution of the $k$-th highest valuation when an outside observer takes a ``snapshot'' of the queue at a random time \citep{wolff1982poisson}.
It follows that $p_k(v) := P_{k+1}(v) - P_k(v)$ represents the stationary probability that \textit{exactly} $k$ buyers in the queue have valuations strictly above $v$.\footnote{We let $p_0(v):=P_1(v)$. Since we treat a buyer with zero valuation as if he does not exist, $p_k(0)$ denotes the probability that there are exactly $k$ buyers in the steady state queue.}

Second, suppose the state is such that exactly $k$ buyers have valuations strictly above $v$.
Conditional on this event, and given the policy $\phi$, let $y_k(v)$ denote the probability that an arriving good is allocated to one of these $k$ buyers.

The \textbf{reduced-form feasibility constraint} links the interim allocation $X$ (offered to the buyer) to these aggregate variables (generated by the system):
\begin{align*}
   \lambda  \int^1_v X(s)f(s)ds
   \le \mu\sum_{k=1}^{\infty} y_k(v)p_{k}(v) \qquad \forall v\in V. \tag{$RF$} \label{RF}
\end{align*}
This condition requires that the total allocation offered to types above $v$ (LHS) cannot exceed the total rate at which goods are actually assigned to such types (RHS). While similar in spirit to the ``reduced-form auctions'' characterizations (see \cite{border1991implementation} and \cite{che2013generalized}), the constraint \eqref{RF} effectively embeds the temporal dimension of queueing into the standard Myersonian reduced-form approach.

\subsection{Balance Inequalities} \label{subsec:balance}

Recall that we restrict attention to policies that admit a stationary distribution.
Fully characterizing stationarity requires flow conservation to hold for \emph{every} measurable set in the infinite-dimensional state space.
Such a requirement is intractable.
Instead, we impose flow conservation only on a carefully curated collection of ``marginal'' sets.
Specifically, for any $k \in \mathbb{N}$ and $v \in [0,1]$, let $$S_k(v) \:= \{\mathbf{v} \mid v_k \le v\}.$$
This represents the event that the $k$-th highest valuation is weakly below $v$, or equivalently, that {\it at most $k-1$ buyers have values strictly above $v$}.
\Cref{fig:events-S_k(v)} depicts the projections of $S_1(v)$ and $S_2(v)$ into the first two coordinates of the state space.

\begin{figure}[hbt]
\centering
\begin{tikzpicture}
\begin{groupplot}[
    group style={
        group size=2 by 1, %
        horizontal sep=80pt, %
    },
    width=0.5\textwidth,
    xlabel={$v_1$},
    ylabel={$v_2$},
    xlabel style={below}, %
    ylabel style={rotate=90, above}, %
    xmin=0, xmax=1.15, %
    ymin=0, ymax=1.15, %
    axis lines=middle,
    axis equal,
    unit vector ratio*=1 1 1,
    xtick={0,1}, %
    ytick={0,1}, %
    grid=none, %
]

\nextgroupplot
\addplot[name path=A, domain=0:1, samples=100, color=black, thick] {x};
\addplot[name path=B, domain=0:1] {0}; %
\addplot [red!20] fill between[of=A and B, soft clip={domain=0:0.6}];
\node at (axis cs:0.4,0.15) {$S_1(v)$}; %
\node at (axis cs:0.99,1.02) {$v_1 = v_2$};
\pgfplotsset{xtick={0,1,0.6}, xticklabels={0,1,$v$}} %

\nextgroupplot
\addplot[name path=C, domain=0:1, samples=100, color=black, thick] {x};
\addplot[name path=D, color = white, domain=0.6:1] {0.6}; %
\addplot[name path=E, domain=0:1] {0};
\addplot [red!20] fill between[of=C and E, soft clip={domain=0:0.6}];
\addplot [red!20] fill between[of=D and E, soft clip={domain=0.6:1}];
\node at (axis cs:0.65,0.25) {$S_2(v)$};
\node at (axis cs:0.99,1.02) {$v_1 = v_2$};
\pgfplotsset{ytick={0,1,0.6}, yticklabels={0,1,$v$}} %
\addplot [color=black, dashed, thick] coordinates {(0, 0.6) (0.6, 0.6)};

\end{groupplot}
\end{tikzpicture}
\caption{$S_1(v)$ and $S_2(v)$}
\label{fig:events-S_k(v)}
\end{figure}

In steady state, the  instantaneous flow of probability mass leaving $S_k(v)$ must balance the  instantaneous flow entering it.
We capture a necessary condition for this balance via the following \textit{balance inequality}:
\begin{align*}
     \lambda[1-F(v)]   p_{k-1}(v)    \ge   \mu y_{k}(v) p_k(v).
\tag{${B_k}$}  \label{Bk}
\end{align*}

    The LHS represents the rate of arrival of a new buyer with a value strictly above $v$ when the queue currently has exactly $k-1$ such buyers. This is an \textbf{upper bound} on outflow from $S_k(v)$ because the seller might refuse entry to the new buyer or remove existing buyers.

The RHS represents the rate at which a buyer with value strictly above $v$ is served when the queue currently has exactly $k$ such buyers. This is a \textbf{lower bound} on the inflow into $S_k(v)$ because the seller might remove such buyers at will, which is another way to enter $S_k(v)$.

Crucially, this inequality formulation---rather than an equality---is what captures the seller's strategic discretion in her queue control.
The gap between the LHS (maximal potential outflow) and the RHS (minimal required inflow) represents the seller's freedom to reject arrivals or remove buyers. By optimizing subject to this inequality, the relaxed program will implicitly determine the optimal queue management policy.

Note that we enforce balance only on the sets $S_k(v)$ rather than all measurable sets, which makes \eqref{Bk} necessary but not sufficient for stationarity. This reduction from an infinite-dimensional problem to a lower-dimensional one is precisely what makes the optimization tractable.\footnote{These conditions apply only to lower-dimensional subsets of the state space. To characterize the full stationary distribution $\pi \in \Delta(\mathbf{V})$, one would need balance conditions for all measurable subsets of $\bV$, which has the cardinality of $2^{[0,1]^{\mathbb{N}}}$. By focusing on $S_k(v)$, which is indexed only by $(k,v)$, we reduce the cardinality to $\mathbb{N} \times [0,1]$. While not sufficient for defining $\pi$, these conditions will be appropriate for setting up the relaxed program $[\mathcal{P}]$  (to appear in \Cref{subsec:relaxed}), as they capture the binding constraints on the objective.}
These specific constraints will turn out to be the only ones that bind at the optimal solution.

\subsection{Relaxed Program} \label{subsec:relaxed}

We now assemble the constraints into a relaxed program.
Our goal is to maximize expected revenue over the choice variables $(p,y,X,T,W)$, subject to the necessary conditions derived above:
\begin{align*}
    &\max_{p,y,X,T,W} \l \int_0^1 T(v)f(v)dv\\
\mbox{   subject to } & \quad
\eqref{IR},\,\eqref{IC},\, \eqref{RF},\mbox{ and } \eqref{Bk} \,\, \forall k.
\end{align*}
One can see that this program is a relaxation of $[\mathcal{P}_0]$: the choice variables are induced by the policy $\phi$, and the constraints are necessary conditions for any valid policy.
By applying the standard envelope theorem to the incentive constraints (IR) and (IC), and appealing to Little's Law, we can bound the expected revenue as follows:
\begin{align}
 \l \int_0^1 T(v)f(v)dv
\le & \,  \l \int_0^1 \left[v X(v)-U(v)- c W(v)\right] f(v)dv\cr
= & \,  \l \int_0^1 \left[v X(v)-\int_0^v X(s)ds\right] f(v)dv - c \l \int_0^1 W(v) f(v)dv\cr
= &\,   \l \int_0^1 J(v) X(v) f(v)dv - c \sum_{k=1}^\infty k  p_k(0), \label{eq:envelope}
\end{align}
where $J(v)\:= v - \frac{1-F(v)}{f(v)}$ is the virtual value, and the last equality follows from Little's law.\footnote{Little's Law states that the average arrival rate ($\lambda$) multiplied by the average waiting time ($\int W(v)f(v)dv$) equals the average queue length.
At our steady state, the probability of there being exactly $k$ buyers is $p_k(0)$, so the average queue length is $\sum_{k=1}^\infty k p_k(0)$.}
Using this bound, we obtain the \textit{relaxed program}:
\begin{align*}
    [\mathcal P] \hspace{9em }&\max_{p,y,X} \, \l \int_0^1 J(v)X(v)f(v)dv - c \sum_{k=1}^\infty kp_k(0) \hspace{10em} \\
    &\quad \text{subject to }
  \, \eqref{RF}  \mbox{ and } \eqref{Bk}  \,\, \forall k.
\end{align*}
The structure of $[\mathcal{P}]$ is intuitive: the seller maximizes \textit{total virtual surplus} (the first term) net of \textit{total waiting costs} (the second term), constrained by the physical capacity of the system \eqref{RF} and the flow balance requirements \eqref{Bk}.  The optimal solution follows: 
 
\begin{thm} \label{thm:relaxed-program}   
There exists a triple $(P^*, y^*, X^*)$ that solves the relaxed program $[\mathcal P]$.
There exist thresholds $\hat v_1:=J^{-1}(c/\m)< \hat v_2 < ...< \hat v_{K^*}<1$, for some $K^*<\infty$, such that
\begin{itemize}
    \item the CDF $P^*_k(v)$ ($k=1, ..., K^*$) is supported on $\{0\}\cup [\hat v_k, 1]$ and is continuous and strictly increasing on $[\hat v_k, 1]$;
    \item the assignment rule is  {greedy}: $y^*_k(v) = 1$ for all $v\in[0,1]$ and $k \le K^*$;
    \item the interim allocation $X^*(v)$ binds \eqref{RF} for almost all $v \in [0,1]$.
\end{itemize} 
\end{thm}
\medskip

\Cref{sec:appendix-relaxed-program} provides a precise construction of $P_k^*(v)$ and $X^*(v)$.
The structure of the solution (see \Cref{fig:marginal-CDF} for the case of $K^*=2$) offers a preview of the optimal policy.
First, the optimal assignment policy $y^*$  always assigns a good to the highest-value buyer. In particular, this implies non-wastefulness: goods are never discarded if buyers are waiting.  Second, the support of $P_k^*$ implies that {\it the system accommodates a $k$-th buyer only if his valuation is above $\hat v_k$.} This suggests that the seller will enforce progressively stricter admission standards as the queue lengthens.

    \begin{figure}[htb]
        \centering
      \includegraphics[width=0.6\textwidth]{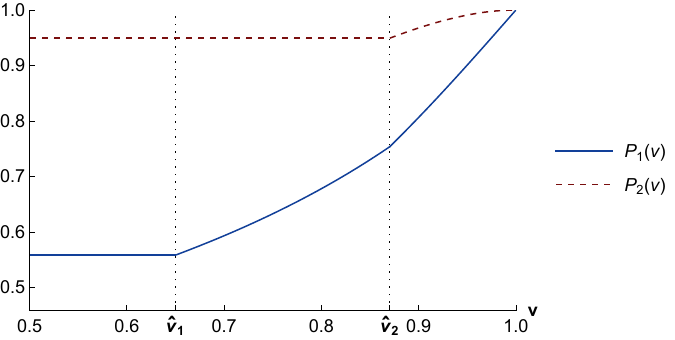}
        \caption{Stationary CDFs of order statistics ($F=U[0,1]$, $\mu=1$, $\l=2$, $c=0.3$)}   \label{fig:marginal-CDF}
\end{figure}

\section{Optimal Policy} \label{sec:optimal-mechanism}

We now construct a Markov policy $\phi^{*}\in\Phi$ that implements the optimal solution to the relaxed program $[\mathcal{P}]$. Since  $\phi^{*}$ achieves the revenue upper bound of $[\mathcal{P}]$ and satisfies all constraints of the original problem $[\mathcal{P}_0]$, it is optimal. 

\smallskip
\noindent $\bullet$ \textbf{Assignment:}  
    An arriving item is assigned to the buyer in the queue with the highest reported valuation. Formally, $a(\bv)(1)=1$ for all $\bv$. In the service context, service is always provided in descending order of reported values.
    
\smallskip
\noindent $\bullet$ \textbf{Buyer-queue management:}  At every instant, a buyer ranked $k$-th in the queue is retained if and only if his valuation is at least $\hat{v}_k$; otherwise, he is removed. Formally, $r(\bv)$ places probability 1 on removing the set of buyers $\{j, j+1, \dots\}$ where $j = \min \{k : v_k < \hat{v}_k\}$.\footnote{This removes all buyers with $v_k < \hat v_k$, since $v_k$ is non-increasing, while $\hat v_k$ is increasing.} On the equilibrium path, this implies a simple rule: when a new buyer arrives at a queue of size $k$, the buyer with the lowest valuation among the $k+1$ present (existing plus new) is removed if that lowest value is below $\hat{v}_{k+1}$; no other removals take place.

\smallskip
\noindent $\bullet$   \textbf{Payment:} Upon arrival, a buyer pays a lump-sum transfer $vX^*(v) - \int_0^v X^*(s)ds$. Note that this payment depends on the buyer's report $v$ but is independent of the queue state $\bv$. Additionally, while in the queue, each buyer is paid 
$c$ per unit time. 
\smallskip

The lump-sum transfer satisfies the standard envelope formula, ensuring that the policy is Bayesian incentive compatible and individually rational. Furthermore, the flow payment of $c$ fully reimburses the waiting cost, ensuring that buyers are willing to obey the recommendation to stay in the queue.\footnote{This flow reimbursement is feasible because the buyers' waiting cost is commonly known.}  We summarize these findings in the following theorem.

\begin{thm}  \label{thm:optimal-policy} 
The Markov policy $\phi^{*}$ induces a unique stationary distribution with marginals given by $P^{*}$, implements the allocation rules $y^{*}$ and $X^{*}$, and is incentive compatible. Consequently, $\phi^{*}$ is an optimal solution to the original problem $[\mathcal{P}_{0}]$.
\end{thm}

Several features of the optimal policy are noteworthy.
First, the assignment rule exhibits conditional efficiency: whenever a good arrives, it is allocated to the buyer in the queue with the highest valuation. This property follows directly from the regularity of the value distribution $F$. Since the virtual value function $J(v)$ is strictly increasing, maximizing expected revenue (virtual surplus) is achieved by prioritizing the agent with the highest valuation among those currently admitted.

 Second, the optimal queue length evolves as a classic birth-death process, increasing or decreasing by at most one unit at any instant. Although our framework permits the simultaneous removal of multiple buyers, such drastic corrections do not occur on the equilibrium path. Because the queue is always ``trimmed'' to satisfy the optimal thresholds, a new arrival triggers at most one removal---either the new buyer themselves or the lowest-value incumbent---ensuring the state transitions remain smooth.

 Third, the admission thresholds are {\it rank-deterministic}: the cutoff $\hat{v}_k$ depends solely on the buyer's rank $k$ and not on the exact valuations of the buyers ahead of them. This independence arises because, under the optimal assignment rule, a buyer's expected waiting time is determined exclusively by the {\it number} of higher-valuation buyers he must wait out, rather than by their specific values. Consequently, the seller's admission policy relies only on comparing the entrant's virtual value against the waiting costs associated with that specific rank.

 Finally, the thresholds $\hat{v}_k$ function as dynamic reserve prices. Like standard reserves, they exclude low-value buyers to limit information rents. Uniquely, however, these reserves are state-dependent and strictly increasing in queue length ($k$). Admitting a marginal buyer provides ``insurance'' against future stockouts and market thinness. When the queue is short, this insurance value is high, justifying a lower threshold. As the queue lengthens, the marginal benefit of this insurance diminishes, prompting the seller to become progressively more selective. Effectively, this policy smooths competitive pressure across time.

To verify that the optimal policy $\phi^*$ implements the interim allocation $X^*$, consider a buyer with valuation $v \in [\hat{v}_i, \hat{v}_{i+1})$ who joins the queue. Once admitted, the buyer's relative rank fluctuates according to the arrival of goods and more competitive buyers. Specifically, their rank improves by 1 whenever an item arrives (at rate $\mu$) and worsens by 1 whenever a new buyer with a higher valuation arrives (at rate $\lambda[1-F(v)]$). The buyer receives the good if their rank reaches 0 (the front of the queue) and is removed if their rank slips to $i+1$ (the exit threshold). Consequently, determining the interim allocation probability boils down to solving a classic {\it Gambler's Ruin} problem.\footnote{The Gambler's Ruin problem asks for the probability that a gambler starting with a given stake will reach a target wealth before going broke. In our context, the buyer's ``stake'' is their distance from the removal threshold, and the ``target'' is the service position. Gaining a chip corresponds to an item arrival (moving closer to service), while losing a chip corresponds to a higher-value buyer's arrival (moving closer to removal). The allocation probability is simply the likelihood that this random walk hits the service barrier before hitting the removal barrier.} The solution to this problem yields exactly $X^*(v)$.

The long-run behavior of the system is summarized by the joint stationary distribution of the value profile $\bv=(v_{k})$. \cref{fig:joint-CDF} visualizes this distribution for the first two order statistics $(v_{1},v_{2})$ in a setting where the maximum queue length is $K^{*}=2$. The distribution reveals the structure of the state space through three distinct regimes: a probability mass at the origin $(0,0)$ representing the empty queue; probability mass along the line segment defined by $\hat{v}_{1}\le v_{1}\le1$ and $v_{2}=0$, representing states with exactly one buyer; and the remaining probability spread over the triangular region $\{(v_{1},v_{2}) \mid v_{1}\ge v_{2}\ge\hat{v}_{2}\}$, representing states with two buyers.

\begin{figure}[htb]
    \centering
    \begin{subfigure}{.45\textwidth}
        \centering
        \includegraphics[width=\linewidth]{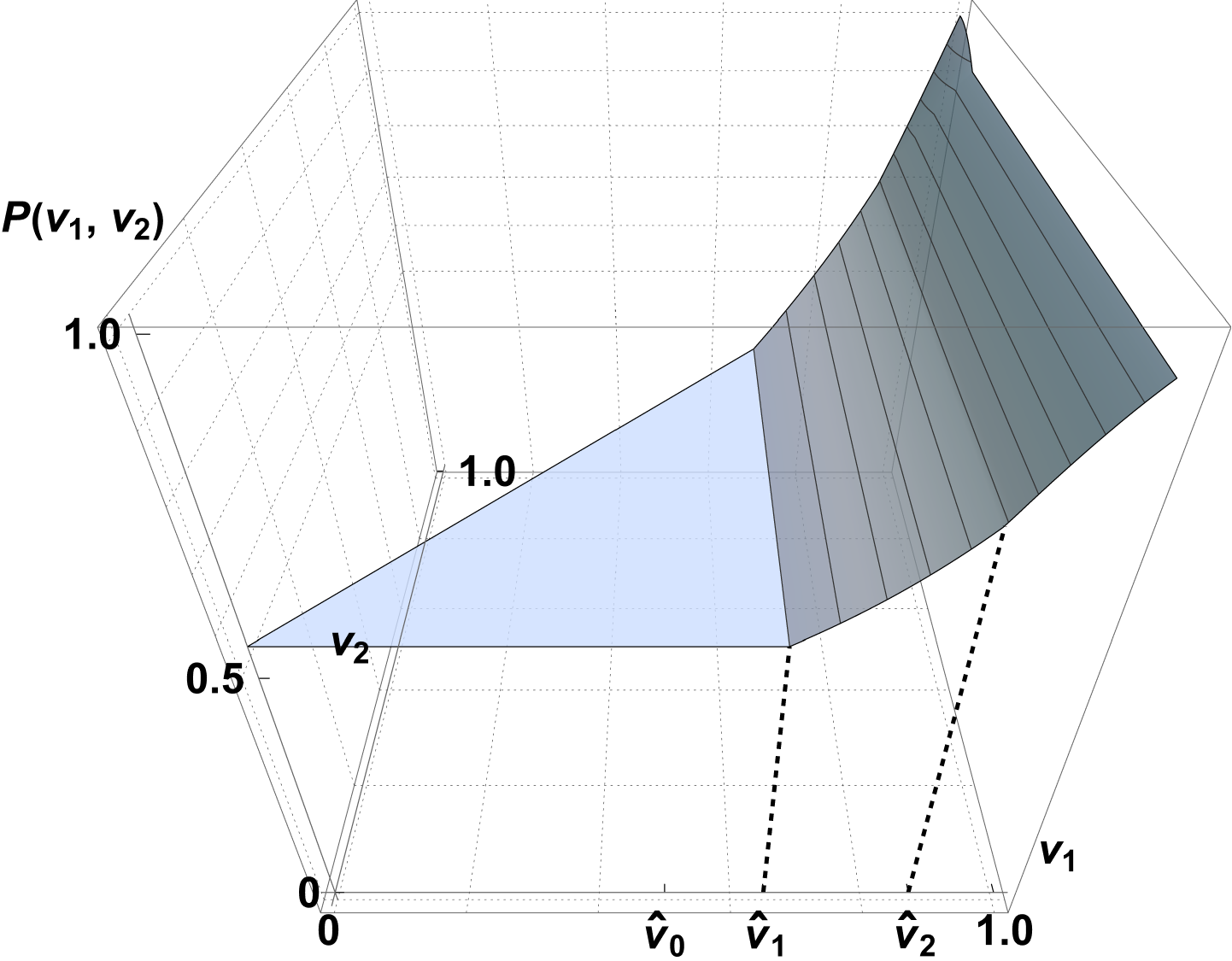} 
        \caption{CDF} \label{fig:cdf-3d}
    \end{subfigure}
    \hfill
    \begin{subfigure}{.4\textwidth}
        \centering
        \includegraphics[width=\linewidth]{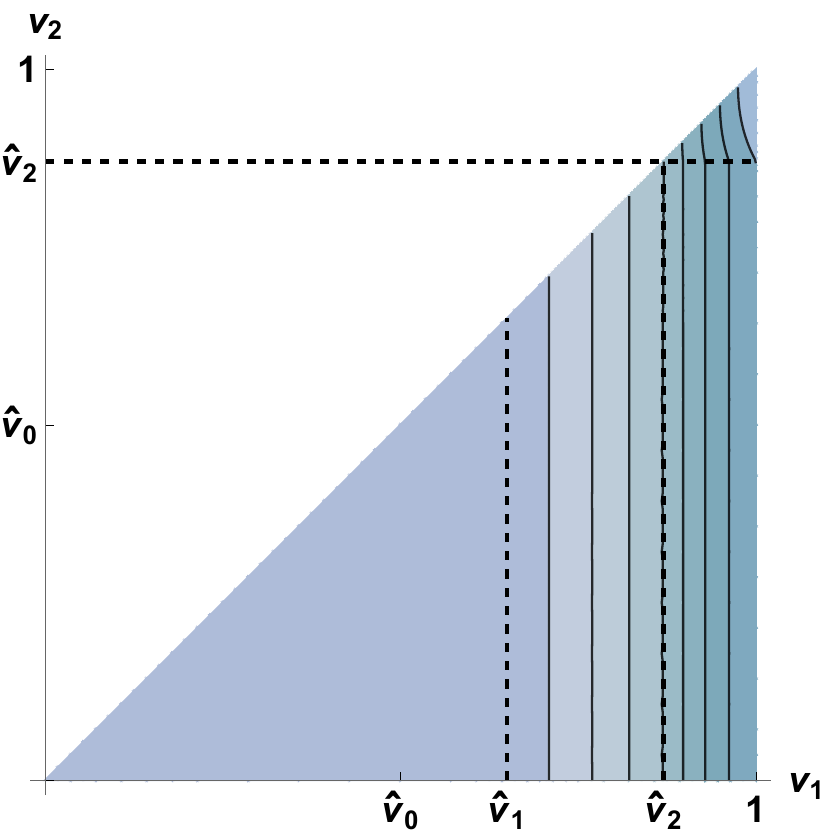} 
        \caption{Contour of the CDF} \label{fig:cdf-contour}
    \end{subfigure}
    \hspace{.1cm}
    \begin{subfigure}{.04\textwidth}
        \centering
        \includegraphics[width=\linewidth]{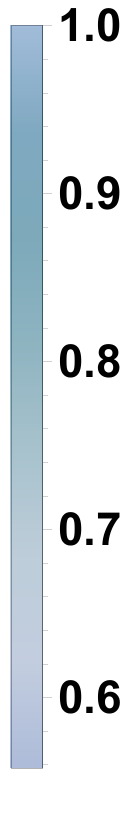}
        \caption*{} 
    \end{subfigure}
    
    \vspace{.5em}
    \raggedright
    \footnotesize{The parameters are $F=U[0,1]$, $\mu=1$, $\l=2$, and $c=0.3$. The maximum queue length under the optimal policy is $K^*=2$. Panel (a) shows the joint CDF of $v_1$ and $v_2$, which is given by $\Probability\{(v_1',v_2') \mid v_1' \leq v_1 \text{ and } v_2' \leq v_2\}$. Panel (b) shows the contour plot of the joint CDF.}
    \caption{Joint distribution of $v_1$ and $v_2$} \label{fig:joint-CDF}
\end{figure}

\paragraph*{Comparative Statics.}

The optimal thresholds respond intuitively to changes in the underlying parameters. 

\begin{prop}[Comparative Statics in $c,\l,\m$]\label{prop:thresholds-compstat}
     For $k=2, \dots, K^*$, the threshold $\hat{v}_k$ is strictly increasing in the waiting cost $c$, strictly increasing in the arrival rate $\lambda$, and strictly decreasing in the service rate $\mu$. The first threshold $\hat{v}_1$ is strictly increasing in $c$ and strictly decreasing in $\mu$, but is independent of $\lambda$.
\end{prop}

Intuitively, when waiting becomes more costly (higher $c$) or service slows down (lower $\mu$), the seller raises admission standards to avoid inefficient queuing. The response to arrival intensity $\lambda$ is more distinct. 
When buyers arrive more frequently (higher $\l$), a buyer who joins the queue is more likely to be replaced by a new buyer with a higher value. If a buyer joins an empty queue, he is served at a constant Poisson rate; therefore, the higher likelihood of replacement has no bearing on whether the seller should accept a buyer into an empty queue. Hence, $\hat v_1$ does not depend on $\l$.  However, if a buyer joins a nonempty queue, he is not served until all higher-valued buyers are served. When $\l$ is higher, the buyer is more likely to be replaced before he makes it to the top of the queue, so the seller is less willing to accept the buyer. This explains why $\hat v_k$ is increasing in $\l$ for $k>2$.

\section{Storable Goods: $d<\infty$}\label{sec:storability}

Thus far, we have focused on the perishable goods case ($d=\infty$). However, a defining feature of dynamic two-sided markets---such as gig platforms---is the ability for both sides to queue while waiting for a match. To capture this, we now relax the assumption of perishability and consider finite storage costs ($d<\infty$). This setting generalizes our previous analysis: as the storage cost $d$ grows sufficiently large, the optimal policy reduces to the service-only solution derived in \Cref{sec:optimal-mechanism}.

With finite storage costs, the seller can, in principle, maintain both a queue of waiting buyers and an inventory of stored goods simultaneously. To maintain tractability, however, we restrict attention to policies that clear one side of the market before accumulating the other.

\begin{assumption}[{\bf No allocation delay (NAD)}] When a buyer is queued, an arriving good is either assigned immediately or discarded. Conversely, when a good is in storage, an arriving buyer is either assigned immediately or turned away. \end{assumption}

This assumption implies that the buyer queue and the goods inventory are never simultaneously non-empty. Consequently, despite the infinite-dimensional nature of the value profile $\bv$, the system's ``length'' remains effectively one-dimensional: we can model the state space as a ``double-ended queue'' (or {\it dqueue}), where positive integers denote waiting buyers and negative integers denote stored goods.\footnote{That is, one can use $k\in \mathbb{N}$ to represent the existence of $k$ buyers in the queue, and use $k=-\ell$, for some $\ell\in \mathbb{N}$, to  denote $\ell$ goods in the inventory. While dqueues are standard in the Operations and Computer Science literatures, our formulation is significantly less restrictive. Unlike standard models that typically assume simple birth-death transitions, we allow for complex, value-dependent policies that may remove multiple buyers or discard multiple goods at any instant, provided the state does not ``jump'' across zero.} 
NAD is reminiscent of the ``Work-Conservation'' or ``Non-Wastefulness'' conditions often invoked in the queuing literature, though it is considerably weaker. In the service context ($d=\infty$), the assumption holds vacuously. While restricting simultaneous queues is not theoretically without loss,\footnote{The only theoretical reason to violate NAD---holding both a buyer and a good---is the option value of future abundance. A buyer currently deemed ineligible (value below the current threshold) might become eligible in the future if a surge of item arrivals lowers the opportunity cost of assignment. Violating NAD would allow the seller to retain this buyer for that contingency. However, for this to be optimal, the option value must outweigh the  holding costs ($c+d$) incurred during the delay. We have not identified numerical parameters where this benefit justifies the cost even when $c+d$ is low.  Indeed, as we argue below, NAD entails no loss of optimality as $c, d \to 0$.} the economic rationale for doing so with homogeneous goods is limited. Since goods are identical, delaying allocation to a waiting buyer yields no improvement in match quality. Indeed, we show in \Cref{sec:large-market} that imposing NAD is without loss of optimality as the market grows large ($\lambda, \mu \to \infty$) or as frictions vanish ($c, d \to 0$). This finding is in line with recent dynamic matching literature suggesting that immediate matching is approximately optimal in large markets \citep{akbarpour2017thickness,ashlagi2023matching}. 

The seller's dynamic policy must now also specify inventory management.\footnote{Formally, we augment the policy with two mappings: $\tilde{a}(v, \ell) \in [0,1]$, which determines the probability of serving an arriving buyer with value $v$ given inventory $\ell$; and $\tilde{r}(\ell) \in \Delta(\mathbb{N})$, which specifies the distribution over the number of discarded items when the inventory reaches level $\ell$.} To operationalize this, we expand the state space to $(\bv;\ell) \in \bV \times \Z_+$, where $\bv$ represents the value profile of buyers and $\ell$  the number of stored goods. Under the NAD assumption, the state effectively collapses to either $(\bv; 0)$ or $(\mathbf{0}; \ell)$. While the formulation of the Direct Revelation Mechanism remains unchanged, we must extend the reduced-form feasibility  \eqref{RF}  and balance conditions \eqref{Bk}'s to account for inventory. Let $q_{\ell}$ denote the stationary probability that the inventory contains exactly $\ell$ goods, and let $z_{\ell}(v)$ denote the probability that an incoming buyer with value $v$ is assigned a good conditional on the inventory level being $\ell$. The reduced-form feasibility constraint is then given by:
\begin{align*}
   \lambda  \int^1_v X(s)f(s)ds
   \le \mu\sum_{k=1}^{\infty} y_k(v)p_k(v) + \l \sum_{l=1}^\infty q_\ell \int_v^1 f(s)z_\ell(s)ds, \quad \forall v\in V. \tag{$\widetilde{RF}$} \label{RF-storage}
\end{align*}
The RHS now accounts for allocations from two sources: goods arriving when buyers are waiting (first term) and buyers arriving when goods are stored (second term).

The balance inequality for the buyer queue, \eqref{Bk}, remains formally unchanged, with the boundary term redefined as $p_{0}(v) := P_{1}(v) - \sum_{\ell=1}^{\infty}q_{\ell}$ to account for stored goods.\footnote{For the case of $d=\infty$, we had defined $p_0(v)\:=P_1(v)$.} For the inventory, let $S_{-\ell}$ denote the set of states with at least $\ell$ goods. A necessary condition for stationarity is: for each $\ell\ge 1$
\begin{align*}
     \l  q_\ell \int_0^1 f(s)z_\ell(s)\dd s \leq \mu q_{\ell-1}, \tag{${B_{-\ell}}$}  \label{B-l}
\end{align*}
where we let $q_0 \:= P_1(0) - \sum_{\ell=1}^\infty q_{\ell}$ represents the probability that the system is completely empty. The LHS represents the rate at which inventory decreases from $\ell$ to $\ell-1$ due to service (a buyer arriving and being assigned a good). This constitutes a lower bound on the outflow from $S_{-\ell}$ because the seller may also discard goods freely. The RHS represents the rate at which inventory increases from $\ell-1$ to $\ell$ due to supply arrival. This constitutes an upper bound on the inflow to $S_{-\ell}$ because the seller may choose to discard arriving goods rather than store them.\footnote{Whereas the balance inequality for $S_k(v)$ requires that outflow is no less than inflow, the balance inequality for $S_{-\ell}$ requires that outflow is no greater than inflow. 
}

We now state the relaxed program for the general case. Proceeding as in \Cref{subsec:relaxed}---applying the Envelope Theorem to substitute payments and Little's Law to rewrite waiting costs---we arrive at the following program:
\begin{align*}
    [\widetilde{\ms P}] \hspace{4em} \max_{p,q,y,z,X} \,\l  &\int_0^1  J(v)X(v)f(v)dv - c \sum_{k=1}^\infty k p_k(0)  - d \sum_{\ell=1}^\infty \ell q_\ell \hspace{10em}\\
   &  \text{subject to} \, (\widetilde{RF}),\, (B_k)\,\, \forall k, \, \mbox{and} \, (B_{-\ell})\,\, \forall \ell.
\end{align*}

The following is an analogue of \cref{thm:relaxed-program}.

\begin{customthm}{$\mathbf{1'}$}\label{thm:relaxed-program-storable}  Suppose $d<\infty$.  The relaxed program  $[\widetilde{\ms P}]$ admits an optimal solution $(P^*, q^*, y^*, z^*, X^*)$ characterized by strictly increasing thresholds, $\hat{v}_{-L^*} < \dots < \hat{v}_{K^*}$, such that 
\begin{itemize}
    \item the inventory distribution $q^*$ is supported on $\{0, \dots, L^*\}$, and the buyer queue distributions $P_k^*$ are supported on $\{0\} \cup [\hat{v}_k, 1]$ for $k \le K^*$;
    \item  the assignment rules are greedy: $y_k^*(v)=1$ when an item arrives while buyers are waiting, and $z_{\ell}^*(v) = \mathbb{I}\{v \ge \hat{v}_{-{\ell}}\}$ when a buyer arrives while inventory is available;
    \item the interim allocation $X^*(v)$ binds $(\widetilde{RF})$ for almost all $v \in [0,1]$. 
\end{itemize} 
\end{customthm}

    \Cref{sec:appendix-relaxed-program} provides the precise construction of these variables. Crucially, the support of $P^*_k$ implies that the system accommodates a $k$-th buyer only if his valuation exceeds $\hat{v}_k$. This property reveals that inventory and queue management are not distinct problems but rather two sides of the same coin. We can view the state space as a continuum of ``net demand''---the number of waiting buyers minus the number of stored goods. The strictly increasing thresholds $\hat{v}_{j}$ imply that a single ``law of demand'' holds. When net demand is low (high inventory, negative indices), the planner sets low price floors to clear stock and avoid holding costs. As net demand rises (inventory depletes and queues form), the planner smoothly transitions to raising admission standards to mitigate waiting costs. The optimal solution thus balances the system by strictly increasing the shadow price of allocation as the market tightens.

 As in \cref{sec:optimal-mechanism}, we can construct a Markovian policy $\phi^* \in \Phi$ which implements the solution to $[\widetilde{\ms P}]$.  The buyer-queue management, goods assignment, and payment rules remain unchanged, except for the threshold values. The policy  $\phi^*$ has a new component:  
 
\smallskip
\noindent $\bullet$   \textbf{Inventory management:} Suppose the system holds $\ell\ge 0$ units in inventory.  
    \begin{enumerate}
        \item [(i)] If a good arrives, it is stored if $\ell \le L^* - 1$ and is otherwise discarded;
        \item [(ii)] If a buyer with value $v$ arrives and $\ell \ge 1$, the buyer is allocated a good immediately if $v \ge \hat{v}_{-\ell}$ and is otherwise turned away.
    \end{enumerate}

The result follows:

\begin{customthm}{$\mathbf{2'}$}\label{thm:optimal-policy-storable}
    Suppose $d<\infty$.  The policy $\phi^*$ implements the optimal solution $(P^*, q^*, y^*, z^*, X^*)$ to the relaxed program $[\widetilde{\ms P}]$ and is incentive compatible. Therefore, $\phi^*$ is an optimal policy for the seller.
\end{customthm}

\section{Robust Implementations of Optimal Policy}\label{sec:DSIC-EPIC}

The optimal policy $\phi^*$ described in \Cref{thm:optimal-policy-storable} is Bayesian incentive-compatible. Here, we propose two ways to strengthen the robustness of $\phi^*$.

\subsection{Dominant Strategy Implementation} \label{subsec:DSIC}

We first present a \textbf{Cutoff Price Mechanism (CPM)} that implements $\phi^*$ in dominant strategies. For each buyer $i$, the mechanism maintains a {\bf personalized reserve price} $r_i$, initialized to $0$. The mechanism operates as follows:

\smallskip
\noindent {\bf (a) Inventory Sales:} If a buyer arrives when there is positive inventory $\ell \ge 1$, the seller posts a monopoly price of $\hat{v}_{-\ell}$. The buyer decides whether to purchase at that price or leave.

\smallskip
\noindent {\bf (b) Active versus Passive Status:} Buyers in the queue participate in multiple auctions over time. A buyer remains Active until participating in his first assignment auction. Upon submitting a bid $b$ in that auction, the buyer becomes Passive. From that moment on, the platform acts as a Proxy Agent for the buyer, using $b$ as the buyer's fixed bid for all future events:
    \begin{itemize}
        \item \textit{Future Survival Auctions:} If the clock price exceeds $b$, the Proxy Agent automatically drops out, and the buyer is removed.
        \item \textit{Future Assignment Auctions:} The Proxy Agent automatically submits a bid of $b$.\footnote{If the real bidder drops out before the Proxy Agent loses in a survival auction, and the Proxy Agent later wins a good, then the good is discarded.}
    \end{itemize}

\smallskip
\noindent {\bf (c)  Survival Auctions:} If a buyer arrives when the queue has $k-1 \geq 0$ buyers and no inventory, a survival auction is initiated with a common clock price rising from $\hat v_{k-1}$ (where we let $\hat v_0\:=J^{-1}(0)$).
    \begin{itemize}
        \item Each active buyer continuously decides whether to remain or drop out.
        \item For each passive buyer, his Proxy Agent drops out if the price exceeds his fixed bid $b$.
        \item \textit{Outcome:} When a buyer drops out, or the price reaches the admission threshold $\hat{v}_{k}$, the price clock stops, and the auction concludes. All remaining buyers stay.
        \item \textit{Reserve Price Update:} For every survivor $i$, the personalized reserve price is updated: $r_i \leftarrow \max\{r_i,\, \text{stopped clock price}\}$.
    \end{itemize}

\smallskip
\noindent {\bf (d)   Assignment Auctions \& Pricing:} 
Suppose a good arrives when the buyer queue is non-empty. If there is only one buyer, then he wins the good and pays his personalized reserve price.  Otherwise, an assignment auction is held.
    \begin{itemize}
        \item \textit{Bidding Constraint:} Each active buyer submits a sealed bid $b \ge r_i$. Each passive buyer bids his fixed bid $b$ (via his Proxy Agent).
        \item \textit{Allocation:} The good is allocated to the highest bidder, breaking ties randomly.
        \item \textit{Pricing:} The winning bidder pays his \textbf{Cutoff Price}. This price is defined as the lowest bid $\mathcal B \ge r_i$ such that, if the bidder were to adopt $\mathcal B$ as his fixed bid (and be represented by a Proxy Agent using $\mathcal B$ for all subsequent events), he would eventually win a good on the realized sample path of future arrivals. 
        \item \textit{Billing:} Payment is collected only after the Cutoff Price is determined (which may require observing future events).
    \end{itemize}

\smallskip
\noindent {\bf (e)   Waiting Costs:} Buyers are reimbursed for realized waiting costs, regardless of allocation outcomes.

\noindent {\bf (f) Information Policy:}  Buyers' identities and bids are private,  and buyers do not observe the queue, including its length.   

\medskip

To illustrate how the cutoff price is determined and how the optimal policy manages competitive pressure over time, we consider three example scenarios surrounding the auction between buyer $A$ (with value $v_A=6$) and buyer $B$ ($v_B=3$), as depicted in \Cref{figure:DSIC-example}. Assume the optimal thresholds are $\hat{v}_1=1$, $\hat{v}_2=2$, and  $\hat{v}_3=4$. In all three scenarios, $A$ wins under truthful bidding, but the cutoff price varies based on past or future events.

\begin{figure}[htb]
\centering
\begin{tikzpicture}[>=stealth, thick, xscale=1.1]
    \draw[->] (-1,3) -- (10,3) node[right] {\small \quad Scenario 1};
    \foreach \x/\label in {-0.5/$v_A=6$, 4/$v_B=3$, 6/item, 8/\gray{item}} {
        \draw (\x, 3.1) -- (\x, 2.9) node[below] {\small \label};
    }

    \draw[->] (-1,1.5) -- (10,1.5) node[right] {\small \quad Scenario 2};
    \foreach \x/\label in {-0.5/$v_A=6$, 4/$v_B=3$, 6/item, 7.5/\gray{$v_C=5$}, 9/\gray{$v_D=5$}} {
        \draw (\x, 1.6) -- (\x, 1.4) node[below] {\small \label};
    }

    \draw[->] (-1,0) -- (10,0) node[right] {\small \quad Scenario 3};
    
    \foreach \x/\label in {-0.5/$v_A=6$, 4/$v_B=3$, 6/item, 8/\gray{item}} {
        \draw (\x, 0.1) -- (\x, -0.1) node[below] {\small \label};
    }
    
    \draw[thin] (-.5, -1.2) -- (4, -1.2);
    
    \foreach \x/\label in {-0.2/$v_{\alpha}=8$, 1/$v_{\beta}=7$, 2.2/item, 3.4/item} {
        \draw[] (\x, -1.15) -- (\x, -1.25) node[below, black] {\small \gray{\label}};
    }
    
    \draw[dotted, gray] (1, -0.1) -- (-.5, -1.15);
    \draw[dotted, gray] (2.5, -0.1) -- (4, -1.15);

\end{tikzpicture}
\caption{Three example scenarios}
\label{figure:DSIC-example}
\end{figure}

\smallskip 
\noindent{\bf $\bullet$  Scenario 1 (Baseline):} Buyers $A$ and $B$ arrive at an empty system. The queue grows to 2, raising the survival clock to  $\hat{v}_2=2$. Both buyers survive and face a personalized reserve price of $r_A=r_B=2$.  Upon arrival of an item, an assignment auction is held, and $A$ wins.  To determine the cutoff price, the mechanism runs a counterfactual: ``What is the lowest bid $A$ could have made and still eventually won?''
Since another item arrives shortly after, $A$ would have won eventually provided he bid at least $2.$ Thus, the cutoff price is 2.

\smallskip 
\noindent{\bf $\bullet$  Scenario 2 (Borrowing from Future):}  This scenario mirrors Scenario 1, except for the post-auction arrivals. Instead of a second item,
two high-value buyers, $C$ and $D$ ($v=5$)  arrive.  This alters the counterfactual:     If $A$ had bid $b < 3$ (and lost to $B$), he would have remained in the queue only to be removed upon the arrival of $C$ and $D$ (due to the higher threshold $\hat v_3=4>3>b$).  Thus, to win at all, A must beat $B$ immediately, so the cutoff price becomes 3. Here, the mechanism effectively uses the threat of future arrivals ($C$ and $D$) to discipline the current winner.

\smallskip 
\noindent{\bf $\bullet$ Scenario 3 (Saving from Past):} This scenario mirrors Scenario 1, except for pre-auction events. After $A$ arrives, high-value buyers $\alpha$ ($v=8$) and $\beta$ ($v=7$) enter. The queue grows to 3, increasing the survival threshold to $\hat{v}_3=4$.  Buyer $A$ survives, updating his personalized reserve price  to $r_A=4$. Two items arrive immediately, serving $\alpha$ and $\beta$. Buyer $A$ remains and subsequently competes against $B$ ($v_B=3$).  As in Scenario 1, ample supply follows.  Yet, A's cutoff price is $4$. Although the current competition is weak, $A$ pays 4 because he previously survived through a period of high congestion that required a valuation of 4. The mechanism leverages this past competition to set a higher price floor.  \qed

\smallskip

 Unlike in the standard Vickrey auction with a reserve, our cutoff price varies dynamically. When competition is low, the cutoff price is below the second-highest bid (Scenario 1); when competition is high, it is above the second-highest bid (Scenario 3).
 Still, it serves the same purpose: to eliminate the ``hindsight'' regret of paying more than is necessary to win. As the examples illustrate, establishing this regret-free price requires assessing the market outcome well after the auction concludes.\footnote{The maximum delay can be bounded by the length of the cycle between two emptying episodes, which is finite in expectation.} As will be shown later, delayed billing can be avoided, but only at the expense of dominance, since incentives will then rely on the buyer's beliefs about the future arrival process and equilibrium behavior.

\begin{thm} \label{thm:DSIC}
The CPM is strategyproof.  That is, it is a (weakly) dominant strategy to (i) accept the good in (a) if and only if one's value is above the monopoly price, (ii) drop out of the survival auction in (c) if and only if the clock price has reached one's value, and (iii) bid one's valuation in the assignment auction in (d).
\end{thm}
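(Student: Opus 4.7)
The plan is to exploit the Vickrey-style structure of the cutoff-price mechanism: each buyer's winning indicator and payment are determined by a cutoff that does not depend on his own bid, so standard pivot-mechanism reasoning yields strategyproofness one claim at a time.

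Claim (i) is immediate. At state $(\mathbf 0;\ell)$ with $\ell>0$, an arriving buyer faces a take-it-or-leave-it offer at $\hat v_{-\ell}$, so the surplus $v - \hat v_{-\ell}$ from accepting is non-negative precisely when $v \geq \hat v_{-\ell}$.

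For claim (iii), I would fix an arbitrary sample path $\omega$ of every other agent's arrivals, reports, bids, and entry/exit decisions, together with future good arrivals, and fix the buyer's behavior in all prior survival auctions. Consider the assignment auction in which he is active with bid $b$. The key lemma to prove is monotonicity: the event ``the buyer eventually wins a good along $\omega$'' is weakly increasing in $b$, so there exists a cutoff $c^*(\omega)$ such that winning occurs iff $b \geq c^*(\omega)$. This is obtained by tracing the buyer's trajectory from the current auction onward: a higher $b$ (a) weakly improves his chance of winning the current assignment auction against any fixed competing bids, (b) once he becomes passive with fixed bid $b$, raises the clock price at which he would be forced out of subsequent survival auctions, and (c) weakly improves his chance of winning any later assignment auction as a passive bidder. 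By the mechanism's definition, a winning buyer pays exactly $c^*(\omega)$---the lowest bid at which the monotone win condition would still have held. The standard Vickrey argument then yields weak dominance of $b=v$: if $v > c^*(\omega)$, bidding $v$ secures surplus $v - c^*(\omega)>0$ while underbidding below $c^*(\omega)$ merely relinquishes the good; if $v < c^*(\omega)$, bidding $v$ secures zero while any overbid above $c^*(\omega)$ still pays $c^*(\omega) > v$.

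For claim (ii), I would anchor on (iii) being followed in the future and the waiting-cost reimbursement making the queueing time payoff-irrelevant. A buyer active in a survival auction effectively chooses a dropout price $d \in [0,\hat v_k]$: if the clock reaches $d$ before any other buyer drops or before $\hat v_k$, he exits (surplus 0); otherwise he survives with a reserve equal to the final clock price $p$, which binds his subsequent assignment-auction bids from below. Under (iii), his continuation surplus is $v - c^*$, which is non-negative only when the eventual cutoff $c^* \leq v$; since $c^*$ is bounded below by his accumulated highest reserve $p$, allowing $p$ to exceed $v$ caps the continuation at a strictly negative value. Hence $d > v$ exposes him to strictly negative continuations, while $d < v$ risks premature exit---forfeiting a positive continuation in scenarios where the reserve would have resolved below $v$. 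Setting $d = v$ uniquely avoids both kinds of regret, giving weak dominance.

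The main obstacle is the rigorous verification of the monotonicity lemma underlying (iii), since a higher $b$ simultaneously alters the buyer's active/passive status, his forced bid in later auctions, and his survival through intervening survival auctions, and one must ensure that these effects compose without reversing the ordering of winning outcomes. I would discharge this by induction on the events in $\omega$ following the critical assignment auction, showing that at each step the buyer's set of eventually-winnable goods is weakly expanding in $b$ and that the induced cutoff $c^*(\omega)$ coincides with the mechanism's specified payment.
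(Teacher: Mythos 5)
There is a genuine gap, and it is precisely the issue to which the paper devotes most of its own proof. Your argument for (iii) fixes a sample path $\omega$ containing ``every other agent's \ldots bids,'' and then derives a cutoff $c^*(\omega)$ in your own bid $b$. But strategyproofness must hold against arbitrary \emph{strategies} of the other buyers, i.e.\ against bids that are measurable functions of what those buyers observe---and what they observe is not invariant to your bid. Once you become passive, your frozen bid is the price at which you are forced out of later survival auctions, so your assignment-auction bid can change the final clock price of a survival auction in which another (still active) buyer participates; if that buyer conditions his later bids on that clock price, his bids move with your bid, and your ``fixed $\omega$, monotone in $b$, hence cutoff $c^*(\omega)$'' construction no longer pins down either the win event or the payment. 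The paper's proof isolates exactly this channel (it is the \emph{only} channel, because bids are sealed and frozen after the first assignment auction) and then shows that exploiting it is never profitable: a deviation can alter a future survival auction's stopping price only when that clock rises above the deviator's true value, in which case truthful play would have lost and the deviator's cutoff price exceeds his value, so the deviation yields a loss. Having disposed of the informational externality, the paper may then legitimately hold others' future bids fixed and argue, case by case (win in the same auction, win earlier, win when truthful play never wins; and symmetrically for underbidding), that the cutoff price is unchanged or exceeds the value. Your proposal skips the externality step entirely, so the Vickrey-style conclusion does not follow as stated.

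A secondary, smaller gap: even with others' bids held fixed, you assert that the mechanism's payment coincides with your $c^*(\omega)$, but the mechanism's cutoff price is defined relative to the particular assignment auction in which the buyer actually won (under the counterfactual that he is active there), and changing $b$ can change \emph{which} auction that is. Showing that winning earlier or later leaves the cutoff price unchanged is part of the work (the paper does this explicitly in the overbidding and underbidding cases); you defer it to an induction ``to be discharged,'' so as written the identification of the payment with $c^*(\omega)$ is unproven. Your treatments of (i) and (ii) are essentially the paper's (with the caveat in (ii) that staying past one's value yields a \emph{weakly} dominated position---the continuation is zero if one never wins and negative only upon winning---which is how the paper phrases it).
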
  

The dominance property provides a kind of robustness that is often absent in steady-state mechanism design.  Because truth-telling is optimal state-by-state, a buyer's incentives depend neither on his beliefs, others' strategies, nor on the system being in the stationary distribution.
Crucially, CPM achieves this robustness through two restrictions: opacity and passivity.
By withholding information about the queue and auction and forcing buyers to be passive after the initial bid, the mechanism ensures that a buyer's deviation cannot affect his opponents' future actions in a way that lowers the price for him.  See \cref{sec:appendix:reason-for-passive-buyers} for examples that show dominance fails without these restrictions.\footnote{If CPM disclosed information about buyer identities, past bids, or queue length, it would still satisfy the requirements of \textit{as-if dominance} \citep{nagel2025if} (except that they assume a finite game, whereas our game is infinite horizon). A mechanism is as-if dominant if agents possess an optimal strategy provided they neglect others' ability to condition on fine-grained history (a ``coarsely optimal'' strategy), and if such strategies constitute an ex post equilibrium.}  This informational firewall effectively isolates each buyer's decision problem, ensuring that incentive compatibility and obedience are preserved even if the system is initialized from an arbitrary state or if buyers hold misspecified beliefs about the underlying stochastic process.

This robustness extends to the seller's revenue. Starting from any finite state (e.g., $k=\ell=0$), the stochastic process $\{\bv_t\}_{t\geq 0}$ induced by the optimal policy converges to the unique stationary distribution, and the long-run time-average revenue converges almost surely to the optimal value of the relaxed program $[\widetilde{\ms P}]$ \citep{asmussen2003applied, THORISSON1992237}.  This provides a strong performance guarantee: the mechanism is not only incentive-compatible along any sample path but also revenue-optimal in the long run:

\begin{cor}  Under the cutoff-price auction, the long-run time-average of the seller's revenue converges almost surely to the value of $[\widetilde{\mathcal P}]$ as $T\to \infty$.
\end{cor}

\subsection{Periodic Ex Post Implementation}\label{subsec:periodic-EPIC}

One feature of our dominant-strategy implementation is the possibility of payment delays. While delayed payments are not uncommon,\footnote{Customers are often billed for later payments after the service is rendered.  This may be enforced via legal liability or by requiring buyers to post bonds that are returned after payment.} one may wonder if such a delay can be avoided.  Our original version of the direct mechanism, which specifies a lump-sum transfer upon each buyer's arrival, involves no delay and is Bayesian incentive compatible. In fact, delayed payment can be avoided while retaining robustness.

 Consider the \textbf{Expected Cutoff Price Mechanism (ECPM)}, which is the same as the Cutoff Price Mechanism except for the pricing rule: upon winning, a buyer is charged the \textit{expected} cutoff price conditional on the current state, where the expectation is taken over the stochastic arrival of future goods and buyers, assuming all future agents report truthfully.

This mechanism is no longer dominant strategy incentive compatible, since the payment depends on the assumption that future agents play equilibrium strategies. However, it satisfies a robustness property stronger than standard Bayesian incentive compatibility, known as \textbf{periodic ex post incentive compatibility} \citep{bergemann2010dynamic}.

\begin{defn} A (direct) mechanism is {\bf periodic ex post incentive compatible} if truth-telling is a best response for each agent, regardless of the history and the current state of the queue, given that all other agents report truthfully in the future.\footnote{To be precise, periodic ex-post incentive compatibility, as defined by \cite{bergemann2010dynamic}, does not require dominance for the current period; in this sense, our definition is stronger. }    
\end{defn}
The qualifier ``periodic'' requires that the buyer form a correct belief about the future states when making his bids.

\begin{cor}
The ECPM is periodic ex post incentive compatible.
\end{cor}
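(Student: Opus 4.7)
The plan is to reduce the claim to \cref{thm:DSIC} (strategyproofness of the cutoff-price auction) via a payment-equivalence argument. Under the expected-cutoff auction, the only departure from the DSIC mechanism is that the winner of an assignment auction pays $\E[\text{cutoff}\mid \mathcal F_{\text{assign}}]$ upfront rather than the realized cutoff later, where $\mathcal F_{\text{assign}}$ is the information available at the assignment time and the conditional expectation presumes truthful play by all future arrivals. The monopoly-pricing and survival-auction components are left untouched, and the allocation in each auction depends only on the submitted bids. Hence, fixing any reporting/bidding strategy for buyer $i$, the stochastic processes governing allocation coincide across the two mechanisms; only the way payments are timed differs.

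First I would fix an arbitrary history and current state, fix truth-telling by every other buyer going forward, and consider any alternative strategy $\sigma$ for buyer $i$ (covering his acceptance of the monopoly offer, his drop-out price in any survival auction he participates in, and his bid in the assignment auction in which he wins). Let $U^{\text{DSIC}}_i(\sigma)$ and $U^{\text{EC}}_i(\sigma)$ denote his expected continuation utility under the DSIC and expected-cutoff mechanisms, respectively. Applying the tower property to the product of the winning indicator and the cutoff gives
\[
\E\bigl[\E[\text{cutoff}\mid \mathcal F_{\text{assign}}]\cdot \mathbf{1}_{\text{win}}\bigr]=\E\bigl[\text{cutoff}\cdot \mathbf{1}_{\text{win}}\bigr],
\]
and since the allocations and the monopoly- and survival-stage payments are identical across the two mechanisms, this yields $U^{\text{EC}}_i(\sigma) = U^{\text{DSIC}}_i(\sigma)$ for every $\sigma$. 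By \cref{thm:DSIC}, truth-telling is a dominant strategy in the DSIC mechanism and, in particular, maximizes $U^{\text{DSIC}}_i(\sigma)$ when opponents play truthfully; hence it also maximizes $U^{\text{EC}}_i(\sigma)$, which is precisely the periodic ex post best-response property.

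The one place I expect friction is justifying the tower identity at the correct conditioning filtration. Two facts must be verified: that buyer $i$'s realized valuation does not shift the conditional law of future Poisson arrivals or iid valuations, so the seller's forecast at $\mathcal F_{\text{assign}}$ coincides with the forecast $i$ would himself use; and that the seller's expectation, computed under the hypothesis that future buyers report truthfully, governs the actual distribution of the realized cutoff, which holds precisely because the periodic ex post solution concept fixes other players' future play to be truthful. Once both are recorded, the utility equality combined with the dominant-strategy conclusion of \cref{thm:DSIC} finishes the argument.
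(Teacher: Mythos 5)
Your proposal is correct and follows essentially the same route as the paper: the paper's (one-line) proof also reduces the claim to \cref{thm:DSIC} by observing that replacing the realized cutoff with its expectation over future states preserves truth-telling incentives, provided future buyers report truthfully and the bidder holds correct beliefs about the state. Your tower-property/payment-equivalence step and the two conditioning caveats you flag are just an explicit spelling-out of that same argument.
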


\begin{proof}  Since the cutoff price is dominant strategy incentive compatible for each state and for all possible behavior by future buyers, its expectation over future states preserves incentives for truthful reporting so long as all future buyers report truthfully and each buyer forms accurate beliefs about the future states.  
\end{proof}

As observed by \cite{ausubel2004}, an important advantage of ex post implementation is that it allows for \textbf{full transparency}.
Recall that dominance requires opacity and passivity to ensure that a deviation today does not alter the strategies of future agents in a way that would lower the price.  By contrast, periodic ex post implementation assumes that future agents play truthfully, regardless of the history. Because this assumption effectively ``freezes" the predicted behavior of future agents, disclosing the full history of buyers' behavior as well as allowing them to remain active throughout, cannot perturb the buyer's beliefs about future play. Consequently, this (admittedly weaker) notion is compatible with full transparency.

\section{Large Market Properties}\label{sec:large-market}

Next, we investigate the large-market properties of our optimal policy. Specifically, we ``thicken'' the market by letting the arrival rates $\lambda, \mu \to \infty$ while holding the balance parameter $\rho = \lambda/\mu$ constant. We show that the optimal policy converges to a familiar static benchmark: a multi-unit uniform-price auction with a fixed monopoly reserve.  Notably, the same result holds if either $c$ or $d$ tends to zero.  Since the static benchmark provides a revenue upper bound (as will be seen shortly), these results mean that NAD is with little loss in a large market.  

\paragraph*{Oracle Benchmark.}  Consider a hypothetical setting where $n$ buyers, each with a value drawn independently from $F$, compete for $m$ units of the good. Under the standard regularity condition, the $(m+1)$-th price uniform auction with a reserve price set at $\hat{v}_0 := J^{-1}(0)$ maximizes revenue. The auction admits a dominant strategy of truthful bidding, so the winners pay the $(m+1)$-th highest valuation (set equal to zero if $m \ge n$) or the monopoly price $\hat{v}_0$, whichever is higher. Let $R^*(m,n)$ denote the auction's expected revenue.

We argue that this static benchmark establishes an upper bound on the expected revenue attainable in our dynamic environment. To formalize this comparison, consider the underlying stochastic process as generated in two steps: first, realize a sample path of arrivals governed by the rates $\lambda$ and $\mu$; second, draw a valuation from $F$ for each arriving buyer. Let $(m, n)$ index a sample path where $m$ items and $n$ buyers arrive during a unit interval, and let $R_{\phi}(m, n)$ denote the expected revenue generated by a policy $\phi$ on such a path. Since the static oracle $R^*(m, n)$ optimizes the allocation for this fixed set of agents without timing constraints, it necessarily outperforms any dynamic policy subject to sequential arrival frictions.

\begin{lem}\label{lemma:large-market-1}
    For any sample path $(m, n)$ and any policy $\phi \in \Phi$, the expected revenue satisfies $R_{\phi}(m,n)\le R^*(m,n)$.
\end{lem}

Now,   let $n,m \to \infty$ such that $n/m\to \rho:=\l/\m$. Then, the selling price converges almost surely to
$\tilde v_0:=\max\{\tilde v, \hat  v_0\},$
where $ \tilde v:=\inf\{v: \rho [1-F(v)]\le 1 \}.$
Consequently, the seller's per-unit revenue $R^*(m,n)/m$ converges almost surely to
$$R^*:=\rho[1-F(\tilde v_0)]\tilde v_0.$$

\begin{lem} \label{lemma:large-market-2}  The expected per-unit revenue achieved by any policy $\phi \in \Phi$ is no greater than $R^*$ in the limit as $\lambda, \mu \to \infty$ with $\rho = \lambda/\mu$ held constant. 
\end{lem}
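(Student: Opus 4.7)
The plan is to combine the per-sample-path oracle bound of Lemma~\ref{lemma:large-market-1} with a law of large numbers argument. I first bound any Markov policy's steady-state expected per-unit-time revenue by the oracle benchmark $\E[R^*(m,n)]$, then show the oracle's per-item revenue concentrates at $R^*$ as $\l,\m\to\infty$ with $\rho=\l/\m$ fixed.

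Fix $M\in\mathcal M$ and let $\bar R_M$ denote its steady-state expected revenue per unit time. By Lemma~\ref{lemma:large-market-1}, attributing each buyer's payment to his arrival interval and invoking stationarity,
$$
\bar R_M \;\le\; \E[R^*(m,n)],
$$
where $m\sim\mathrm{Poisson}(\m)$ and $n\sim\mathrm{Poisson}(\l)$ are the item and buyer arrivals in a unit interval. Dividing by $\m$ gives the per-item bound $\bar R_M/\m \le \E[R^*(m,n)]/\m$. It remains to show that this right-hand side converges to $R^*$ in the large-market limit. By the Poisson LLN, $m/\m\to 1$ and $n/\l\to 1$ almost surely, so $n/m\to\rho$. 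Conditional on $(m,n)$, the $(m{+}1)$-th order statistic of $n$ iid $F$-draws converges in probability to $\tilde v$, so the oracle's selling price $\max\{v^{(m+1)},\hat v_0\}$ converges to $\tilde v_0$, and the quantity sold divided by $\m$ converges to $\rho[1-F(\tilde v_0)]$ (the $\min(m,\cdot)$ in the allocated quantity is resolved by $\rho[1-F(\tilde v_0)]\le 1$, which holds by definition of $\tilde v$). Hence $R^*(m,n)/\m\to \rho[1-F(\tilde v_0)]\tilde v_0 = R^*$ almost surely. Since $R^*(m,n)/\m \le m/\m$ and $\{m/\m\}$ is uniformly integrable in $\m$ (its second moments are bounded), dominated convergence upgrades this to convergence in expectation, yielding $\bar R_M/\m \le R^* + o(1)$.

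The main obstacle is the first step: transferring Lemma~\ref{lemma:large-market-1}'s per-sample-path bound to the infinite-horizon steady-state revenue. The subtlety is that the Markov chain couples revenue across time---a buyer arriving in one unit interval may pay only after receiving a good that arrives in another---so a naive per-interval partition does not sum cleanly to the total. I would handle this by attributing each buyer's entire expected payment to his arrival interval and using stationarity to identify $\bar R_M$ with the per-unit-interval expected attributed revenue; equivalently, applying Lemma~\ref{lemma:large-market-1} on a long horizon $[0,T]$ with an $O(1)$ boundary correction (absorbing payments not yet settled from buyers still queued at time $T$, whose expected number is finite under any policy with bounded long-run revenue) and dividing by $T$ yields the same bound.
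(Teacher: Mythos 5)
Your proof follows essentially the same route as the paper: bound the policy's revenue by the oracle benchmark via Lemma~\ref{lemma:large-market-1} applied to the arrivals in a unit interval, then let $\lambda,\mu\to\infty$ so that $n/m\to\rho$ and the oracle's per-unit revenue converges to $R^*$. The paper's version is just a terse limsup chain (normalizing by the realized $m$ rather than by $\mu$), and your added care about attributing payments to arrival intervals and about uniform integrability simply fills in steps the paper leaves implicit, so the argument is correct.
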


We now present our main result for this section.

\begin{thm}\label{thm:large-market}
The expected per-unit revenue under the optimal policy $\phi^*$ converges to the oracle benchmark $R^*$ if (a)
 $\l,\m\to \infty$ with $\rho=\l/\m$ held constant, or (b) $c \to 0$, or (c) $d \to 0$. 
\end{thm}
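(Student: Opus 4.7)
The plan is to sandwich the optimal per-unit revenue between matching bounds. The upper bound $\limsup \mathrm{Rev}_{\mathrm{OPT}}/\m \le R^*$ is immediate from \cref{lemma:large-market-2}, so the substance lies in producing, for each of the three limiting regimes, a \emph{feasible} Markov policy whose per-unit revenue converges to $R^*$. Since the optimum dominates any feasible policy, this matching lower bound pins down the limit.

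I would construct the auxiliary policies as simple single-threshold rules, splitting into two subcases based on whether $\tilde v_0 = \hat v_0$ (``supply-heavy'', $\rho[1-F(\hat v_0)] \le 1$) or $\tilde v_0 = \tilde v > \hat v_0$ (``demand-heavy''). In the supply-heavy subcase, the policy admits any buyer with value at least $\hat v_0$ and charges the monopoly price, using either a buyer queue (for cases (a) and (b)) or an inventory (for case (c)) as the sole active buffer. In the demand-heavy subcase, the policy uses a threshold $\tilde v - \epsilon$ for a small slack $\epsilon > 0$, again with either a buyer queue (for (a) and (b)) or an inventory (for (c)) as the buffer. Each resulting state process is a finite-capacity $M/M/1$ birth-death chain whose stationary distribution is available in closed form, and the caps and slacks are jointly tuned to the regime: for example, in the supply-heavy case (c), I would take the inventory cap $L_d$ with $L_d \to \infty$ and $d L_d \to 0$ (e.g., $L_d = d^{-1/2}$); in the demand-heavy case (c), I would pair an uncapped inventory with slack $\epsilon = \sqrt d$, giving a stable inventory with mean of order $d^{-1/2}$ and storage-cost rate of order $\sqrt d$.

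For each of the six (subcase, regime) combinations, I would check: (i) the policy is feasible and Bayesian incentive compatible under the envelope payment rule combined with reimbursement of waiting/storage costs; (ii) the associated birth-death chain's stationary distribution has blocking probability that vanishes in the relevant limit, so that the gross per-unit revenue converges to $\rho[1-F(\tilde v_0)] \tilde v_0 = R^*$; (iii) the per-unit waiting or storage cost, equal to $c \cdot E[\mathrm{queue}]/\m$ or $d \cdot E[\mathrm{inv}]/\m$, vanishes in the limit. Combining (ii) and (iii) yields $\liminf \mathrm{Rev}_{\mathrm{OPT}}/\m \ge R^*$, which closes the argument.

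The main obstacle I anticipate is the joint tuning of slack $\epsilon$ and buffer cap in the demand-heavy case (c): stability of the inventory requires the threshold strictly below $\tilde v$, yet bringing it too close inflates the mean inventory and hence the storage cost. The delicate balance $\epsilon = \Theta(\sqrt d)$ simultaneously delivers revenue loss $O(\sqrt d)$ and storage cost $O(\sqrt d)$, both vanishing. Parallel balancings (for example, $K_\m = \sqrt \m$, $\epsilon_\m = \m^{-1/4}$ for case (a); $K_c = c^{-1/2}$, $\epsilon_c = c^{1/4}$ for case (b)) handle the other regimes, after which the verification reduces to routine $M/M/1/N$ bookkeeping.
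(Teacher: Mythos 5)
Your proposal is correct and shares the paper's skeleton: the upper bound is exactly \cref{lemma:large-market-2}, and the lower bound comes from exhibiting a simple posted-price threshold policy whose buffer (queue or inventory) costs vanish, verified by birth--death stationary calculations. Where you diverge is in the construction of that policy. The paper never splits into supply-heavy and demand-heavy subcases: it always prices at $\tilde v_0+\delta$, i.e.\ \emph{above} $\tilde v_0$, so the active buffer is underloaded. In regime (b) this means an uncapped stable $M/M/1$ buyer queue with a \emph{fixed} $\delta$ suffices (finite mean queue, so the per-unit waiting cost $c\,\E[\text{queue}]/\m$ vanishes as $c\to0$ with no cap or $\sqrt{\cdot}$ balancing); in the demand-heavy case this deliberately wastes goods when the queue is empty, which is harmless because every \emph{admitted} buyer is served and $\rho[1-F(\tilde v_0+\delta)](\tilde v_0+\delta)\to R^*$ by continuity. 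In regime (c) the paper caps the inventory at $\sqrt{\m}$ so the stockout probability is at most $1/(\bar L+1)\to0$, and it obtains (a) from (b)/(c) by time rescaling rather than by a separate tuning. Your route instead prices \emph{below} $\tilde v$ in the demand-heavy case and runs an overloaded capped buffer (or an uncapped stable inventory in (c)), with jointly tuned slack and cap. This works: the empty-buffer probability is at most $1/(K+1)$ at critical or super-critical load, all goods (resp.\ admitted buyers) are sold, and your $\epsilon=\Theta(\sqrt d)$, $dL_d\to0$, $cK/\m\to0$ balancings make both the price concession and the holding cost vanish (the $\Theta(\epsilon)$ estimates use $f$ bounded away from zero, which holds since $f$ is continuous and strictly positive on $[0,1]$). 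Two minor observations: the slack below $\tilde v$ in the demand-heavy queue case is not actually needed---pricing at $\tilde v$ (or slightly above, as the paper does) already gives empty-queue probability $O(1/K)$, or even an uncapped stable queue---so the ``delicate balance'' you flag as the main obstacle can be sidestepped entirely; and your plan should note, as the paper does, that (a) is implied by (b) or (c) up to rescaling time, or else carry out the (a) tuning you sketch. In short, your argument buys explicit convergence rates at the cost of a case analysis the paper's single above-threshold price renders unnecessary.
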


To understand why these conditions are related, observe that multiplying the parameters $\lambda, \mu, c,$ and $d$ by a constant $k > 0$ is equivalent to rescaling time, which does not affect the optimal policy or per-unit revenue. 
Therefore, the vanishing cost limit $c \to 0$ (case (b)) is isomorphic to a large-market limit $\l,\m,d \to \infty$ with $\l/\m$ and $\l/d$ held constant, and $d \to 0$ (c) is equivalent to $\l,\m,c \to \infty$ with $\l/\m$ and $\l/c$ held constant. Hence, the general large-market convergence (a) follows directly from establishing either (b) or (c).

To prove (b),  consider a simple  policy which assigns an incoming good to a waiting buyer if the buyer queue is non-empty and discards the good otherwise, and queues a buyer if and only if his value is above $\tilde v_0 + \d$ for an arbitrarily small $\d>0$ (see \cref{fig:large-market}). This ensures that the service rate $\mu$ is larger than the ``effective arrival rate'' $\l[1-F(\tilde v_0 + \d)]$.  This ``slack'' guarantees that every admitted buyer is eventually served, while the stable queue size ensures that average waiting times remain bounded. Consequently, as the waiting cost parameter $c \to 0$, the efficiency loss from queuing vanishes, and the per-unit revenue converges to the oracle benchmark $R^*$.   Since this simple policy achieves the bound, the optimal policy must necessarily do so as well.

\begin{figure}[H]
\centering
\begin{tikzpicture}
\begin{groupplot}[
    group style={
        group size=1 by 1, %
        horizontal sep=40pt, %
    },
    height=19em,
    xlabel={},
    ylabel={$v$},
    xlabel style={below}, %
    ylabel style={rotate=90, above}, %
    xmin=0, xmax=1.1, %
    ymin=0, ymax=1.1, %
    axis lines=middle,
    axis equal,
    unit vector ratio*=1 1 1,
    xtick={0,1}, %
    ytick={0,1}, %
    grid=none, %
]

\nextgroupplot
\addplot[name path=A, domain=0:1, samples=100, color=black, thick] {1-x};
\addplot [name path = B, color=black, dashed] coordinates {(0, 0.5) (0.5, 0.5)}; %
\addplot [name path = B, color=black, dotted, thick] coordinates {(0, 0.55) (0.45, 0.55)};
\addplot[name path=C, domain=0:1, samples=100, color=black, thick] coordinates {(0.5,0) (0.5,1)};
\addplot[domain=0:1, samples=100, color=black, thick, dotted] coordinates {(0.45,0) (0.45,1)};
\node at (axis cs:0.9,0.35) {$\frac{\l}{\m}[1-F(v)]$};
\pgfplotsset{ytick={0,1,0.5, 0.56, 0.3}, yticklabels={0,1,$\tilde v$, $\tilde v_0 +\d$, $\hat v_0$}} ;
\pgfplotsset{xtick={0,0.5}, xticklabels={0,1}};

\end{groupplot}
\end{tikzpicture}
\caption{Proof idea for \cref{thm:large-market}}
\label{fig:large-market}
\end{figure}

Analogously, to prove part (c), we consider the limit where the holding cost $d \to 0$. We construct a simple policy that caps inventory at $\bar{L} := \lfloor \sqrt{\mu/d} \rfloor$. When a buyer arrives, he is served at a price $\tilde{v}_0 + \delta$ if an item is available; otherwise, he is turned away. As $d \to 0$, the inventory cap $\bar{L}$ grows sufficiently large that the probability of a stockout vanishes. Meanwhile, the total expected holding cost scales with $d \bar{L} \approx \sqrt{d}$, which tends to zero. Consequently, the per-unit revenue of this simple policy converges to the oracle benchmark $R^*$, implying the same for the optimal policy.

Finally, these large-market results imply that No Allocation Delay (NAD) entails no loss of optimality in the limit. As the market grows large or frictions vanish, the revenue of any policy---even one that theoretically violates NAD by hoarding goods and buyers simultaneously---is bounded above by the oracle benchmark $R^*$. Since the optimal policy under NAD achieves this bound, the restriction becomes non-binding in the limit.

\section{Welfare maximization}\label{sec:welfare}

Thus far, our analysis has focused on revenue maximization. However, the framework readily extends to a social planner's problem of maximizing a weighted sum of revenue and buyer welfare. Suppose the planner assigns a weight of 1 to the seller's revenue and a weight $w \in [0,1]$ to the buyers' net surplus. The optimal policy for this objective is obtained simply by replacing the standard virtual value function $J(v)$ in the relaxed program $[\widetilde{\mathcal P}]$  with the weighted virtual value:\footnote{To derive this, we apply the standard envelope theorem to substitute payments in the weighted objective function. This yields an objective equivalent to maximizing the virtual surplus generated by $J_w(v)$, minus a term proportional to the lowest type's utility, $(1-w)U(0)$. For weights $w \le 1$, it is optimal to set the lowest type's utility to zero. For weights $w > 1$, the problem requires a seller participation constraint (e.g., non-negative revenue) to remain bounded; the solution then maximizes total welfare ($w=1$) and redistributes revenue to buyers.}
$$J_w(v) := v - (1-w)\frac{1-F(v)}{f(v)}.$$

Consequently, the welfare-maximizing policy retains the same threshold structure as the revenue-maximizing policy. In the case of pure welfare maximization ($w=1$), the virtual value reduces to the true valuation ($J_1(v)=v$). This eliminates the distortion arising from incentive constraints, fully aligning the admission policy with allocative efficiency.

\begin{prop}
    Given a Pareto weight of $w \in[0,1]$ on the buyers, welfare is maximized by replacing the thresholds of the optimal policy with $(\hat v_{-L^w},\ldots,\hat v_{-1}^w,\hat v_1^w, \ldots, \hat v_{K_w}^w)$, which are obtained by replacing $J(v)$ with $J_w(v)$.
\end{prop}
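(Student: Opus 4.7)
The plan is to reduce the welfare-maximization problem to the revenue-maximization program $[\widetilde{\mathcal P}]$ already solved in \cref{thm:relaxed-program-storable} and implemented in \cref{thm:optimal-policy-storable}, with $J$ replaced by the welfare virtual value $J_w$. First, I would write the social planner's objective as
\[
\l\int_0^1\!\!\big[\,w\bigl(vX(v)-T(v)-cW(v)\bigr)+T(v)\,\big]f(v)\,\dd v-d\sum_{\ell=1}^\infty \ell q_\ell,
\]
subject to $(IR),(IC),(\widetilde{RF})$, and the balance conditions $(B_k),(B_{-\ell})$. Since the $(IC)$ envelope formula pins down $U(v)=\int_0^v X(s)\,\dd s+U(0)$, and since $w\le 1$ makes the coefficient on $U(0)$ nonpositive, it is optimal to set $U(0)=0$. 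Combining this with Little's law—exactly as in the derivation of $[\widetilde{\mathcal P}]$ from the original seller's problem—the objective collapses to
\[
\l\int_0^1 J_w(v)X(v)f(v)\,\dd v-c\sum_{k=1}^\infty k p_k(0)-d\sum_{\ell=1}^\infty \ell q_\ell,
\]
where $J_w(v):=v-(1-w)\tfrac{1-F(v)}{f(v)}=wv+(1-w)J(v)$. This is structurally identical to $[\widetilde{\mathcal P}]$, with $J$ replaced by $J_w$.

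Next, I would verify that $J_w$ inherits the regularity needed for the proofs of \cref{thm:relaxed-program-storable,thm:optimal-policy-storable}. Because $J$ is strictly increasing and $wv$ is strictly increasing in $v$, the convex combination $J_w=wv+(1-w)J(v)$ is strictly increasing in $v$ for every $w\in[0,1]$. Moreover $J_w(0)=-(1-w)/f(0)\le 0$, with $J_w(0)=0$ precisely when $w=1$ (which simply moves the first positive threshold to $0$ and is harmless for the argument). Hence every place the proofs of \cref{thm:relaxed-program-storable,thm:optimal-policy-storable} invoke monotonicity of the virtual value (to justify assigning goods to highest-value buyers in the queue, to ensure pointwise maximization of the pseudo-objective, and to obtain uniqueness of the iterative threshold equations) remains valid verbatim with $J_w$.

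With these checks in place, applying \cref{thm:relaxed-program-storable} to $J_w$ yields existence of thresholds $\hat v_{-L_w}^w<\cdots<\hat v_{-1}^w<\hat v_1^w<\cdots<\hat v_{K_w}^w$ characterized by the natural analogues of \eqref{eq51} (with $J'$ replaced by $J_w'$) together with the indifference conditions on the storage side. The corresponding marginals $(P_k^w, q_\ell^w)$ and the interim allocation $X^w$ are given by the formulas in \cref{thm:relaxed-program-storable} evaluated at the new thresholds.

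Finally, the Markov policy $M^{*,w}$ obtained from $M^{*}$ by substituting $(\hat v_{-L_w}^w,\ldots,\hat v_{K_w}^w)$ for $(\hat v_{-L^*},\ldots,\hat v_{K^*})$ implements this solution: the verification in \cref{thm:optimal-policy-storable}—that the policy induces the claimed stationary marginals, that the gambler's-ruin computation yields $X^w$, and that the envelope-based payment rule secures $(IC)$ and $(IR)$—carries over unchanged, as none of those arguments used the specific formula of $J$ beyond the monotonicity of the resulting thresholds. I expect no serious obstacle; the only subtlety is the boundary case $w=1$, where $J_w(0)=0$ collapses the monopoly distortion, potentially producing $\hat v_1^w=0$ and so a policy that never refuses entry to an empty queue, which is consistent with the welfare intuition highlighted in the introduction.
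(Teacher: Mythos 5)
Your proposal is correct and follows essentially the same route as the paper: the paper's own argument is exactly the substitution of $J$ by $J_w(v)=v-(1-w)\tfrac{1-F(v)}{f(v)}$ in the relaxed program, justified by the envelope formula, the optimality of $U(0)=0$ when $w\le 1$, and Little's law, after which the existing analysis (Theorems $1'$ and $2'$) delivers the threshold structure. Your additional checks—that $J_w=wv+(1-w)J(v)$ remains strictly increasing and that the boundary case $w=1$ (where $J_w(0)=0$) is harmless—are consistent with, and slightly more explicit than, what the paper states.
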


Given that the admission thresholds in our optimal policy play a role similar to that of a reserve price in a static auction, one might expect that placing greater weight on buyer welfare would lead to a lower threshold. Focusing on the case without storage ($d = \infty$), the following proposition states that this intuition holds only when the queue is relatively short. When the queue is already long, a social planner who places a larger weight on buyer utility is, in fact, \textit{more} selective in admitting a new buyer.

\begin{prop}[Comparative Statics in $w$] \label{prop:comp-stat-in-pareto-weight} Suppose $d=\infty$, and suppose the inverse hazard rate $(1-F(\cdot))/f(\cdot)$ is non-increasing. For any Pareto weights $w,w' \in [0,1]$ with $w > w'$, there exists $\bar k \in \{1,2,\ldots,\min\{K_{w'}, K_w\}\}$ such that $\hat v^w_k \leq \hat v^{w'}_k$ for $k \leq \bar k$, and $\hat v^w_k > \hat v^{w'}_k$ for $k \geq \bar k+1$.
\end{prop}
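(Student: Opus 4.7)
The plan is to leverage the explicit threshold equations from \Cref{thm:relaxed-program}. Replacing $J$ by $J_w$ in $[\mathcal{P}]$ yields $J_w(\hat v_1^w) = c/\mu$ and, for $k \geq 2$,
\begin{equation*}
    \int_{\hat v_{k-1}^w}^{\hat v_k^w} \frac{P_1^*(v)}{P_k^*(0)}\,J_w'(v)\,dv \;=\; \frac{c}{\mu}.
\end{equation*}
The footnote of \Cref{thm:relaxed-program} shows that on $[\hat v_{k-1}^w, \hat v_k^w)$ we have $P_1^*(v)/P_k^*(0) = 1/\sigma_k(v)$, where $\sigma_k(v) := 1 + \rho(v) + \cdots + \rho(v)^{k-1}$ and $\rho(v) := \lambda[1-F(v)]/\mu$. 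The critical observation is that $\sigma_k$ depends only on the primitives $(\lambda,\mu,F)$---not on $w$ and not on any other thresholds. Setting $h(v) := (1-F(v))/f(v)$, we have $J_w(v) = v - (1-w)h(v)$, which is pointwise \emph{increasing} in $w$; and under the non-increasing inverse-hazard assumption ($h' \leq 0$), $J_w'(v) = 1 - (1-w)h'(v) \geq 1 > 0$ is pointwise \emph{decreasing} in $w$.

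From these monotonicities, the base case $k=1$ is immediate: $J_{\tilde w}(\hat v_1^{\tilde w}) = c/\mu = J_w(\hat v_1^w) \geq J_{\tilde w}(\hat v_1^w)$, so $\hat v_1^w \leq \hat v_1^{\tilde w}$. Hence $\{k : \hat v_k^w \leq \hat v_k^{\tilde w}\}$ contains $k=1$.

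The heart of the proof is a \emph{persistence} property: if $\hat v_k^w > \hat v_k^{\tilde w}$, then $\hat v_{k+1}^w > \hat v_{k+1}^{\tilde w}$. Suppose toward contradiction that $\hat v_k^w > \hat v_k^{\tilde w}$ but $\hat v_{k+1}^w \leq \hat v_{k+1}^{\tilde w}$. Then $(\hat v_k^w, \hat v_{k+1}^w) \subsetneq (\hat v_k^{\tilde w}, \hat v_{k+1}^{\tilde w})$, and since the integrand $J_w'/\sigma_{k+1}$ is everywhere positive,
\begin{equation*}
    \frac{c}{\mu} \;=\; \int_{\hat v_k^w}^{\hat v_{k+1}^w} \frac{J_w'(v)}{\sigma_{k+1}(v)}\,dv \;<\; \int_{\hat v_k^{\tilde w}}^{\hat v_{k+1}^{\tilde w}} \frac{J_w'(v)}{\sigma_{k+1}(v)}\,dv \;\leq\; \int_{\hat v_k^{\tilde w}}^{\hat v_{k+1}^{\tilde w}} \frac{J_{\tilde w}'(v)}{\sigma_{k+1}(v)}\,dv \;=\; \frac{c}{\mu},
\end{equation*}
where the last inequality uses $J_w' \leq J_{\tilde w}'$---a contradiction.

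By contrapositive, $\{k : \hat v_k^w \leq \hat v_k^{\tilde w}\}$ is a downward-closed subset of $\mathbb{N}$ containing $k=1$; taking $\bar k$ to be its maximum (with the natural convention $\hat v_k^\cdot := 1$ beyond the respective maximum queue length, and noting that applying the same persistence reasoning to the iterative construction gives $K_w \leq K_{\tilde w}$) delivers the desired $\bar k \in \{1,\ldots,\min\{K_w,K_{\tilde w}\}\}$. The main obstacle is conceptual rather than algebraic: the success of this monotone-comparative-statics argument hinges on the fact that the weight $1/\sigma_k$ is independent of $w$ and of all other thresholds---a feature special to the $d=\infty$ setting, which is precisely why the proposition is stated under that restriction (and why the analogous statement for the storable case, where thresholds are intertwined through $Q_1^*$, would require a different argument).
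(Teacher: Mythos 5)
Your proof is correct and is essentially the paper's own argument: the explicit threshold recursion with weight $1/(1+\rho(v)+\cdots+\rho(v)^{k-1})$ independent of $w$, the base case from monotonicity of $J_w$ in $w$, and the persistence step by contradiction comparing the two integral equations using $J'_{\tilde w}\ge J'_w\ge 1$ under the non-increasing inverse hazard rate. One caveat: your parenthetical claim that the same reasoning yields $K_w\le K_{\tilde w}$ is not justified and can fail when the thresholds never cross (with identical $J'_w$ and $J'_{\tilde w}$, e.g.\ a constant inverse hazard rate, the $w$-thresholds stay strictly below and $K_w$ can exceed $K_{\tilde w}$), but this is immaterial since $\bar k$ is only required to lie in $\{1,\ldots,\min\{K_w,K_{\tilde w}\}\}$ and the comparison beyond the common range is vacuous.
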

\cref{prop:comp-stat-in-pareto-weight} establishes a single-crossing property: for two weights $w>w'$, the thresholds are initially lower under $w$ when the queue is short, but they increase more quickly in queue length and may eventually overtake the thresholds under $w'$ (see \cref{fig:comp-stat-w}). 

To understand this result, we can compare the pure welfare maximizer ($w=1$, the Planner) and the revenue maximizer ($w=0$, the Seller). Two countervailing effects drive the single-crossing property.

\begin{figure}[htb]
	\begin{center}
		\includegraphics[width=10cm]{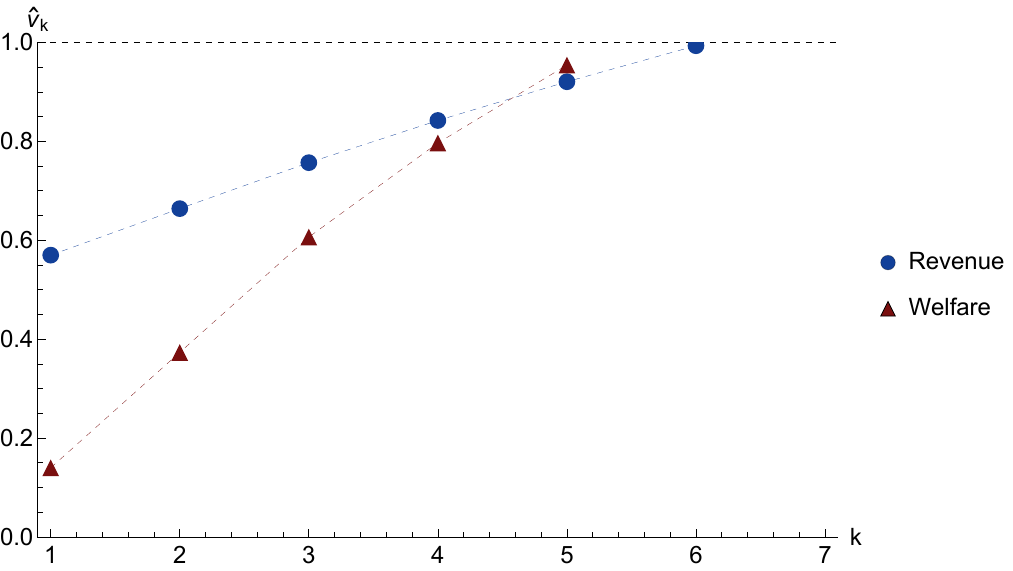}
	\end{center}
	{\footnotesize For $k=1,2,3,4$, the revenue-maximizing thresholds ($w=0$) are higher than the welfare-maximizing thresholds ($w=1$). For $k=5$, this is reversed.  \par}
		\caption{Comparative statics of thresholds in Pareto weight $w$ on buyers' utilities}\label{fig:comp-stat-w}
\end{figure}

The first is a {\it level} effect. The Planner values a match at the buyer's true valuation $v$, whereas the Seller values it at the virtual valuation $J(v)$. Since $J_1(v) = v > J_0(v)$ for all $v<1$, the Planner places a greater intrinsic value on the allocation than the Seller, who must deduct information rents from the social surplus. Consequently, the Planner is more concerned with a good being discarded than the Seller is. When the queue is empty or short, this effect dominates: the Planner lowers admission standards (e.g., $\hat{v}_1^{w=1} < \hat{v}_1^{w=0}$) to ensure the queue is nonempty, so that an arriving good is not wasted.

The second is a {\it slope} effect. While the Planner's valuation is higher in absolute terms, the Seller's  virtual value increases more rapidly in the buyer's type. Under the monotone hazard rate assumption, the marginal virtual value for the Seller ($J'_0(v)$) is strictly higher than that for the Planner ($J'_1(v)=1$). This is because admitting a higher-value buyer serves a dual purpose for the Seller: it not only secures a higher value for the potential transaction but also exerts greater competitive pressure, reducing the information rents accruing to other high-value buyers. This rent extraction motive makes the Seller marginally more eager to admit high-value types than the Planner, who views rents as neutral transfers. This effect encourages the Seller to be more permissive (i.e., to keep thresholds lower) as the queue grows.

Finally, a {\it dynamic feedback} effect reinforces the crossing. Recall that the admission of an additional buyer serves as ``insurance" for the future stream of surplus against the random arrival of goods. Because the Planner is more permissive when the queue is short (due to the level effect), the Planner's queue already provides substantial insurance against buyer stockouts. Having secured this safety buffer, the Planner has less need to admit marginal buyers as the queue lengthens. In contrast, the Seller, initially more restrictive, is not as insured as the Planner and must continue to add buyers to maintain competitive pressure. Together, these forces cause the Planner's thresholds to start lower but rise more rapidly, eventually overtaking the Seller's.

\section{Related Literature} \label{sec:lit-review} 
This paper integrates the classic mechanism design literature, exemplified by \cite{myerson1981}, with the rational queueing theory initiated by \cite{naor1969regulation}. While classic auction design typically assumes a static setting, our model captures the dynamic, stochastic nature of modern digital marketplaces. Essentially, we embed a Myersonian auction design problem within the dynamic framework of queueing systems.
Adapting standard mechanism design to this environment is non-trivial: the chosen mechanism is fully dynamic, compatible with the stochastic arrival of supply, and with allocations promised to both past and future agents. Methodologically, we provide a tractable linear programming (LP) formulation where we capture dynamic feasibility and stationarity with two new constraints, in addition to the standard IC and IR constraints. This approach provides a recipe that can be applied to steady-state mechanism design problems beyond the current setting.

 From the perspective of rational queueing theory, this paper applies a full-fledged mechanism design framework, optimizing all dimensions of system design---entry, exit, allocation, and information provision.\footnote{Several papers explore mechanism design with constraints on the designer's policy space.  Examples include \cite{naor1969regulation} and \cite{hassin1985}, and \cite{leshnoAER}, which study alternative queueing rules under complete information.  \cite{lingenbrink2019optimal} and \cite{Anunrojwong2020} study optimal information policy under first-come-first-served (FCFS) as a queueing rule. }  The closest antecedent is  \cite{che2021optimal}, who similarly adopt a design perspective on queue management. Our analysis departs from theirs in two key respects. First, we introduce heterogeneous, privately known preferences, whereas they assume homogeneous agents. This adds a screening dimension: the optimal policy must determine which agents should enter or remain in the queue based on their valuations. Second, we assume transferable utility, whereas they focus on a non-transferable utility setting. While transferable utilities allow the designer to internalize waiting costs through reimbursements, the waiting costs play a key role, driving the state-dependence of optimal reserve prices. Methodologically, we extend the steady-state approach of \cite{che2021optimal} to accommodate the infinite-dimensional state space inherent to environments with heterogeneous types.

Our work also relates to the literature on optimal screening in queues with heterogeneous preferences \citep{mendelson1990optimal, afeche2013incentive, afeche2016optimal, kittsteiner2005priority}. These papers also study mechanism design with transferable utilities, in which agents differ in their values for service, delay sensitivity, or processing time. The crucial distinction is that these mechanisms are typically ``static'': the screening rules (e.g., priority prices) do not condition on the current state of the queue. By contrast, we analyze a fully dynamic mechanism where the admission and allocation policies depend explicitly on the evolving profile of buyer types currently in the system.

This paper also relates to the literature on dynamic revenue management \citep{gallego1994optimal,gershkov2009dynamic,gershkov2009learning,pai2013optimal,board2016revenue,gershkov2018revenue}. These works typically study the allocation of a fixed inventory to buyers who arrive stochastically over a finite horizon (as with airline tickets sales). Consequently, the analysis is fundamentally non-stationary because the opportunity cost of allocation varies with remaining time and inventory. In contrast, our framework models a two-sided stochastic process where both buyers and goods arrive continuously. This necessitates a steady-state analysis, in which the opportunity cost is determined by the stationary distribution of the queue and inventory rather than by a terminal deadline.\footnote{More broadly, the literature has studied optimal dynamic mechanisms when agents receive private information across multiple periods \citep{bergemann2010dynamic, athey_efficient_2013,pavan_dynamic_2014}.}

Finally, we contribute to the growing literature on dynamic matching \citep{akbarpour2017thickness, 
  baccara2020optimal, leshno2019dynamic, doval2018efficiency, ashlagi2023matching}.  While these papers typically study non-transferable utility (NTU) environments, we consider fully transferable utilities (TU) and accommodate a broad spectrum of preference heterogeneity, akin to Myerson's classic work.\footnote{Within the TU domain, \cite{fershtman2022matching} also explore dynamic two-sided matching; however, their model focuses on agents who rematch every period, whereas we analyze the permanent allocation of goods to departing buyers.}

 \section{Conclusion} \label{sec:conclude}

 Motivated by applications such as cloud computing, gig platforms, and blockchain auctions, we have studied the design of optimal selling mechanisms in a dynamic environment where buyers and goods arrive asynchronously and stochastically. We derive the revenue-optimal dynamic mechanism at steady state by formulating and solving an infinite-dimensional linear programming problem. The resulting optimal policy is intuitive and implementable: it employs dynamic reserve prices that strictly increase with the queue length to manage competitive pressure, paired with efficient auctions that allocate goods to the highest-value buyers. This policy effectively smooths demand and supply over time, balancing the immediate gains from trade against the option value of future market thickness.

Our analysis paves the way for several extensions. First, one might generalize the service process beyond the M/M/1 structure, as considered by \cite{che2021optimal}.  Second, the framework could be extended to a {\it double auction setting} where goods are not merely ``arriving'' but are supplied by strategic agents with heterogeneous, privately known costs. Third, relaxing the assumption of homogeneous goods to allow for vertical or horizontal differentiation would be a significant step toward capturing the full complexity of platform exchanges. These extensions, among others, remain fruitful avenues for future research.

\bibliographystyle{economet}
\bibliography{reference}

\appendix

\section{ Proof of \cref{thm:relaxed-program,thm:relaxed-program-storable}} \label{sec:appendix-relaxed-program}

In this section, we first prove \cref{thm:relaxed-program-storable}, where we allow for $d<\infty$. \ref{subsec:service-scenario} then explains how the proof of \cref{thm:relaxed-program} follows if one sets $d=\infty$.

To prove \cref{thm:relaxed-program-storable}, we take a guess-and-verify approach. We rewrite the relaxed program as a linear program and set up the Lagrangian. We guess the binding set of constraints, and choose the multiplier for each binding constraint so that the coefficient of a relevant primal variable in the Lagrangian equals 0 (\ref{subsec:multipliers}). This gives us freedom in choosing the primal variables, and we choose them so that complementary slackness holds (\ref{subsec:choie-of-primal-variables}). Then, weak duality arguments will imply that our choice of primal variables is the optimal solution to the relaxed program (\ref{subsec:weak-duality}).

Here, we assume that there exist exogenous finite upper bounds $K,L\in \N$ such that the number of buyers in the queue at any given point in time cannot exceed $K$, and the number of goods in storage cannot exceed $L$. We choose $K$ and $L$ that are sufficiently large, in a sense we will make precise below.
In \cref{subsec:K=L=infty}, we show that imposing these upper bounds is without loss of optimality.

It will be convenient to work with $P_k(v)$ and $Q_\ell \:= \sum_{i=\ell}^{L} q_i$, rather than $p_k(v)$ and $q_\ell$.
The choice variables for the relaxed program can be represented by the tuple 
$$(\{P_k\}_{k\in\{1,\ldots,K\}},\{Q_\ell\}_{\ell\in\{1,\ldots,L\}}, \{y_k\}_{k\in\{1,\ldots,K\}},\{z_\ell\}_{\ell\in\{1,\ldots,L\}},X(v) ),$$
where each $P_k$ is a CDF, each $Q_\ell$ is a real number in $[0,1]$, and each $y_k$, each $z_\ell$, and $X$ are measurable functions from $[0,1]$ to itself.
The relaxed program is
\begin{align*}
    [\widetilde{\mathcal{P}}] \quad \max_{P,Q,y,z,X} &\, \l \int_0^1 J(v)X(v)f(v)dv - c \sum_{k=1}^K k[P_{k+1}(0)-P_k(0)]  - d \sum_{\ell=1}^L \ell (Q_\ell - Q_{\ell+1}) \\
    \text{subject to } &\,(\widetilde{RF}),\, (B_k)\,\,\forall k, \mbox{ and } (B_{-\ell})\,\,\forall \ell.
\end{align*}
We linearize the program via the change of variables $Y_k(v) := y_k(v)(P_{k+1}(v) - P_k(v))$ and $Z_\ell(v) := (Q_\ell - Q_{\ell+1})f(v)z_\ell(v)$. Assigning multipliers $\alpha, \beta, \gamma, \eta,$ and $\kappa$ to the constraints $(\widetilde{RF})$, $(B_k)$, $(B_{-\ell})$ and the bounds on $Y_k$ and $Z_\ell$, and applying integration by parts, the Lagrangian $\ms L$ becomes
\begin{align}
    & \int_0^1 \l \left(J(v) - \int_0^v \alpha(s)ds\right) X(v)f(v)dv \label{eq44} \\
    & + \sum_{k=1}^K \int_0^1 \bigg( \mu\alpha(v) - \mu\beta_k(v) - \eta_k(v)  \bigg) Y_k(v)dv \label{eq45} \\
    & + \sum_{k=1}^K \int_0^1 \bigg( \l[1-F(v)](\beta_k(v) - \beta_{k+1}(v)) - (\eta_k(v)-\eta_{k-1}(v)) \bigg) (P_k(v) - P_k(0)) dv \label{eq20} \\
    & + \sum_{k=2}^K \left(\int_0^1 \Big(\l[1-F(v)](\beta_k(v) - \beta_{k+1}(v)) - (\eta_k(v)-\eta_{k-1}(v)) \Big) dv + c \right)  P_k(0) \label{eq22}\\
    & + \left(\int_0^1 \Big( \l[1-F(v)](\beta_1(v) - \beta_2(v)) - \eta_1(v)   \Big) dv + \mu \gamma_1+ c \right)  P_1(0) \label{eq24}\\
    & + \left( \int_0^1 \Big( - \l[1-F(v)]\beta_1(v)  + \kappa_1(v) f(v) \Big) dv - \mu(\gamma_1 - \gamma_2) - d\right) Q_1 \cr
    & + \sum_{\ell=2}^L \left( \int_0^1 \Big( \kappa_\ell(v) - \kappa_{\ell-1}(v)\Big) f(v)   dv - \mu(\gamma_\ell - \gamma_{\ell+1}) - d\right) Q_\ell \cr
    & + \sum_{\ell=1}^L \int_0^1 \left( \l\int_0^v\alpha(s)ds - \l\gamma_\ell - \kappa_\ell(v) \right) Z_{\ell} (v) dv \label{eq28}\\
    & + \int_0^1 \eta_K(v) dv - cK, \notag
\end{align}
where we have defined $\beta_0(v) = \beta_{K+1}(v) = \eta_0(v) = \eta_{-1}(v) = \gamma_0 = \gamma_{L+1} = \kappa_0(v) = \kappa_{-1}(v) = 0$, $P_0(0):=Q_1$, and $Q_0:=P_1(0)$. 

\subsection{Choice of Multipliers}\label{subsec:multipliers}

Take as given for now threshold values $\{\hat v_k\}_{k=1,\ldots,K}$ and $\{\hat v_{-\ell}\}_{\ell=1,\ldots,L}$ satisfying $\hat v_K \geq \ldots \geq \hat v_2 \geq \hat v_1 \geq \hat v_{-1} \geq \hat v_{-2} \geq \ldots \geq \hat v_{-L} \geq J^{-1}(0).$
We choose the Lagrangian multipliers as follows:
\begin{align}
    &\,\alpha(v) = \begin{cases}
        J'(v) & \text{if } v \geq J^{-1}(0); \\
        0 & \text{if } v < J^{-1}(0),
    \end{cases} \notag \\
    &\, \beta_k(v) = \begin{cases}
        \dfrac{\sigma(v)^k + \sigma(v)^{k+1} + \cdots + \sigma(v)^i}{1 + \sigma(v) + \cdots + \sigma(v)^{i}}J'(v) &\text{if } v\in [\hat v_{i},\hat v_{i+1}),\, k\in\{1,\ldots,i\}; \\
        0 & \text{if } v\in [\hat v_{i},\hat v_{i+1}),\, k=\{i+1, \ldots, K\},
                \end{cases} \label{beta} \\
    &\, \gamma_\ell = J(\hat v_{-\ell}), \notag\\
     &\,\eta_k(v) = \mu\alpha(v) - \mu\beta_k(v) \qquad \forall k\geq 1, \label{eq23}\\
    &\,\kappa_\ell(v) = \begin{cases}
        \l (J(v) - \gamma_\ell) & \text{if } v \geq \hat v_{-\ell};\\
        0 & \text{if } v<\hat v_{-\ell},
    \end{cases} \notag
\end{align}
where  $\sigma(v)\:= \mu/\l[1-F(v)]$.

\paragraph*{Choice of Thresholds.}

First, we take the goods thresholds $\{\hat v_{-\ell}\}$ as given and determine the buyer thresholds $\{\hat v_k\}$. We will later jointly determine both types of thresholds.

We choose the thresholds $\hat v_k$ iteratively, using the following algorithm.
\begin{enumerate}%
    \item Define
        \begin{align}
        H_k(\hat v_{k-1},\hat v_k)\:= - \m \int_{\hat v_{k-1}}^{\hat v_k} \frac{\sigma(v)^{k-1}}{1 + \sigma(v) + \cdots + \sigma(v)^{k-1}} J'(v) d v + c. \label{eq17}
     \end{align}
     Let
     \begin{align}
    \hat v_1 = \begin{cases}
        J^{-1}\left(\gamma_1 + c/\mu\right) \caseif \gamma_1 + c/\mu \leq 1 \\
        1 \caseif \text{otherwise.}
    \end{cases}\label{eq33}
    \end{align}
     Set $k=2$.
    \item If $H_k(\hat v_{k-1},1)\leq 0$, go to step (c); otherwise, go to step (d).
    \item Set $\hat v_k \in (\hat v_{k-1},1]$ so that $H_k(\hat v_{k-1}, \hat v_k) = 0$.\footnote{Such a $\hat v_k$ is unique because the integrand is strictly positive.} Increase $k$ by 1 and go to step (b). 
    \item Set $\hat v_k = \ldots = \hat v_K = 1$, and terminate.
\end{enumerate}
The algorithm terminates after a finite number of steps, since $\sigma(v)^{k-1} / (1 + \sigma(v) + \cdots + \sigma(v)^{k-1}) \in [0,1]$. The initial threshold $\hat v_1$ depends on $\g_1 = J(\hat v_{-\ell})$, which for now is taken as given. The expression $H_k$ represents the coefficient of $P_k(0)$ in the Lagrangian, so setting $H_k=0$ will allow us (in \ref{subsec:choie-of-primal-variables}) to choose the primal variables to maximize the Lagrangian and satisfy complementary slackness.

If $\hat v_1 = 1$, define $K^* = 0$, and otherwise, define $K^*$ to be the largest positive integer $k$ such that $\hat v_k<1$. $K^*$ will represent the maximum number of buyers that may be in the queue under the optimal policy. Note that the value of $K^*$ does not depend on the exogenous upper bound on queue length, $K$, as long as $K \geq K^*$. Hence, we could have chosen $K$ to be any integer greater than $K^*$.

It remains to determine the goods thresholds $\{\hat v_{-\ell}\}$. To encode the dependency of $\b_1$ on $\g_1 = J(\hat v_{-1})$, let us write $\b_1(v;\hat v_{-1})$. Define
\begin{align*}
A(\hat v_{-1}, \hat v_{-2})    := &\, -\int_0^1 \l[1-F(v)]\beta_1(v;\hat v_{-1})dv \\
    &\, + \int_{\hat v_{-1}}^1 \l(J(v) - J(\hat v_{-1})) \ddd F(v) - \mu(J(\hat v_{-1}) - J(\hat v_{-2})) -d\\
    B(\hat v_{-\ell+1}, \hat v_{-\ell}, \hat v_{-\ell-1}) := &\,\l[J(\hat v_{-\ell+1}) - J(\hat v_{-\ell})](1-F(\hat v_{-\ell+1})) \\
    &\, + \int_{\hat v_{-\ell}}^{\hat v_{-\ell+1}} \l(J(v)-J(\hat v_{-\ell})) \ddd F(v) - \mu(J(\hat v_{-\ell}) - J(\hat v_{-\ell-1})) - d.
\end{align*}
The expression $A$ represents the coefficient of $Q_1$ in the Lagrangian, whereas $B$ is the coefficient of $Q_\ell$ for $\ell \geq 2$.  
We choose a nonnegative integer $L^* \leq L$ and thresholds $\{\hat v_{-\ell}\}$ such that $v_\ell = J^{-1}(0)$ for all $\ell > L^*$, and the coefficient of $Q_\ell$ is 0 for $\ell\leq L^*$ and nonpositive for $\ell > L^*$. In \cref{subsec:IVT}, by iteratively appealing to the intermediate value theorem, we show that there exist such an $L^*$ and $\{\hat v_{-\ell}\}$.
By choosing the goods thresholds in this way, we will (in \ref{subsec:choie-of-primal-variables}) be able to choose the primal variables to maximize the Lagrangian and satisfy complementary slackness. 

The value $L^*$ represents the maximum number of goods that may be stored under the optimal policy. Analogously to the buyers' side, it would have been enough to set the exogenous upper bound $L$ on inventory size to be any integer greater than $L^*$.

Note that the buyer thresholds $\{\hat v_k\}$ and the goods thresholds $\{\hat v_{-\ell}\}$ are determined jointly. The choice of $\hat v_{-1}$ affects the values of the positive thresholds $\hat v_k$'s. This affects $\beta_1(v;\hat v_{-1})$, which in turn enters the expression $A(\hat v_{-1},\hat v_{-2})$ and thus affects the choice of negative thresholds.

\subsection{Choice of Primal Variables}\label{subsec:choie-of-primal-variables}

We now choose the primal variables $(P, Q, y, z, X)$. 
First, we choose
\begin{align*}
    y_k(v) = 1 \qquad \forall v\in[0,1], \,\, \forall k=1,\ldots,K.
\end{align*}
Given our choice of $\eta$, the coefficient of $Y_k(v)$ in \eqref{eq45} is 0, so the Lagrangian is trivially maximized with respect to $y$. Moreover, the constraint $Y_k(v) \leq P_{k+1}(v) - P_k(v)$ holds with equality, so complementary slackness\footnote{In what follows, we use the term \textit{complementary slackness} to mean that (i) the constraint is satisfied, and (ii) the constraint is satisfied with equality if the corresponding multiplier is strictly positive.} with respect to the multiplier $\eta$ is satisfied. Next, we choose
\begin{align*}
    z_\ell(v) = \mathbb{I}\{ v \geq \hat v_{-\ell} \} \qquad \forall \ell = 1, \ldots, L.
\end{align*}
Given our choice of $\kappa$, the coefficient of $Z_\ell(v)$ in \eqref{eq28} is equal to 0 when $v \geq \hat v_{-\ell}$ and nonpositive otherwise, so the Lagrangian is maximized with respect to $z$. Also, complementary slackness with respect to $\kappa$ is satisfied.

\subsubsection{Choice of $P$}\label{subsubsec:choice-of-P}

Here, we take $Q_1$ as given and pin down $\{P_k\}$ up to $Q_1$. Next, in \ref{subsubsec:choice-of-Q}, we will take $P_1(0)$ as given and pin down $\{Q_\ell\}$ up to $P_1(0)$. Then, we will argue that there exist $P_1(0)$ and $Q_1$ that are consistent with each other.

We wish to choose $P_k(v)$ so that the Lagrangian is maximized, and complementary slackness holds. 
Our choice of multipliers, along with the algorithm described in \ref{subsec:multipliers} for choosing the buyer thresholds, implies that the coefficient of $P_k(0)$ in the Lagrangian is positive for $k>K^*$ and equals 0 for $k \leq K^*$.\footnote{To see this, note that the coefficient of $P_1(0)$ in \eqref{eq24} reduces to $-\mu J(\hat v_1) + \mu \gamma_1+ c$, while the coefficient of $P_k(0)$ for $k \geq 2$ in \eqref{eq22} simplifies to
\begin{align}
    \int_{\hat v_{k-1}}^{\hat v_k} -\mu\beta_{k-1}(v) dv + c = H_k(\hat v_{k-1},\hat v_k).\notag
\end{align}}
Likewise, the multipliers $\b$ and $\eta$ defined in \eqref{beta} and \eqref{eq23} ensure that the coefficient of $P_k(v) - P_k(0)$ is weakly negative when $v<\hat v_k$ and equal to 0 when $v\geq \hat v_k$; \cref{sec:appendix-multipliers} explains how the multipliers $\b$ and $\eta$  satisfy these properties. 
Hence, we set $P_k(v)=1$ for all $k>K^*$ and all $v\in [0,1]$, and we set $P_k(v) = P_k(0)$ for all $k\leq K^*$ and $v<\hat v_k$; this ensures that the Lagrangian is maximized with respect to $P_k(v)$. For the remaining case of $k\leq K^*$ and $v \geq \hat v_k$, we will choose $P_k(v)$ to satisfy complementary slackness.

Since the multiplier $\b_k(v)$ is positive only for $v \geq \hat v_k$, to satisfy complementary slackness with respect to $\b$, it is enough to ensure that $(B_k)$ holds, and that it holds with equality for each $v \geq \hat v_k$ and for each $k \leq K^*$.
Consider some $i\in\{1,\ldots,K^*\}$ and $v\in [\hat v_i, \hat v_{i+1})$. 
We want $(B_k)$ to bind for $k=1,\ldots,i$. Since we chose $y_k(v) = 1$, this means that
\begin{align}
    \l[1-F(v)] (P_k(v) - P_{k-1}(v)) = \m (P_{k+1}(v) - P_k(v)) \qquad \forall k\in\{1,\ldots,i\}, \,  \forall v \in [\hat v_i,\hat v_{i+1}). \notag
\end{align}
Defining $\rho(v)\:=\l[1-F(v)]/\m$ and $P_0(v)\:=Q_1$, we have
\begin{align}
    P_k(v) - P_{k-1}(v) = \rho(v)^{k-1}(P_1(v) - Q_1) \qquad \forall k\in\{1,\ldots,i+1\}, \,  \forall v \in [\hat v_i,\hat v_{i+1}). \label{eq50}
\end{align}
Summing the above equation across $k$, the above equation gives
\begin{align}
    P_k(v) = Q_1 + (1 + \rho(v) + \cdots + \rho(v)^{k-1})(P_1(v) - Q_1) \qquad \forall k\in\{1,\ldots,i+1\}, \,  \forall v \in [\hat v_i,\hat v_{i+1}). \label{eq47}
\end{align}
In particular, for $k=i+1$, we have
\begin{align}
    P_{i+1}(0) = P_{i+1}(v) = Q_1 + (1+ \rho(v) +\cdots + \rho(v)^i) (P_1(v) - Q_1) \qquad \forall v \in [\hat v_i,\hat v_{i+1}). \label{eq46}
\end{align}
Let us suppose for now that $Q_1$ is exogenously given. Then, for each $v \in [\hat v_i,\hat v_{i+1})$, choosing $P_{i+1}(0)$ pins down $P_1(v)$ via \eqref{eq46}, which in turn pins down $P_k(v)$ for each $k\in \{1,\ldots,i\}$, via \eqref{eq47}. Hence, the choice of $P_{i+1}(0)$ for each $i \in \{1,\ldots,K^*\}$ pins down $P_k(v)$ for all $k\in\{1,\ldots,K\}$ and all $v \geq \hat v_1$.

We wish to choose each $P_{i+1}(0)$ (and also $P_1(0)$) so that $P_1$, and therefore $P_k$ for each $k\in\{1,\ldots,K\}$, are continuous. In particular, we want $P_k$ not to have atoms at the thresholds $\{\hat v_k\}$.
This will imply that $P_k$ is constant on the closed interval $[0,\hat v_k]$. Since the balance inequality $(B_k)$ holds for $v=\hat v_k$, it will then also hold for $v = [0,\hat v_k)$,\footnote{For $v \in [0,\hat v_k)$, the LHS of the balance inequality, $\l[1-F(v)](P_k(v) - P_{k-1}(v))$, is decreasing in $v$, while the RHS, $\m(P_{k+1}(v) - P_k(v))$, is constant in $v$.} guaranteeing that our choice of $P$ will satisfy $(B_k)$.\footnote{The continuity of $P_k$ will also be used when we show that our choice of $P_k$ can be implemented by the optimal policy (\cref{thm:optimal-policy,thm:optimal-policy-storable}).}

For $P_1$ given by \eqref{eq46} to be continuous at $\hat v_i$, it must be that $P_1(\hat v_i) =\lim_{v \uparrow \hat v_i} P_1(v)$, i.e.
\begin{align}
    \frac{P_{i+1}(0) - Q_1}{1+ \rho(\hat v_i) + \cdots + \rho(\hat v_i)^i} = \frac{P_i(0) - Q_1}{1+\rho(\hat v_i) + \cdots + \rho(\hat v_i)^{i-1}}.\notag
\end{align}
That is, for each $i\in\{1,\ldots,K^*\}$, $P_i(0)$ is determined iteratively by $P_{i+1}(0)$, and the initial condition is $P_{K^*+1}(0) = 1$. In particular, we have
\begin{align}
    P_1(0)-Q_1 &\, = (P_2(0)-Q_1)\frac{1}{1+\rho(\hat v_1)} \cr
    &\, = (P_3(0) - Q_1) \frac{1}{1+\rho(\hat v_1)}\frac{1+ \rho(\hat v_2)}{1+\rho(\hat v_2) + \rho(\hat v_2)^2} \cr
    &\, \cdots \cr
    &\, = (1-Q_1) \frac{1}{1+\rho(\hat v_1)} \cdots \frac{1+\rho(\hat v_{K^*}) + \cdots + \rho(\hat v_{K^*})^{K^*-1}}{1+\rho(\hat v_{K^*}) + \cdots + \rho(\hat v_{K^*})^{K^*}} \cr
    &\, =: (1-Q_1) \Delta_{K^*}. \label{eq48}
\end{align}

\subsubsection{Choice of $Q$}\label{subsubsec:choice-of-Q}

As we did for $P$, we wish to choose $Q_\ell$ so that the Lagrangian is maximized, and complementary slackness holds. Given our choice of multipliers and thresholds in \ref{subsec:multipliers}, the coefficient of $Q_\ell$ is 0 for $\ell \leq L^*$ and is nonpositive for $\ell > L^*$. Therefore, we choose $Q_\ell = 0$ for $\ell > L^*$; this guarantees that the Lagrangian is maximized with respect to $Q_\ell$.

Since the multiplier $\g_\ell = J(\hat v_{-\ell})$ is positive only for $\ell \leq L^*$, to satisfy complementary slackness with respect to $\g_\ell$, it is enough to ensure that $(B_{-\ell})$ holds, and that it holds with equality for $\ell \leq L^*$. Choosing $Q_\ell = 0$ for $\ell > L^*$ guarantees that $(B_{-\ell})$ holds for $\ell > L^*$. For $\ell \leq L^*$,  let us find $Q_\ell$ that makes $(B_{-\ell})$ bind, i.e.
\begin{align}
    \m (Q_{\ell-1} - Q_\ell) = \l[1-F(\hat v_{-\ell})](Q_\ell - Q_{\ell+1}) \qquad \forall \ell\leq L^*. \notag
\end{align}
Using the definition $\sigma(v) := \m / \l[1-F(v)]$ we have
\begin{align}
    Q_{\ell-1} - Q_\ell &\, = \sigma(\hat v_{-\ell+1}) \sigma(\hat v_{-\ell+2}) \cdots \sigma(\hat v_{-1})(P_1(0)-Q_1) \qquad \forall \ell \in \{2,\ldots, L^*+1\}, \notag \\
    Q_0 - Q_1 &\, = P_1(0) - Q_1,
\end{align}
where the second line follows by the definition $Q_0\:= P_1(0)$.
Taking a summation gives
\begin{align*}
    Q_\ell %
    &\, = P_1(0) - W_{\ell-1}(P_1(0) - Q_1),
\end{align*}
where $W_{\ell-1}\coloneqq 1 + \sigma(\hat v_{-1}) + \sigma(\hat v_{-1})\sigma(\hat v_{-2}) + \cdots + \sigma(\hat v_{-1})\sigma(\hat v_{-2}) \cdots \sigma(\hat v_{-\ell+1})$ for $\ell \geq 2$ and $W_0\:=1$.
In particular, we have $0 = Q_{L^*+1} = P_1(0) - W_{L^*} (P_1(0) - Q_1)$, i.e.
\begin{align}
    (W_{L^*}-1)P_1(0) =  W_{L^*}  Q_1. \label{eq49}
\end{align}
One can see that there exist $P_1(0),Q_1 \in [0,1]$ satisfying \eqref{eq48} and \eqref{eq49}. As we have shown above, choosing $P_1(0)$ and $Q_1$ then pins down all of $P_k(v)$ and $Q_\ell$.

\subsubsection{Choice of $X$}
\label{subsubsec:choice-of-X}

Because we assumed that the density $f$ is absolutely continuous, $J(v)$ is also absolutely continuous, so we may write $J(v) = \int_0^v \alpha(s)ds$ for $v \geq J^{-1}(0)$. Hence, the coefficient of $X(v)$ in \eqref{eq44} is nonpositive when $v< J^{-1}(0)$ and equal to 0 when $v \geq J^{-1}(0)$. Therefore, we choose $X(v) = 0$ for $v< J^{-1}(0)$; this ensures that the Lagrangian is maximized with respect to $X$.

Since the multiplier $\a(v)$ is positive only for $v \geq J^{-1}(0)$, to satisfy complementary slackness with respect to $\a$, it is enough to ensure that the reduced-form feasibility constraint $(\widetilde{RF})$ holds, and that it holds with equality for $v \geq J^{-1}(0)$. Both the LHS and the RHS of the constraint are equal to 0 when $v=1$, so to make $(\widetilde{RF})$ bind for $v \geq J^{-1}(0)$, it is enough to choose $X$ so that the derivatives with respect to $v$ of both sides are equal to each other for $v \geq J^{-1}(0)$. That is, we want
\begin{align}
    \l X(v) f(v) = \m \frac{\dd P_1(v)}{\dd v} + \l \sum_{\ell=1}^{L^*} (Q_\ell - Q_{\ell+1}) f(v) z_\ell(v). \notag
\end{align}
Therefore, if $v \geq \hat v_1$, then we choose
\begin{align}
    X(v) = \frac{\m}{\l f(v)} 
\frac{\dd P_1(v)}{\dd v} + Q_1, \notag
\end{align}
and if $v\in [\hat v_{-\ell}, \hat v_{-\ell+1})$ for $\ell \in \{2,\ldots,L^*\}$, then we choose $X(v) = Q_\ell$.
If $v \in [\hat v_{-1}, \hat v_1)$, then we choose $X(v) = Q_1$. If $v \in [J^{-1}(0), \hat v_{-L^*})$, then we choose $X(v) = 0$. 

Clearly, $(\widetilde{RF})$ holds as well for $v < J^{-1}(0).$

\subsection{Weak Duality}\label{subsec:weak-duality}

Let $(P^*, Q^*, y^*, z^*, X^*)$ denote the primal variables that we have constructed. Since $(P^*, Q^*, y^*, z^*, X^*)$ maximizes the Lagrangian given the constructed multipliers and satisfies complementary slackness, by standard weak duality arguments, $(P^*, Q^*, y^*, z^*, X^*)$ solves the relaxed program $[\tilde P]$.

\subsection{Service Scenario ($d=\infty$)} \label{subsec:service-scenario}

When $d=\infty$, the primal and dual variables for goods storage all collapse. That is, we have $Q_\ell = 0$, $\g_\ell = 0$ and $\hat v_{-\ell} = J^{-1}(0)$ for all $\ell \in \{1,\ldots,L\}$. In particular, because $\g_1=0$, we obtain an iterative characterization of the buyer thresholds $\{\hat v_k\}$ via the algorithm described in \cref{subsec:multipliers}, and there is no need to appeal to intermediate value theorems as in the storable goods scenario ($d<\infty)$.

\section{Proof of \cref{thm:optimal-policy} and \cref{thm:optimal-policy-storable}}
\label{sec:appendix-proof-of-thm:optimal-policy}

We now prove \cref{thm:optimal-policy-storable}. (\cref{thm:optimal-policy} will follow from this by setting  $d=\infty$.) 
To this end, we first show that the policy $\phi^*$ implements the stationary distributions $P^*$ and $Q^*$. We then show that $\phi^*$, together with $P^*$ and $Q^*$, implements $y^*$, $z^*$, and $X^*$. 
Finally, we argue that $\phi^*$ is incentive compatible.

\paragraph*{Implementing  $P^*$ and $Q^*$.}  
We will first show that $P^*$ and $Q^*$ are stationary under $\phi^*$. Using this, we will then argue that $\phi^*$ must implement a unique (joint) stationary distribution over $\{(\bv;\ell)_t\}_{t\geq 0}$, which has the marginals $P^*$ and $Q^*$.

Let us show that $P^*$ is stationary under $\phi^*$, i.e. that the (instantaneous) inflow and outflow from the set $S_k(v) = \{\mathbf{v} : v_k \le v\}$ are equal, for all $k$ and $v$.\footnote{Recall that $P^*_k(v)$ is the probability of $S_k(v)$ under the optimal stationary distribution.} 
For $k>K^*$, we have $P^*_k(v)=1$ for all $v$, and there is no outflow from $S_k(v)$, so stationarity holds trivially. Consider $k \leq K^*$.
For $v \geq \hat v_k$, the (actual) outflow from $S_k(v)$ is 
$$ \l[1-F(v)] (P^*_k(v) - P^*_{k-1}(v)),  $$
while the (actual) inflow is
$$ \m (P^*_{k+1}(v) - P^*_k(v)). $$
Since we chose $P^*$ so that the balance inequality $(B_k)$ binds for $v \geq \hat v_k$, the above two expressions are equal.  

Now, consider $v < \hat v_k$. Recall that, under the optimal policy $\phi^*$, there can be $k$ buyers in the queue only if all of them have values weakly above $\hat v_k$. 
Hence, under $\phi^*$, for $v < \hat v_k$, an outflow occurs from $S_k(v)$ if and only if there are exactly $k-1$ buyers whose values are \textit{weakly} above $\hat v_k$, and a new buyer arrives with value \textit{weakly} above $\hat v_k$. However, because we chose each $P^*_k(v)$ so that there is no atom at $\hat v_k$, and the distribution $F$ of an arriving buyer's type is continuous, the quantifiers ``weakly'' in the previous sentence may be replaced with ``strictly''; that is, the outflow can be written as\footnote{Recall that $\l[1-F(\hat v_k)]$ is the rate at which a buyer arrives with value \textit{strictly} above $\hat v_k$, whereas $P^*_k(\hat v_k) - P^*_{k-1}(\hat v_k)$ is the probability that there are exactly $k-1$ buyers with values \textit{strictly} above $\hat v_k$.} 
$$ \l[1-F(\hat v_k)] (P^*_k(\hat v_k) - P^*_{k-1}(\hat v_k)).  $$
This is equal to the outflow from $S_k(\hat v_k)$.  

Likewise, under the optimal policy, for $v < \hat v_k$, an inflow into $S_k(v)$ occurs if and only if there are exactly $k$ buyers whose values are \textit{weakly} above $\hat v_k$, and an item arrives. Again, by the continuity of $P_k^*$, this inflow is
$$ \m (P^*_{k+1}(\hat v_k) - P^*_k(\hat v_k)),$$
which is equal to the inflow to $S_k(\hat v_k)$. Since inflow and outflow are equal for $S_k(\hat v_k)$, they must be equal for $S_k(v)$ as well. We have thus proved that $P^*_k(v)$ is stationary under the policy $\phi^*$.

Next, we show that $Q^*$ is stationary under $\phi^*$. For $\ell > L^*$, we have $Q^*_\ell = 0$, and there is no inflow, so stationarity holds trivially. For $\ell \leq L^*$, the outflow from $S_{-\ell}$ is
$$\l[1-F(\hat v_{-\ell})](Q^*_\ell - Q^*_{\ell+1}),$$
while the inflow is
$$ \m (Q^*_{\ell-1} - Q^*_\ell). $$
Since we chose $Q^*$ so that the balance inequality $(B_{-\ell})$ binds for $\ell \leq L^*$, the above two expressions are equal. We have thus proved that $Q^*_\ell$ is stationary under $\phi^*$.

Finally, since the empty state $(\mathbf{0};0)$ has a finite mean return time under $\phi^*$, the policy $\phi^*$ induces a positive recurrent regenerative process $\{(\bv;\ell)_t\}_{t\geq 0}$, which has a unique stationary distribution \citep{asmussen2003applied,THORISSON1992237}. Moreover, any projection of $\{(\bv;\ell)_t\}_{t\geq 0}$ is also a regenerative process (\cite{asmussen2003applied} page 169, Proposition 1.1) and thus has a unique stationary distribution; this implies that, under $\phi^*$, the CDF $P^*_k(v)$ corresponds to the unique stationary marginal distribution of the $k$-th order statistic, and likewise for $Q^*_\ell$. Therefore, the unique stationary distribution of the process $\{(\bv;\ell)_t\}_{t\geq 0}$ induced by $\phi^*$ must have the marginals $P^*$ and $Q^*$.

\paragraph*{Implementing $y^*$ and $z^*$.}

When buyers are waiting, the policy $\phi^*$ always assigns an incoming good to the highest-value buyer, so it clearly implements $y^*_k(v)=1$ for all $v$ and $k$. Also, when $\ell$ goods are stored, the policy assigns a good to an arriving buyer if and only if his value is above $\hat v_{-\ell}$, so $z_\ell^*(v)$ is implemented.

\paragraph*{Implementing $X^*$.}
Let $\tilde X(v)$ denote the interim allocation probability for a type $v$-buyer that is induced by the policy $\phi^*$, under the stationary distribution $(P^*,Q^*)$. We wish to show that $\tilde X(v) = X^*(v)$ for all $v\in[0,1]$, where $X^*(v)$ is the solution to the relaxed program derived in \cref{subsubsec:choice-of-X}.

For $\ell \in \{1, \ldots, L\}$, let $X_{-\ell}(v)$ denote the interim allocation probability under the optimal policy of a type-$v$ buyer who just arrived, conditional on there being exactly $\ell$ goods stored. Then, we have
\begin{align*}
    X_{-\ell}(v) = \begin{cases}
        1 \caseif v \geq \hat v_{-\ell} ;\\
        0 \caseif v < \hat v_{-\ell}.
    \end{cases}
\end{align*}
Hence, for $v \leq \hat v_1$, we obtain $\tilde X(v)$ as follows:
\begin{align*}
    \tilde X(v) = \begin{cases}
        0 & \text{if } v< \hat v_{-L^*}; \\
        Q_\ell & \text{if } v \in [\hat v_{-\ell}, \hat v_{-\ell+1}) \text{ for } \ell \in \{2,\ldots,L^*\};\\
        Q_1 \caseif v \in [\hat v_{-1}, \hat v_{1}),
    \end{cases}
\end{align*}
which coincides with $X^*(v)$.

For $k\in \{1,\ldots,K^*\}$, let $X_k(v)$ denote the interim allocation probability, under the optimal policy $\phi^*$, of a type-$v$ arriving buyer when his value is $k$-th highest in the queue.If $v < \hat v_1$, then $X_k(v)=0$ for any $k$.
Suppose the arriving buyer has value $v \in [\hat v_i, \hat v_{i+1})$, for some $i\in \{1,\ldots,K^*\}$.\footnote{We define $\hat v_{k}\:=1$ for all $k \geq K^*+1$.} Clearly, $X_k(v) = 0$ for $k \geq i+1$, since the buyer will not be allowed to join the queue. When $k \leq i$, the buyer joins the queue. Once the buyer joins the queue, every time a good arrives, the buyer's ranking improves by 1, and every time a new buyer arrives with value greater than $v$, the buyer's ranking drops by 1. The buyer obtains the good if his ranking reaches 0, and the buyer leaves the queue if his ranking reaches $i+1$. Therefore, for each $v \in [\hat v_i, \hat v_{i+1})$, the conditional interim allocation $X_k(v)$ is a solution to the gambler's ruin problem and is given by\footnote{The interim allocation probabilities are a solution to the system of difference equations $X_k(v) = \frac{1}{1+\rho(v)}X_{k-1}(v) + \frac{\rho(v)}{1+\rho(v)}X_{k+1}(v)$.}
\begin{align}
    X_k(v) = \begin{cases}
    \dfrac{k}{i+1} & \text{if } k \leq i \text{ and } \mu = \l[1-F(v)]\\
    \dfrac{\rho(v)^{i-k+1} - 1}{\rho(v)^{i+1}-1} & \text{if } k \leq i \text{ and } \mu\neq \l[1-F(v)]  \\
    0    & \text{if } k \geq i+1.
    \end{cases}\label{eq:X_k}
\end{align}
Suppose $\mu\neq \l[1-F(v)]$. Then, the (unconditional) interim allocation probability $\tilde X(v)$ for a buyer with value $v \in [\hat v_i, \hat v_{i+1})$, for $i \in \{1,\ldots,K^*\}$, is given by
\begin{align*}
    \tilde X(v) = &\, Q_1 + \sum_{k=1}^{K^*} [P_k(v) - P_{k-1}(v)] X_k(v) \\
    = &\, Q_1 + \sum_{k=1}^i \rho(v)^{k-1} \frac{P_{i+1}(0) - Q_1}{1+\rho(v) + \cdots + \rho(v)^i} \dfrac{\rho(v)^{i-k+1} - 1}{\rho(v)^{i+1}-1} \\
    = &\, Q_1 + (P_{i+1}(0) - Q_1) \frac{i(\rho(v)^{i+1}-\rho(v)^i) +1 - \rho(v)^i}{(1-\rho(v)^{i+1})^2}.
\end{align*} 
where the second equality uses \eqref{eq50}, \eqref{eq46}, and the definition of $X_k(v)$. On the other hand,
For $i\in \{1,\ldots,K\}$ and $v \in [\hat v_i,\hat v_{i+1})$, the solution to the relaxed program is
\begin{align*}
    X^*(v) &\, = \frac{\m}{\l f(v)} 
\frac{\dd P_1(v)}{\dd v} + Q_1 \\
&\, = \frac{1+2\rho(v) + 3\rho(v)^2 +\cdots + i\rho(v)^{i-1}}{(1+\rho(v) +\cdots+\rho(v)^i)^2} (P_{i+1}(0) - Q_1) + Q_1\\
    &\, = \tilde X(v).
\end{align*}
Likewise, when $\mu = \l[1-F(v)]$, we have
\begin{align*}
    \tilde X(v) = X^*(v) = \frac{i}{2(i+1)} (P_{i+1}(0) - Q_1) + Q_1.
\end{align*}

\paragraph*{Incentives.}
Finally, we check that the policy $\phi^{*}$ satisfies obedience and incentive compatibility of buyers. Obedience is clearly satisfied, since buyers are reimbursed for their ex post waiting cost, so they never benefit by leaving the queue against the policy's recommendation. To check incentive compatibility, it is enough to verify that $X^*$ is non-decreasing. This must be the case, since the optimal policy always assigns an arriving good to the highest-valued buyer, and a buyer who is removed is always the lowest-valued buyer in the queue.

\section{Proof of \cref{prop:thresholds-compstat}}\label{sec:appendix-thresholds-compstat}

If $\hat v_1=1$, then $K^*=0$, and the proposition is vacuously true. If $\hat v_1<1$,
it is clear from \eqref{eq33} that $\hat v_1$ is strictly increasing in $c$, constant in $\l$, and strictly decreasing in $\mu$.

For each $k \in\{2,\ldots, K^*\}$, the threshold $\hat v_k$ satisfies $H_k(\hat v_{k-1}, \hat v_k)=0$, which is equivalent to
\begin{align*}
    \m \int_{\hat v_{k-1}}^{\hat v_k} \frac{1}{1+\rho(v) + \cdots + \rho(v)^{k-1}} J'(v) dv = c,
\end{align*}
where we recall that $\rho(v) \:= \l[1-F(v)]/\m$.
Clearly, each $\hat v_k$ is strictly increasing in $c$, strictly increasing in $\l$, and strictly decreasing in $\mu$.

\section{Proof of \cref{thm:DSIC}}

We define the ``Truthful Strategy'' as behaving according to one's true valuation $v$ at every decision node. We show that deviating from this strategy is never profitable, regardless of past or future events.

\smallskip 
\noindent{\bf (i) Inventory Sales.} It is clearly weakly dominant to accept the good if and only if $v \ge \hat v_{-\ell}$.

\smallskip 
\noindent{\bf (ii) Survival Auctions.}
Suppose the current clock price is at $p$.
\begin{itemize}
    \item \textbf{Dropping out early ($p < v$):} Dropping out yields 0  payoff. Staying incurs no waiting cost (because it is reimbursed) and retains the option value of winning. 
    If he bids truthfully in the future ($b=v$), even if he wins, the cutoff price is lower than $v$, guaranteeing non-negative utility. 
    Thus, dropping out early is weakly dominated.

    \item \textbf{Staying late ($p > v$):} If the buyer stays past $v$, his personalized reserve $r_i$ updates to $p$. Even if he wins later, the cutoff price will be at least $r_i=p>v$, resulting in strictly negative utility. Thus, staying past $v$ is weakly dominated.
\end{itemize}

\noindent {\bf (iii) Assignment Auctions.}  Consider a buyer with value $v$ and reserve $r_i\le v$. (If $r_i> v$, the buyer would have optimally dropped out earlier). The buyer submits bid $b$. The mechanism sets cutoff price $P$ as the lowest bid $\mathcal B\ge r_i$ such that a Proxy Agent with bid $\mathcal B$ eventually wins on the realized path.  We examine deviations from truthful bidding ($b=v$): 

\smallskip
$\bullet$ \textbf{Overbidding ($b > v$):}  
The buyer's deviation may change the outcome of the assignment auction. However, the buyers who played the assignment auction become passive and are replaced by proxy agents (who stay and play according to their fixed bids), so their behavior cannot change in response to the outcome.
Future arriving buyers do not observe the deviation, as any information about past play or the current queue length is hidden. Hence, their strategies and the resulting future path of bids remain identical to the truthful scenario as long as the truthful bid would also have won. If the truthful bid would have lost, then overbidding results in a cutoff price above one's value.

    \begin{itemize}
        \item [-]\textit{Case A: The buyer wins (either in the current assignment auction, or in the future) with bid $v$.} Because opponents do not change their behavior, the cutoff price is identical, so the deviation is not profitable.
         
        \item [-]\textit{Case B: The buyer loses with bid $v$ but wins with bid $b$.}  Losing with $v$ implies that the cutoff price $P$ when one overbids and wins must exceed $v$.  Thus, winning with a bid $b>v$ yields a payoff of $v-P<0$, making deviation unprofitable.
        \item [-]\textit{Case C: The buyer loses with both $v$ and $b$.} The buyer's payoff is 0 in both cases.
    \end{itemize}

\smallskip 
$\bullet$  \textbf{Underbidding ($b\in [r_i, v)$):} The critical observation is that the cutoff price $P$ is invariant to the specific winning bid, provided it results in a win. Again, because remaining buyers become passive and future buyers do not observe the deviation, their behaviors do not change as long as the buyer still wins with $b$.
    \begin{itemize}

        \item [-] \textit{Case A: The buyer loses with $v$.}
        If the buyer loses with $v$, he also loses with a lower bid $b$. The payoff is 0 in both cases.
        \item [-] \textit{Case B: The buyer wins with $v$ but loses with $b$.}
        Winning with $v$ yields a surplus of $v - P \ge 0$. Losing with $b$ yields  0. The deviation is weakly worse.
        \item [-] \textit{Case C: The buyer wins with both $v$ and $b$.}  Since opponents do not change their behavior, the cutoff price is identical, so the deviation is unprofitable.
    \end{itemize}

 Combining (i), (ii), and (iii), we conclude that the Truthful Strategy is weakly dominant.

\section{Proofs for \cref{sec:large-market}}

\subsection{Proof of \cref{lemma:large-market-1}}

Fix any policy $\phi \in \Phi$ on any $(m,n)$-sample path. The policy $\phi$ induces an ex-post allocation rule, specifying the allocation as a function of realized values for the $n$ buyers. The allocation rule is monotonic, making it feasible in the Oracle benchmark. Since the Oracle benchmark involves no waiting costs for buyers or holding costs for items, $R^*(m,n)\ge R_\phi(m,n)$.

\subsection{Proof of \cref{lemma:large-market-2}}

    Whenever $\l,\m\to \infty$ with $\rho=\l/\m$ held constant, the numbers of buyers and items arriving during a unit interval, $n$ and $m$, must go to $\infty$ with $n/m\to \rho$.  Hence, the result follows: 
\begin{align*}
    & \limsup_{\l,\m\to\infty, \l/\m=\rho} \sup_{\phi \in \Phi}\E_{n\sim \l,m\sim \m}[ R_\phi(m,n)/m ]\\
    \le &  \limsup_{n,m\to\infty, n/m=\rho} \sup_{\phi \in \Phi} R_\phi(m,n)/m 
 \\
 \le &  \lim_{n,m\to\infty, n/m=\rho} R^*(m,n)/m \\
    = & R^*.
\end{align*}

\subsection{Proof of \cref{thm:large-market}}

By the arguments following the theorem, (a) follows from either (b) or (c).

\paragraph*{Proof of (b).}

We consider the case of $\l,\m,d \to \infty$ with $\rho = \l/\m$ and $\l/d$ held constant; as previously discussed, this is equivalent to $c\to 0$ up to rescaling of time. The upper bound of $R^*$ obtained in \cref{lemma:large-market-2} still applies, since also sending $d \to \infty$ cannot increase revenue.
It thus suffices to show that there exists a policy $\phi \in \Phi$ whose expected per-unit revenue converges to $R^*$. Consider the following simple policy:  For any  $\epsilon>0$, admit a buyer if and only if his valuation is above
$$\tilde v_{\delta} \:= \max\{\tilde v, \hat  v_0\}+\delta,$$
where $\delta>0$ is chosen so that $R^*-\rho\tilde v_{\delta}[1-F(\tilde v_{\delta})]<\epsilon$.  
If a good arrives, it is discarded if there are no buyers in the queue; otherwise,  it is allocated to a randomly chosen buyer in the queue at price  $\tilde v_{\delta}$; the seller reimburses buyers for their waiting costs.  Under this policy, all buyers with valuations above $\tilde v_{\delta}$ are eventually allocated the good.  

The expected per-unit  revenue for the seller is then
$$\frac{\l}{\m}\tilde v_{\delta}[1-F(\tilde v_{\delta})]$$
minus the per-unit expected waiting cost, or the total waiting cost divided by the number of items.  We show that the latter vanishes in the limit.

\begin{lem}  The per-unit expected waiting cost vanishes in the limit as $\l, \m\to \infty$ with $\rho=\l/\m$ held constant. 
\end{lem}

\begin{proof} Since $c$ is held fixed, it is enough to show that the per-unit expected waiting time vanishes. Observe that $\rho_{\delta}:=\rho [1-F(\tilde v_{\delta})]<1.$ 
Let $p_k$ denote the stationary probability that the queue has length $k=0, ..., \infty$. For each $k=0,...$,
we must have 
$\lambda [1-F(\tilde v_{\delta})] p_k= \mu p_{k+1},$
so 
$p_{k}=\rho^{k}_{\delta}/\sum_{k=0}^{\infty} \rho^{k}_{\delta}.$
Since $\rho_{\delta} <1$,
the average length of the queue 
\begin{eqnarray*}
\sum_{k=0}^{\infty} kp_{k} &=&\sum_{k=0}^{\infty}k  \frac{\rho^{k}_{\delta}}{%
\sum_{k=0}^{\infty} \rho^{k}_{\delta}} = \frac{\rho_{\delta}}{1-\rho_{\delta}}
\end{eqnarray*}%
is finite.  Since the effective arrival rate 
$\lambda [1-F(\tilde v_{\delta})]$
tends to infinity, by Little's law, the per-capita expected waiting time vanishes as $\l\to \infty$.  This means the per-unit expected waiting time (obtained by multiplying $\rho_{\delta}$) also vanishes in the limit. 
\end{proof}

Combining the preceding observations, we conclude that the limit expected (per-unit) revenue from the simple policy described above is no less than  
$R^*-\epsilon$. Therefore, the same conclusion applies to our optimal policy.

\paragraph*{Proof of (c).}

Analogously to the proof of (b), we will consider the case $\l,\m,c \to \infty$ with $\rho = \l/\m$ and $\l/c$ held constant. Our goal is to show that there exists a policy $\phi\in\Phi$ whose expected per-unit revenue converges to $R^*$. Consider the following simple policy.  If a buyer arrives, he is turned away if there are no items stored; otherwise, for $\epsilon>0$, the buyer is allocated an item at a posted price of
$\tilde v_{\delta} =\max\{\tilde v, \hat  v_0\}+\delta,$
where $\delta>0$ is chosen so that $R^*-\rho\tilde v_{\delta}[1-F(\tilde v_{\delta})]<\epsilon$. 
Arriving goods are stored up to a cap of at most $\bar L \:= \left[ \sqrt{\mu} \right] $ in inventory.
Under this policy, the expected per-unit storage time is at most $\frac{\sqrt{\mu}}{\mu}$, which tends to 0 as $\mu \to \infty$.

Let $q_\ell$ denote the stationary probability under this policy that there are $\ell$ items in storage. For each $\ell  = 0,\ldots,\bar L-1$, we must have $\mu q_\ell = \l[1-F(v_\d)]q_{\ell+1}.$
Therefore, the probability that the storage is empty is
$$q_0 = \frac{1}{1+\psi_\d + \cdots + \psi_{\d}^{\bar L}},$$
where $ \psi_{\d} \:= 1/\rho[1-F(\tilde v_\d)] = 1/\rho_{\d}.$
Since $\psi_\d >1$, it must be that $q_0$ tends to 0 as $\bar L \to \infty$.
Therefore, buyers with valuations above $\tilde v_{\delta}$ are always allocated a good in the limit.

Combining the preceding observations, we conclude that the limit expected (per-unit) revenue from the simple policy desribed above is 
$$\frac{\l}{\m}\tilde v_{\delta}[1-F(\tilde v_{\delta})] > R^* - \epsilon.$$
Therefore, the same conclusion applies to our optimal policy.

\section{Proof of \cref{prop:comp-stat-in-pareto-weight}} \label{sec:appendix-proof-of-comp-stat-in-pareto-weight}

It is enough to show that if $w > w'$ and $\hat v_k^w > \hat v_k^{w'}$ for some $k \leq \min\{K_{w'}, K_w\}-1$, then $\hat v_{k+1}^w > \hat v_{k+1}^{w'}$. 
Assume for the sake of contradiction that $\hat v_k^w > \hat v_k^{w'}$ and $\hat v_{k+1}^w \leq \hat v_{k+1}^{w'}$ for some $k$. The thresholds must satisfy%
\begin{align}
    \m \int_{\hat v_{k}^w}^{\hat v^w_{k+1}} \frac{1}{1+\rho(v) + \cdots + \rho(v)^{k}} J'_w(v) dv = c \label{eq52} \\
    \m \int_{\hat v_{k}^{w'}}^{\hat v^{w'}_{k+1}} \frac{1}{1+\rho(v) + \cdots + \rho(v)^{k}} J'_{w'}(v) dv = c.\label{eq53}
\end{align}
Because the inverse hazard rate is non-increasing, it must be that $J'_{w'}(\cdot) \geq J'_w(\cdot) \geq 1$. Since we assumed that $\hat v_k^w > \hat v_k^{w'}$ and $\hat v_{k+1}^w \leq \hat v_{k+1}^{w'}$, the LHS of \eqref{eq53} must be strictly greater than the LHS of \eqref{eq52}, a contradiction.

\newpage

\section*{\centering  \fontsize{20pt}{24pt}\selectfont Online Appendix:  ``Not For Publication''}
\addcontentsline{toc}{section}{Online Appendix}
\label[onlineapp]{sec:online-appendix}

\bigskip
\bigskip

\section{Markov Policies are Without Loss}\label[onlineapp]{sec:appendix-markov-wlog}

Here, we relax the Markov restriction and allow the seller to adopt policies which may depend generally on the entire history of buyer and item arrivals. We argue that the optimal policy we derived in \cref{sec:optimal-mechanism} is in fact optimal within this broader class of policies. 

\paragraph*{Seller's Dynamic Policy for $d=\infty$.}\label{subsec:seller's-dynamic-policy}

Consider a probability space $(\Om,\ms F, \Prob)$, where each sample $\om \in \Om$ consists of the {\it arrival times} $T_i^B$ of buyers, indexed by their arrival orders $i\in \mathbb{N}$, and {\it their values} $V_i$, and  the {\it arrival times} $T^S_j$ of all items, again indexed by their arrival orders $j\in \mathbb{N}$. To allow for a stochastic policy, we let $\om$   contain a sequence  $\{\xi_i\}_{i=1}^\infty$ of {\it random numbers} each drawn uniformly from $[0,1]$ at the time of buyer $i$'s arrival.

For each sample $\om \in \Om$, a \textbf{policy} $\phi(\om) = (a(\om),r(\om),\tau(\om))$ specifies three mappings:\footnote{Throughout this section, we redefine notation used in the main text. For example, here we are redefining $\phi,a,r,$ and $\tau$.}

\begin{itemize}
    \item $a(\om):\N \to \N \cup \{0\}$, where $a(\om)(i) = j$ means that the $i$-th arriving buyer is \textit{assigned} the $j$-th arriving item (and $a(\om)(i) = 0$ means he leaves the queue without being assigned),
    \item $r(\om):\N \to \R_+$, where $r(\om)(i)$ is the time at which the $i$-th buyer is \textit{removed} from the queue,
    \item $\tau(\om):\N \to \R$ is the total amount of \textit{transfer} made by the $i$-th buyer. We assume that the entire $\tau(\om)(i)$ is paid at time $r(\om)(i)$, when the buyer leaves.
\end{itemize}
To be feasible, a policy must satisfy the following conditions for each $\om$:\footnote{We drop the argument $\om$ when unambiguous.} (i) if $a(i) = a(i')$ for $i\neq i'$, then $a(i) = a(i') = 0$ (no two buyers can receive the same item); (ii) $r(i) \geq T_i^B$, where $T_i^B$ denotes the arrival time of the $i$-th arriving buyer (buyers cannot leave before arriving);  (iii) if $a(i) = j \neq 0$, then $r(i) = T_j^S$, where $T_j^S$ denotes the arrival time of the $j$-th arriving item (items are assigned immediately or discarded, and assigned buyers leave immediately); (iv) $a$, $r$, and $\tau$ are adapted to the natural filtration with respect to the arrival processes of buyers and items, as well as the randomization devices  (policy cannot condition on future events).

A policy as defined here is allowed to depend arbitrarily on the entire history of buyer and item arrivals. Also, the policy may remove a buyer from the queue at any time, in any manner. For example, the policy could remove a buyer immediately upon arrival (meaning the buyer never joins the queue); remove all buyers with values above $v$ with some probability whenever an item is assigned or when a new buyer arrives; or remove the lowest-value buyer at some constant Poisson rate, independently of the arrival of buyers or items.  

\paragraph*{Regenerative Policies.}

We wish to define a notion of steady state and evaluate the seller's expected revenue under this steady state.
Toward this goal, first define 
$$B_t(\om,\phi)\:= \{ i\in \N \mid T_B^i \leq t \text{ and } r(i) > t\}$$
to be the set of buyers who are in the queue at time $t$, given sample $\om$ and policy $\phi$. A time $T^0> 0$ is \textbf{null} given $\om$ and $\phi$ if (i) $B_{T^0}(\om,\phi) = \emptyset$, and (ii) $B_s(\om,\phi) \neq \emptyset$ for all $s \in [T^0-\varepsilon,T^0)$, for some $\varepsilon>0$. That is, a null time is a moment at which the queue of buyers becomes empty.
For convenience, time 0 is defined to be null for any $(\om, \phi)$.
When necessary, we write $T^0_k(\om,\phi)$ to make explicit the dependency on sample $\om$ and policy $\phi$.

Let $\{T^0_k\}_{k\in\N}$ denote the random sequence of null times, with $T^0_1=0$. The interval $[T^0_k,T^0_{k+1})$ is called the $k$-th \textbf{cycle}.

When a new cycle begins, it is as if the arrival processes of buyers and items start again from time 0. 
Since the arrival processes of buyers and items are Markovian, any event in past or future cycles is irrelevant for the seller's revenue in the current cycle, and also irrelevant for the payoffs (and hence incentives) of buyers who are in the current cycle. 
Therefore, it is  without loss for the seller to separately maximize her expected revenue under each cycle; moreover, it is without loss for the seller to repeat the same policy at every cycle. 

To formalize this idea, let $\om \vert_t$ denote the truncation of $\om$ from time $t$ onward, with time reset to 0 and the buyer and item indices reset to 1. That is, $$ \om \vert_t \:= \left(\left\{T_{I_t+i}^B - t,V_{I_t+i}\right\}_{i\in\N},\left\{T_{J_t+j}^S - t\right\}_{j\in\N}, \left\{\xi_{s+I_t}\right\}_{s=1}^\infty\right), $$
where $I_t$ and $J_t$ denote the number of buyers and items, respectively, that have arrived up to time $t$.  Similarly, let 
$\phi|_t$ denote the truncation of $\phi$ from $t$ onward, again with the time reset to 0 and the buyer and item indices reset to 1. That is, $\phi|_t: = (a|_t,r|_t,\tau|_t)$, where
\begin{align*}
a|_t(\om)(i) &\, \:= \max\{a(\om)(i + I_t) - J_t, 0 \},\\
r|_t(\om)(i) &\, \:=  r(\om)(i + I_{t}) - t,\\
\tau|_t(\om)(i) &\, \:= \tau(\om)(i + I_{t}),
\end{align*} for each $\om\in\Om$ and $i\in \N$.  A policy $\phi$ is \textbf{regenerative} if, for each $\om\in\Om$,  
\begin{align}\label{eq:regenerative_policy}
    \phi|_t(\om) = \phi(\om|_t) \qquad \forall t = T^0_1,T^0_2,\ldots.
\end{align}
That is, under a regenerative policy, the outcome process following each null time (the LHS of \eqref{eq:regenerative_policy}) is the same as that starting from time 0 (the RHS of \eqref{eq:regenerative_policy}).
Whenever the queue empties, a regenerative policy ``forgets'' the past and continues as if the system restarted from time 0.

\paragraph*{Steady State Analysis.}  

For a given $\om$ and a regenerative policy $\phi$, we focus on  \textbf{within-cycle history at time $t \in [T^0_k, T^0_{k+1})$}, defined as 
$$\t^\phi_t(\om) \:=  \left(\left\{T_{I_{T^0_k}+i}^B - T^0_k,V_{I_{T^0_k}+i}\right\}_{i=1}^{I_t - I_{T^0_k}},\left\{T_{J_{T^0_k}+j}^S - T^0_k\right\}_{j=1}^{J_t - J_{T^0_k}}, \left\{\xi_{I_{T^0_k}+s}\right\}_{s=1}^{I_t - I_{T_k^0}}, t- T^0_k \right) .$$ 
That is, $\t^\phi_t(\om)$ records all events that occurred after $T^0_k$ and up to $t$, with both time and indices reset.

Let $\Theta$ be the set of all within-cycle histories that may be obtained under some $\om$, $\phi$, and $t$. A regenerative policy induces a stochastic process $\{\t_t\}_{t\geq 0}$ which takes the set $\Theta$ as the state space. Importantly, since sequence of null times constitutes a renewal process (this is because the amount of time spent between two adjacent null times is iid), $\{\t_t\}_{t\geq 0}$ is a {\bf regenerative process} which repeats the same process, or ``regenerates,'' following each null time.\footnote{For a formal definition of renewal or regenerative processes, see \cite{asmussen2003applied}.}
We restrict attention to a regenerative policy $\phi$ whose induced process  $\{\t_t\}_{t\geq 0}$ admits a stationary distribution $\pi \in \D(\Theta)$.

Let $\Phi$ denote the set of all incentive compatible regenerative policies. Given a policy $\phi = (a,r,\tau) \in \Phi$, we abuse notation and define 
$$\tau(v;\t) \:= \E\left[ \tau(\om)(i) \,\middle|\, \t^{\phi}_{T^{B_i-}}(\om) = \t ,\,\,V_i = v \right],$$ 
where $\t_{T^B_i-}$ denotes the within-cycle history immediately before the $i$-th buyer arrives.  That is, $\tau(v;\t)$ denotes the expected payment of a buyer with value $v$ who arrives at within-cycle history $\t \in \Theta$.
Note that the within-cycle history $\t$ determines $i$ only up to the buyer's ``within-cycle'' order, but nevertheless $\tau(v;\t)$ is well-defined because $\phi$ is regenerative. 

The \textbf{seller's problem} is
\begin{align*}
 [\mathcal P_0] \hspace{9em} \sup_{\substack{\phi\in  \Phi \\ \pi \in \D (\Theta)}}  &\, \int_{\t}\int_0^1 \l \tau(v; \t) f(v)\ddd v\,\pi(\dd \t)  \hspace{10em} \\
\text{subject to } &\, \phi \text{ inducing $\pi$.}
\end{align*}
The relaxed program described in \cref{subsec:relaxed} is clearly a relaxation of this generalized version of the program $[\ms P_0]$. Moreover, since the optimal Markovian policy described in \cref{sec:optimal-mechanism} is positive recurrent, it induces a stochastic process $\{\t_t\}_{t\geq 0}$ with a unique stationary distribution. Hence, we may repeat the argument in \cref{sec:appendix-proof-of-thm:optimal-policy} to show that this unique stationary distribution has projection $P^*$. Therefore, the optimal Markovian policy is also a solution to the general version of $[\ms P_0]$ formulated here. In this sense, the restriction to Markov policies is without loss.

\section{Omitted Arguments in the Proof of \cref{thm:relaxed-program,thm:relaxed-program-storable}}

\subsection{Constructing $\beta$ given $\hat v_k$'s}\label[onlineapp]{sec:appendix-multipliers}

We show how to construct the multiplier $\b$ so that the coefficient of $P_k(v) - P_k(0)$ in \eqref{eq20} equals 0 for $v \geq \hat v_k$.
Replacing $\eta_k(v)$ in using \eqref{eq23}, the coefficient of $P_k(v) - P_k(0)$ for $k\geq 2$ becomes\footnote{We defined $\eta_0(v)=0$, so \eqref{eq23} does not hold for $k=0$, which means the case of $k=1$ must be dealt with separately.}
\begin{align}
    \l[1-F(v)](\beta_k(v) - \beta_{k+1}(v)) - \mu(\beta_{k-1}(v) - \beta_k(v)), \label{eq21}
\end{align}
and for $k = 1$, the coefficient becomes
\begin{align}
    \l[1-F(v)](\beta_1(v) - \beta_2(v)) - \mu\alpha(v) + \mu\beta_1(v). \label{eq25}
\end{align}
Define $h_k(v)\:=\b_k(v) - \b_{k-1}(v)$, and recall $\sigma(v)\:= 1/\rho(v) = \mu/\l[1-F(v)]$. For the coefficients \eqref{eq21} and \eqref{eq25} to vanish, we want
\begin{align}
    & h_{k+1}(v) = \sigma(v) h_k(v) \qquad &\forall k\geq 2, \forall v \geq \hat v_k \label{eq:2-sided-1} \\
    & h_2(v) = \sigma(v) \left(\b_1(v) - J'(v) \right) \qquad &\forall v\geq \hat v_1. \label{eq:2-sided-3}
\end{align}
Use \eqref{eq:2-sided-1} to write
\begin{align*}
    \b_k(v) - \b_{k-1}(v) = h_{k}(v) = \sigma(v)^{k-2}h_2(v) \qquad \forall k\geq 2,\, \forall v \geq \hat v_{k-1}.
\end{align*}
Summing across $k$, we get
\begin{align}
    \beta_k(v) = \beta_1(v) + h_2(v) (1 + \sigma(v) + \cdots + \sigma(v)^{k-2}) \qquad \forall k\geq 2,\,\forall v\geq\hat v_{k-1}. \label{eq:2-sided-2}
\end{align}

It remains to pin down $\beta_1$ and $h_2$, or equivalently, $\beta_1$ and $\beta_2$. We do so using two boundary conditions. The first condition is \eqref{eq:2-sided-3}. To obtain the second condition,
fix an integer $i\geq 1$, and consider the case $v \in [\hat v_{i},\hat v_{i+1})$.
Since we set $\beta_k(v)=0$ for all $v<\hat{v}_k$, \eqref{eq:2-sided-2} implies that
\begin{align}
    \b_{i+1}(v) =  0 = \beta_1(v) + h_2(v) (1 + \sigma(v) + \cdots + \sigma(v)^{i-1}). \label{eq:2-sided-4}
\end{align}
Using \eqref{eq:2-sided-3} and \eqref{eq:2-sided-4}, we obtain
\begin{align*}
    &\beta_1(v) = \frac{\sigma(v) + \sigma(v)^2 + \cdots + \sigma(v)^i}{1 + \sigma(v) + \cdots + \sigma(v)^{i}} J'(v)\\
    &\beta_2(v) = \frac{\sigma(v)^2 + \sigma(v)^3 + \cdots + \sigma(v)^i}{1 + \sigma(v) + \cdots + \sigma(v)^{i}}J'(v).
\end{align*}
Therefore, using \eqref{eq:2-sided-2}, we obtain the closed form solution for $\beta$ as stated in \cref{subsec:multipliers}.

\subsection{Existence of Thresholds}\label[onlineapp]{subsec:IVT}

For exposition, redefine 
\begin{align}
A(\g_1, \g_2) :=\d(\g_1) + \int_{J^{-1}(\g_1)}^1 \l(J(v) - \g_1) \ddd F(v) - \mu(\g_1 - \g_2) -d.  \notag
\end{align}
where $\d(\g_1):= -\int_0^1 \l[1-F(v)]\beta_1(v;J^{-1}(\g_{1}))dv$, and
\begin{align*}
   B(\g_{\ell-1}, \g_\ell, \g_{\ell+1}):=    &\,\l(\g_{\ell-1} - \g_\ell)(1-F(J^{-1}(\g_{\ell-1}))) \\
   &\,+ \int_{J^{-1}(\g_{\ell})}^{J^{-1}(\g_{\ell-1})} \l(J(v)-\g_\ell) \ddd F(v) - \mu(\g_\ell - \g_{\ell+1}) - d.
\end{align*}
$A(\g_1, \g_2)$ is increasing in $\g_2$. 
$B(\g_{\ell-1}, \g_\ell, \g_{\ell+1})$ is increasing in $\g_{\ell-1}$ and $\g_{\ell+1}$, and is decreasing in $\g_\ell$. Because $J(v)$ is continuous, and because the buyer thresholds $\hat v_k$ vary continuously as one varies $\hat v_{-1}$, $A$ and $B$ are also continuous. Moreover, $A(1,\g_2)<0$ for all $\g_2$, and $B(\g_{\ell-1},\g_\ell,\g_{\ell+1})<0$ whenever $\g_{\ell-1} = \g_\ell$.

If $A(0,0) \leq 0$, then we set $L^*=0$, and we are done. Otherwise, since $A(0,0)>0$ and $A(1,1)<0$, there  exists $\gh^1 \in (0,1)$ such that $A(\gh^1, \gh^1)=0.$  
Without loss, assume that $\gh^1$ is the smallest such value. In other words, $A(\g,\g)>0$ for all $\g\in[0,\gh^1)$. Next, $A(\gh^1, \g)<0$ for all $\g<\gh^1$.  Hence, there exists $\g^0<\gh^1$ such that $A(\g^0,0)=0$.  Without loss, take $\g^0$ to be the largest such value, namely:  $A(\g,0)<0$ for all $\g\in (\g^0, \gh^1]$. If $B(\g^0,0,0) \leq 0$, then we choose $\g_1 = \g^0$ and $L^*=1$, and we are done.\\

Next, suppose $B(\g^0, 0,0)>0 $.
For any $\g\in [\g^0,\gh^1]$, since $A(\g,0)\leq 0$ and $A(\g,\g) \geq 0$, there exists $\nu_2(\g)\in [0,\gh^1]$ such that 
\begin{align}
    A(\g,\nu_2(\g))=0. \notag
\end{align}
Observe that $\nu_2(\gh^1)=\gh^1$ and 
$\nu_2(\g^0)=0$.  Further, since $A$ is continuous and is strictly increasing in its second argument, $\nu_2(\cdot)$ is continuous.  

Then, 
$B(\g^0,\nu_2(\g^0),0)>0$ and since 
$B(\gh^1,\nu_2(\gh^1), 0)<0$, there exists 
$\g^1\in [\g^0, \gh^1]$ such that 
 \begin{align} \label{eq:B}
    B(\g^1,\nu_2(\g^1), 0)=0.
\end{align}
Without loss, take $\g^1$ to be the largest such value. If $B(\nu_2(\g^1),0,0) \leq 0$, then we may choose $\g_1 = \g^1$, $\g_2 = \nu_2(\g^1)$, and $L^*=2$, and we are done.\footnote{Note that $\g_2>0$ because $B(\g^1,0,0) \geq B(\g^0,0,0) > 0$.} \\

At this point, we may proceed directly to the general induction argument, but for expositional purposes, we first work through one additional step.
Suppose that 
$$B(\nu_2(\g^1),0, 0)>0. $$

Since $B$ is increasing in its last argument, \eqref{eq:B} implies that $B(\g^1,\nu_2(\g^1),\nu_2(\g^1))>0$. Also, since $\nu_2(\gh^1) = \gh^1$, it must be that
$B(\gh^1,\nu_2(\gh^1),\nu_2(\gh^1))<0$. Thus there exists $\gh^2\in (\g^1 , \gh^1)$ such that 
$$B(\gh^2,\nu_2(\gh^2),\nu_2(\gh^2))=0.$$  Without loss, let $\gh^2$ be the smallest such value.

Fix any $\g\in [\g^1, \gh^2]$.  Then, 
$$B(\g,\nu_2(\g),\nu_2(\g))\ge 0.$$
This follows from the definition of $\gh^2$.
Note also that 
$$B(\g,\nu_2(\g),0)\le 0.$$
This follows from the definition of $\g^1$.

Hence, there exists $\nu_3(\g)\in [0,\nu_2(\g)]$ such that 
 \begin{align}
    B(\g,\nu_2(\g), \nu_3(\g))=0. \notag
\end{align}
Because $B$ is strictly increasing in its third argument, it must be that 
$\nu_3(\g^1)=0$ and $\nu_3(\gh^2)=\nu_2(\gh^2)$.
Moreover, because $B$ is continuous, and is strictly increasing in its third argument, $\nu_3$ is continuous.

Recall our hypothesis:   
$$B(\nu_2(\g^1),\nu_3(\g^1),0)=B(\nu_2(\g^1),0, 0)>0. $$
Since  $\nu_3(\gh^2)=\nu_2(\gh^2)$, we have 
$$B(\nu_2(\gh^2),\nu_3(\gh^2),0)<0.$$
Hence, there exists $\g^2\in [\g^1,\gh^2]$ such that 
 \begin{align} 
    B(\nu_2(\g^2),\nu_3(\g^2),0)=0.\notag
\end{align}
Without loss, let $\g^2$ be the largest such value. If $B(\nu_3(\g^2),0,0) \leq 0$, then we choose $\g_1 = \g^2$, $\g_2 = \nu_2(\g^2)$, and $\g_3 = \nu_3(\g^2)$, and choose $L^*$ to be the largest $\ell\in \{1,2,3\}$ such that $\g_\ell >0$, and we are done.

We now make the general induction argument.

\begin{defn}[Induction hypothesis at $k \geq 2$]
There exist $\g^{k-3}, \g^{k-2}, \g^{k-1}, \gh^{k-1}, \gh^{k-2} \in [0,1]$ with $\g^{k-3} \leq \g^{k-2} \leq \g^{k-1} \leq \gh^{k-1} \leq \gh^{k-2}$, a continuous function $\nu_{k-1}:[\g^{k-3},\gh^{k-2}] \to [0,\gh^{k-2}]$, and a continuous function $\nu_k:[\g^{k-2},\gh^{k-1}] \to [0,\gh^{k-1}]$ that satisfy the following conditions.\footnote{When $k=2$, define $\g^{-1}:=1$ and $\gh^0:=1$, and define $\nu_1$ to be the identity function on $[0,1]$.}
\begin{enumerate}%
    \item $B(\nu_{k-1}(\g^{k-1}),\nu_k(\g^{k-1}),0) = 0$.
    \item $B(\nu_{k-1}(\g),\nu_k(\g),0) < 0$ for all $\g \in (\g^{k-1},\gh^{k-1}]$.
    \item $\nu_k(\g^{k-2})=0$ and $\nu_k(\gh^{k-1}) = \nu_{k-1}(\gh^{k-1})$.
    \item $\nu_k(\g) \leq \nu_{k-1} (\g)$ for all $\g \in [\g^{k-2},\gh^{k-1}]$.
\end{enumerate}
\end{defn}
We showed above that either a solution exists with $L^* \leq 2$, or the induction hypothesis holds at $k=2$. We wish to show that, for $k\geq 2$, if
$$B(\nu_k(\g^{k-1}),0, 0)>0, $$
then the induction hypothesis at $k$ implies the induction hypothesis at $k+1$.
Since $B$ is strictly increasing in its last argument, induction hypothesis (a) implies that 
$$B(\nu_{k-1}(\g^{k-1}),\nu_k(\g^{k-1}),\nu_k(\g^{k-1}))>0.$$ 
Also, since $\nu_k(\gh^{k-1}) = \nu_{k-1}(\gh^{k-1})$ by hypothesis (c), it must be that
$$B(\nu_{k-1}(\gh^{k-1}),\nu_k(\gh^{k-1}),\nu_k(\gh^{k-1}))<0.$$ 
Thus there exists $\gh^k\in (\g^{k-1} , \gh^{k-1})$ such that 
\begin{align}\label{eq:ghk}
    B(\nu_{k-1}(\gh^k),\nu_k(\gh^k),\nu_k(\gh^k))=0.
\end{align}
Without loss, let $\gh^k$ be the smallest such value.

Fix any $\g\in [\g^{k-1}, \gh^k]$.  Then, by our choice of $\gh^k$, it must be that
$$B(\nu_{k-1}(\g),\nu_k(\g),\nu_k(\g))\ge 0.$$
Moreover, by induction hypothesis (b), we have
$$B(\nu_{k-1}(\g),\nu_k(\g),0)\le 0.$$
Hence, there exists $\nu_{k+1}(\g)\in [0,\nu_k(\g)]$ such that 
 \begin{align*} 
    B(\nu_{k-1}(\g),\nu_k(\g), \nu_{k+1}(\g))=0.
\end{align*}
Thus hypothesis (d) holds for $k+1$.
Because $B$ is strictly increasing in its third argument, it must be that 
$\nu_{k+1}(\g^{k-1})=0$ (by induction hypothesis (a)) and $\nu_{k+1}(\gh^k)=\nu_k(\gh^k)$ (by \eqref{eq:ghk}). Thus hypothesis (c) is satisfied for $k+1$.
Moreover, because $B$ is continuous, and is strictly increasing in its third argument, $\nu_{k+1}$ is continuous.

Recall that we assumed   
$$B(\nu_k(\g^{k-1}),\nu_{k+1}(\g^{k-1}),0)=B(\nu_k(\g^{k-1}),0, 0)>0. $$
Since  $\nu_{k+1}(\gh^k)=\nu_k(\gh^k)$, we have 
$$B(\nu_k(\gh^k),\nu_{k+1}(\gh^k),0)<0.$$
Hence, there exists $\g^k\in [\g^{k-1},\gh^k]$ such that 
\begin{align*} %
    B(\nu_k(\g^k),\nu_{k+1}(\g^k),0)=0.
\end{align*}
Without loss, let $\g^k$ be the largest such value. This satisfies hypotheses (a) and (b) for $k+1$.

Finally, for large enough $k$, we must have $B(\nu_k(\g^{k-1}),0,0)\leq 0$.
This is because the sum
\begin{align*}
    &\, A(\g_1, \g_2) + \sum_{\ell=2}^{\hat{L}} B(\g_{\ell-1}, \g_\ell, \g_{\ell+1}) \\
    = &\, \d(\g_1) + \int_{J^{-1}(\g_{\hat L})}^1 \l(J(v) - \g_{\hat L}) dF(v) - \mu(\g_1 - \g_{\hat L +1}) - \hat L d
\end{align*}
eventually becomes negative when $\hat L$ is large enough. When $B(\nu_k(\g^{k-1}),0,0)\leq 0$, there exists a solution with $L^* \leq k$.

\subsection{Finiteness of $K$ and $L$} \label[onlineapp]{subsec:K=L=infty}

We prove that it is without loss of optimality to restrict attention to policies with bounded queue lengths.

\begin{prop}
    There exists an optimal policy such that, at any point in time, at most $K \in \N$ buyers are in the queue, and at most $L \in \N$ goods are stored.
\end{prop}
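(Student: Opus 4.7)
The plan is to deduce the claim directly from Theorems $\mathbf{1'}$ and $\mathbf{2'}$. Those results already produce an optimal Markov policy $M^*$ whose state trajectory, by construction, never accommodates more than $K^*$ buyers in the queue or more than $L^*$ goods in inventory: any arriving buyer whose admission would push the queue length above $K^*$ would need value strictly above $\hat v_{K^*+1}=1$, which occurs with probability zero, and any good arriving when the inventory holds $L^*$ units is discarded by $\widetilde\phi^*$. Since the value of $[\widetilde{\mathcal P}]$ upper-bounds the objective of $[\mathcal P_0]$ and $M^*$ attains this value, setting $K := K^*$ and $L := L^*$ gives the desired optimal policy with bounded queue and inventory.

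The only thing left is to verify that $K^*<\infty$ and $L^*<\infty$, i.e.\ that the two iterative constructions in Appendix~A actually terminate. For $K^*$, the defining equation \eqref{eq17} at iteration $k$ forces
\[
c \;=\; \mu\int_{\hat v_{k-1}}^{\hat v_k}\frac{\sigma(v)^{k-1}}{1+\sigma(v)+\cdots+\sigma(v)^{k-1}}\,J'(v)\,dv \;\le\; \mu\int_{\hat v_{k-1}}^{\hat v_k} J'(v)\,dv \;=\; \mu\bigl[J(\hat v_k)-J(\hat v_{k-1})\bigr],
\]
because the ratio appearing in the integrand lies in $[0,1]$. Telescoping over the first $K^*$ iterations gives
\[
K^*\,c \;\le\; \mu\bigl[J(\hat v_{K^*})-J(\hat v_0)\bigr]\;\le\; \mu\bigl[1-J(0)\bigr],
\]
so $K^*\le \mu[1-J(0)]/c<\infty$. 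For $L^*$, the algorithm in Section~A.5 selects thresholds that make $A(\gamma_1,\gamma_2)=0$ and $B(\gamma_{\ell-1},\gamma_\ell,\gamma_{\ell+1})=0$ for $\ell=2,\dots,L^*$. Summing these equalities yields
\[
\delta(\gamma_1) + \int_{J^{-1}(\gamma_{L^*})}^1 \lambda\bigl(J(v)-\gamma_{L^*}\bigr)\,dF(v) - \mu(\gamma_1-\gamma_{L^*+1}) - L^* d \;=\;0,
\]
whose left-hand side is bounded above by a constant independent of $L^*$ minus $L^*d$; hence $L^*\le (\text{const})/d<\infty$.

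Consequently, choosing any $K\ge K^*$ and $L\ge L^*$ preserves optimality, and the existence claim holds. The potentially delicate step is the finiteness of $L^*$, since the construction of the negative thresholds goes through a nested intermediate-value argument rather than an explicit monotone recursion; but the accounting identity above reduces this question to a one-line comparison between a bounded surplus term and the linearly growing cumulative storage cost $L^*d$, so no additional machinery is required.
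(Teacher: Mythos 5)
Your argument is circular, so there is a genuine gap. The proposition is not a corollary of Theorems $\mathbf{1'}$ and $\mathbf{2'}$; it is a lemma needed to complete their proofs. The Lagrangian/weak-duality verification in Appendix A is carried out only for the truncated program $[\widetilde{\mathcal P}_{K,L}]$ in which queue length and inventory are capped by exogenous finite $K$ and $L$: the dual variables are defined only for $k\le K$ and $\ell\le L$, and the boundary conditions $\beta_{K+1}=\gamma_{L+1}=0$ together with the leftover term $\int_0^1\eta_K(v)\,dv-cK$ all depend on the truncation. Hence, before this proposition, what is established is only that $(P^*,Q^*,y^*,z^*,X^*)$ is optimal among steady states whose queue never exceeds $K$ buyers and whose inventory never exceeds $L$ goods. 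The upper bound on $[\mathcal P_0]$, however, is the value of the \emph{unbounded} relaxed program, because a feasible Markov policy may hold arbitrarily many buyers or goods. To conclude that $M^*$ attains that upper bound --- the very fact you invoke --- one must first show that allowing unbounded queues and inventories cannot raise the value, i.e.\ $\mathrm{val}(\widetilde{\mathcal P}_{\infty,\infty})\le\lim_{K,L\to\infty}\mathrm{val}(\widetilde{\mathcal P}_{K,L})$. That inequality is exactly the content of the proposition, so it cannot be derived by citing the theorems.

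The actual work, which your proposal omits entirely, is an approximation argument: given any feasible solution $(\bar P,\bar Q,\bar y,\bar z,\bar X)$ of the unbounded program, the paper constructs truncated solutions --- conditioning $\bar P_k$ on the event that the $(K+1)$-th highest value is at most $v$, renormalizing $\bar Q_\ell$ by $1-\bar Q_{K+1}$, and deflating $\bar X$ accordingly --- then verifies that these satisfy $(B_k)$, $(B_{-\ell})$ and $(\widetilde{RF})$ of the bounded program, and shows via dominated convergence (using that the waiting- and storage-cost series of the original solution must be finite, else its value would be $-\infty$) that their objective values converge to that of the original solution. Your bounds establishing $K^*<\infty$ and $L^*<\infty$ are fine in spirit and essentially already contained in the appendix's construction (the integrand ratio lies in $[0,1]$; the cumulative storage cost $L^*d$ eventually outweighs the bounded surplus), but they address a side issue: finiteness of the thresholds produced by the truncated analysis says nothing about whether a policy using unboundedly long queues or inventories could do strictly better, which is precisely what the proposition must rule out.
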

\begin{proof}
We will argue that, even if we allow the queue lengths to be unbounded ($K=L=\infty$), it will be optimal for the seller to keep the queue lengths finite. It is enough to prove this claim for the relaxed program.

Take $K$ to be a large enough integer such that $\max\{K^*,L^*\} < K$, and the following inequality holds: $$(K+1)d > \int_0^1 \max\{\l J(v),0\} f(v) dv.$$
Let $\ms L(K,L)$ denote the Lagrangian of the relaxed program with bounds $K$ and $L$. In defining $\ms L(\infty,\infty)$, we impose the following additional constraints:

\begin{itemize}
    \item $p_k(0)=P_{k+1}(0) - P_k(0)\ge 0$ for each $k\ge K+1$.
    \item $\sum_{k=1}^{\infty} p_k(0)=1$
    \item $Q_\ell$ is non-increasing in $\ell$ for $\ell \geq K+1$.
\end{itemize}
We may freely impose these constraints on the relaxed program $\ms L(\infty,\infty)$, since any feasible solution to the original program $[\ms P_0]$ must satisfy them.\footnote{Recall $p_k(0)$ is the probability that there are exactly $k$ buyers in the queue. This probably must be nonnegative and add up to one, to be feasible under $[\ms P_0]$.}

With these (free) constraints, we can write:
\begin{align}\label{eq:H3.1}
    \sum_{k=K+1}^\infty k (P_{k+1}(0) - P_k(0)) & =\sum_{k=K+1}^\infty k p_k(0)\cr
    &=(K+1)\sum_{k=K+1}^\infty p_k(0) +\sum_{k=K+1}^\infty (k-(K+1)) p_k(0)  \cr
    &=(K+1)(1-P_{K+1}(0))+ \sum_{k=K+2}^\infty (k-(K+1)) p_k(0)\cr
   & = (K+1)(1-P_{K+1}(0))+\sum_{k=K+2}^\infty \sum_{j=K+2}^\infty \mathbb{I}_{\{j\le k\}} p_k(0)  \cr
     &= (K+1) (1- P_{K+1}(0)) +\sum_{j=K+2}^\infty \sum_{k=K+2}^\infty  \mathbb{I}_{\{j\le k\}} p_k(0)\cr
         &= (K+1) (1- P_{K+1}(0)) +\sum_{j=K+2}^\infty \sum_{k=j}^\infty p_k(0)\cr
&= (K+1) (1- P_{K+1}(0)) + \sum_{j=K+2}^\infty (1-P_j(0)),
\end{align}
where the exchange of summations (fifth equality) follows from Tonelli's Theorem, noting that each summand $p_k(0)$ is nonnegative.

By factoring out $\ms L(K,K)$ from $\ms L(\infty,\infty)$, and using \eqref{eq:H3.1}, we obtain\footnote{For any feasible choice of primal variables, the infinite sums are well-defined in the space of extended reals, since the summands are nonnegative.}
\begin{align*}
     \ms L (\infty,\infty)  = &\, \ms L(K,K) - \int_0^1 \eta_K(v) dv + \int_0^1 \eta_K(v)P_{K+1}(v) dv  \\
    &\, - c \sum_{k=K+1}^\infty [1-P_k(0)]  - d \sum_{\ell=K+1}^\infty \ell(Q_\ell - Q_{\ell+1}) \\
    &\, +  \int_0^1 \alpha (v) \left(\mu
    \sum_{k=K+1}^{\infty} Y_k(v) + \l \sum_{\ell=K+1}^\infty \int_v^1 Z_\ell(s)ds \right) dv \\
    &\, + \sum_{k=K+1}^\infty \int_0^1 \beta_k(v) \bigg( \l (P_k(v) - P_{k-1}(v))[1-F(v)] - \mu Y_k(v) \bigg) dv \\
    &\, + \sum_{\ell=K+1}^\infty  \gamma_\ell \bigg( - \l \int_0^1 Z_\ell(s)ds + \mu (Q_{\ell-1} - Q_\ell) \bigg) \\
    &\, + \sum_{k=K+1}^\infty \int_0^1 \eta_k(v) \left( P_{k+1}(v) - P_k(v) - Y_k(v) \right) dv \\
    &\, + \sum_{\ell=K+1}^\infty \int_0^1 \kappa_\ell(v)\left( (Q_\ell - Q_{\ell+1})f(v) - Z_\ell(v) \right) dv.
\end{align*}
We set the multipliers as follows:
\begin{align*}
\beta_k(v) &= 0 && \forall k \geq K+1,\\
\eta_k(v) &= \mu \alpha(v) && \forall k \geq K+1,\\
\gamma_\ell &= 0 && \forall \ell \geq K+1,\\
\kappa_\ell(v) &= \lambda \int_0^v \alpha(s)\,ds && \forall \ell \geq K+1.
\end{align*}
We choose the following primal variables:
\begin{align*}
    P_k(v) &= 1 && \forall k \geq K+1, \, \forall v \in [0,1],\\
    Y_k(v) &=0  && \forall k \geq K+1, \, \forall v \in [0,1],\\
    Z_\ell(v) &= 0 && \forall \ell \geq K+1, \, \forall v \in [0,1],\\
    Q_\ell &= 0 && \forall \ell \geq K+1.
\end{align*}
Given our choice of $K$, one can verify that the conditions for weak duality are satisfied---the Lagrangian is maximized, the primals are feasible, and complementary slackness holds. Under this optimal solution to $\ms L(\infty,\infty)$, both the buyer queue and the goods queue have lengths at most $K$.
\end{proof}

\section{Why Opacity and Passivity Are Necessary for Dominance}\label[onlineapp]{sec:appendix:reason-for-passive-buyers}

We provide examples showing that the restrictions imposed on the Cutoff-Price Mechanism are necessary for the mechanism to be strategyproof.

\subsection{Hiding Buyer Identities and Past Bids}

Suppose the queue starts empty, and the order of buyer and item arrivals is as shown in \cref{figure:passive-buyers-example}. Suppose the thresholds are $\hat v_1 = 1$, $\hat v_2 = 2$, and $\hat v_3 = 4$. 

\begin{figure}[htb]
\centering
\begin{tikzpicture}[>=stealth, thick, xscale=1.1]
    \draw[->] (-0.5,3) -- (9,3) node[right] {};
    \foreach \x/\label in {0/$v_A=3$, 1.5/$v_B=6$, 3/item, 4.5/$v_C=5$, 6/$v_D=5$, 7.5/item} {
        \draw (\x, 3.1) -- (\x, 2.9) node[below] {\small \label};
    }
\end{tikzpicture}
\caption{}
\label{figure:passive-buyers-example}
\end{figure}

Suppose that, unlike in our definition of the Cutoff-Price Mechanism, buyers in the queue observe each other's identities. Suppose $B$ bids truthfully, whereas $C$ and $D$ both adopt the strategy under which they do not enter the queue (or, equivalently, drop out immediately in their first survival auction) if they see $B$ in the queue, and otherwise behave truthfully. Against these strategies, if $A$ behaves truthfully, $B$ wins the first item, and $A$ drops out when buyer $D$ arrives. A profitable deviation for $A$ is to bid 7 in the first assignment auction and win. Then, $C$ and $D$ do not enter, so $B$ wins the second item. $A$'s cutoff price is 2, since taking the buyers' realized behavior as given, $A$ would have won an item as long as he survived until the second assignment auction.\footnote{Of course, if $A$ had actually lost the first assignment auction, then $C$ and $D$ would have behaved truthfully, so $A$ would not have won the second item. However, the mechanism does not observe this ``true'' counterfactual. From the mechanism's point of view, it is as if $v_A=7$, $v_B=6$, $v_C=v_D=0$, and everyone bid truthfully, so the cutoff price must equal 2.}

Likewise, suppose the buyers observe past bids. Then, if $B$ bids truthfully, while $C$ and $D$'s strategy is to not enter if the winning bid of the first assignment auction was 7 and otherwise behave truthfully, then $A$ is again better off bidding 7 than being truthful.

Under the rules of CPM, buyers' identities or past bids are not disclosed. Therefore, $C$ and $D$'s strategies cannot condition on such information.

\subsection{Buyers Becoming Passive, and Hiding Queue Length}

Suppose the queue starts empty, and the order of buyer and item arrivals is as shown in \cref{figure:passive-buyers-example}. Suppose the thresholds are $\hat v_1 = 1$, $\hat v_2 = 2$, $\hat v_3 = 3$, and $\hat v_4 = 9$.
\medskip
\begin{figure}[htb]
\centering
\begin{tikzpicture}[>=stealth, thick, xscale=1.1]
    \draw[->] (-0.5,3) -- (12,3) node[right] {};
    \foreach \x/\label in {0/$v_A=4$, 1.5/$v_B=6$, 3/item, 4.5/$v_C=5$, 6/item, 7.5/$v_D=10$, 9/$v_E=10$, 10.5/$v_F=10$} {
        \draw (\x, 3.1) -- (\x, 2.9) node[below] {\small \label};
    }
\end{tikzpicture}
\caption{}
\label{figure:passive-buyers-example}
\end{figure}

Suppose that, unlike in our definition of CPM, buyers remain active even after playing an assignment auction. Suppose $B$'s strategy is to behave truthfully until the end of the first assignment auction, and if he loses this auction, he drops out in the next survival auction at the clock price of $\hat v_1=1$.  Moreover, suppose $C$'s strategy is to behave truthfully, except that if his first survival auction ends with clock price $\hat v_1$, he subsequently behaves as if he has value $\hat v_1$. Suppose $D$, $E$, and $F$ behave truthfully.
Against these strategies, if $A$ behaves truthfully, he drops out when $F$ arrives. One profitable deviation for $A$ is to bid 7 in the first assignment auction and win. This leads $B$ to drop out when $C$ arrives, which in turn leads $C$ to bid $\hat v_1=1$ in the second assignment auction. %
$A$'s cutoff price is $\hat v_2 =2$, since taking the buyers' realized behavior as given, as long as $A$ bids no less than 2 for the first item, he will survive when $C$ arrives and then win the second item against $C$'s bid of $\hat v_1=1$.

This example shows that, even if information about buyer identities and past bids are hidden, dominance fails if buyers do not become passive, since buyers necessarily observe results of auctions that they play and may condition their future behavior on these results. Under the rules of CPM, buyers become passive once they participate in an assignment auction. This guarantees dominance by limiting buyers' ability to condition their play on past auction results. For instance, in the example above, if $B$ bids $b\ge \hat v_2=2$  in the first assignment auction, $B$'s proxy agent will not drop out in a subsequent survival auction at the clock price of $\hat v_1=1$.

Of course, we assume that buyers are free to leave the queue at any time, so even under CPM, $B$ is technically able to drop out after becoming passive. This is why CPM requires that the proxy agent for $B$ continue to play according to the fixed bid that $B$ submitted for the first item, even after the actual buyer drops out. 

Finally, the above example can be altered slightly to show why CPM must hide the length of the  queue. Suppose that buyers become passive as in CPM, but queue lengths are disclosed to the buyers. Suppose $B$'s strategy is to behave truthfully until the end of the first assignment auction, and if he loses this auction, he drops out immediately. Note that this is feasible for $B$, even under CPM. Suppose $C$'s strategy is to behave truthfully, except that if the true queue length (excluding the number of proxy agents) is 0 when he arrives, he behaves as if he has value $\hat v_1$. Then, $A$ can once again profitably deviate from truth-telling by bidding 7 for the first item.

\end{document}